\documentclass[a4paper,USenglish]{llncs}
\usepackage{microtype}
\usepackage{subcaption}
\usepackage{lineno}
\usepackage{thm-restate}
\declaretheorem[name=Observation]{observation}

\usepackage{amsmath}
\usepackage{amsfonts}

\usepackage{amssymb}

\usepackage{hyperref}
\usepackage[nameinlink,capitalise]{cleveref}
\usepackage[utf8]{inputenc}

\usepackage{graphicx}
\usepackage{xspace}
\setcounter{MaxMatrixCols}{148}
\usepackage{paralist}

\usepackage{complexity}

\usepackage{todonotes}

\Crefname{observation}{Observation}{Observations}
\Crefname{algorithm}{Algorithm}{Algorithms}
\Crefname{section}{Section}{Sections}
\Crefname{appendix}{Appendix}{Appendices}
\Crefname{observation}{Observation}{Observations}
\Crefname{lemma}{Lemma}{Lemmas}
\Crefname{claim}{Claim}{Claims}
\Crefname{figure}{Fig.}{Figs.}
\Crefname{figure}{Fig.}{Figs.}
\Crefname{enumi}{Property}{Properties}



\usepackage{paralist}
\captionsetup[subfigure]{format=hang}
\usepackage{color}
\definecolor{realblue}{rgb}{0,0,1}
\definecolor{darkerblue}{rgb}{0.094,0.455,0.804}
\definecolor{darkblue}{rgb}{0.063,0.306,0.545}
\definecolor{red}{rgb}{0.627,0.117,0.156}
\definecolor{green}{rgb}{0,0.588,0.509}
\definecolor{orange}{rgb}{0.903,0.739,0.382}
\definecolor{realred}{rgb}{1,0,0}

\newcommand{\red}[1]{{{\textcolor{red}{#1}\xspace}}}

\newcommand{\ver}{arxiv}
\newcommand{\arxapp}[2]{\ifthenelse{\equal{\ver}{conf}}{#2}{#1}}

\let\doendproof\endproof
\renewcommand\endproof{~\hfill$\qed$\doendproof}

\title{\texorpdfstring{$2$}{2}-Layer \texorpdfstring{$k$}{k}-Planar Graphs}
\subtitle{Density, Crossing~Lemma, Relationships, and Pathwidth}

\author{Patrizio Angelini \inst{1}\orcidID{0000-0002-7602-1524}, Giordano Da Lozzo\inst{2}\orcidID{0000-0003-2396-5174}, Henry F\"orster\inst{3}\orcidID{0000-0002-1441-4189}, and Thomas Schneck\inst{3}\orcidID{0000-0003-4061-8844}}
\authorrunning{Angelini {\em et al.}}
\institute{
    $^1$ John Cabot University, Rome, Italy
	\href{mailto:pangelini@johncabot.edu}{pangelini@johncabot.edu}\\
	$^2$ Roma Tre University, Rome, Italy
	\href{mailto:giordano.dalozzo@uniroma3.it}{giordano.dalozzo@uniroma3.it}\\
	$^3$ University of T\"ubingen, T\"ubingen, Germany
	\href{mailto:foersth@informatik.uni-tuebingen.de,schneck@informatik.uni-tuebingen.de}{\{foersth,schneck\}@informatik.uni-tuebingen.de}
}

\begin{document}

\maketitle              

\begin{abstract}
The $2$-layer drawing model is a well-established paradigm to visualize bipartite graphs. Several beyond-planar graph classes have been studied under this model. Surprisingly, however, the fundamental class of $k$-planar graphs has been considered only for $k=1$ in this context. We provide several contributions that address
this gap in the literature. First, we show tight density bounds for the classes of $2$-layer $k$-planar graphs with $k\in\{2,3,4,5\}$. Based on these results, we provide a Crossing Lemma for $2$-layer $k$-planar graphs, which then implies a general density bound for $2$-layer $k$-planar graphs. We prove this bound to be almost optimal with a corresponding lower bound construction. Finally, we study relationships between $k$-planarity and $h$-quasiplanarity in the $2$-layer model and show that $2$-layer $k$-planar graphs have pathwidth~at~most~$k+1$.\keywords{$2$-layer graph drawing \and $k$-planar graphs \and density \and Crossing Lemma \and pathwidth \and quasiplanar graphs}
\end{abstract}

%

\section{Introduction}\label{sec:intro}

Beyond-planarity is an active research area that studies graphs admitting drawings that avoid certain forbidden crossing configurations. Research on this subject has attracted considerable interest due to its theoretical appeal and due to the need of visualizing real-world non-planar graphs.
A great deal of attention has been captured by 
two important graph families. The \emph{$k$-planar} graphs, with $k \geq 1$, for which the forbidden configuration is an edge crossing more than~$k$ other edges, and the \emph{$h$-quasiplanar} graphs, with~$h \geq 3$, for which the forbidden configuration is a set of $h$ pairwise crossing edges. The study of these two families finds its origins in the 1960's~\cite{avital-66,MR0187232}, when the question arose about the \emph{density} of these graphs, that is, the maximum number of edges of graphs in these families. 


Many works have addressed this extremal graph theoretical question  and established upper bounds for $k$-planar and $h$-quasiplanar graphs for various values of $k$ and $h$. For small $k$ and $h$, these upper bounds have been proven to be tight by lower bound constructions achieving the corresponding density.
The most significant results include tight density bounds for $1$-planar graphs~\cite{PachT97} ($4n-8$ edges), $2$-planar graphs~\cite{PachT97} ($5n-10$ edges), $3$-planar graphs~\cite{DBLP:conf/gd/Bekos0R16,PachRTT06} ($5.5n-20$), and $4$-planar graphs~\cite{DBLP:journals/corr/Ackerman15} ($6n-12$).
For general $k$, the currently best upper bound is $3.81\sqrt{k}\,n$, which can be derived from the result of Ackerman~\cite{DBLP:journals/corr/Ackerman15} on $4$-planar graphs and from the renowned Crossing Lemma~\cite{AiZi98}. 
For $h$-quasiplanar graphs, despite considerable research efforts, a density upper bound that is linear in the number of vertices 
exists only for $h \leq 4$~\cite{Ackerman09,AckermanT07,AgarwalAPPS97,DBLP:conf/jcdcg/PachRT02}. In particular, a tight upper bound exists for simple $3$-quasiplanar (for short, \emph{quasiplanar}) graphs. Here, \emph{simple} means that any two edges meet in at most one point, which is either a common endvertex or an internal point. For general $h$, only super-linear upper bounds are known~\cite{DBLP:journals/jct/CapoyleasP92,DBLP:conf/compgeom/FoxP08,DBLP:journals/siamdm/FoxPS13,DBLP:journals/algorithmica/PachSS96,DBLP:journals/comgeo/SukW15,DBLP:journals/dcg/Valtr98} while a linear bound has been conjectured~\cite{DBLP:journals/algorithmica/PachSS96}.

These two families have also been studied from other perspectives. A notable relationship is that every simple $k$-planar graph is also simple $(k+1)$-quasiplanar~\cite{DBLP:journals/jctb/AngeliniBBLBDHL20}, for every $k \geq 2$. It is also known that every \emph{optimal} $3$-planar~graph, namely one with the maximum possible number of edges ($5.5n-20$), is also $3$-quasiplanar. This latter result follows from a characterization of the optimal $3$-planar graphs~\cite{DBLP:conf/compgeom/Bekos0R17}, which also exists for the optimal $1$- and $2$-planar graphs~\cite{DBLP:conf/compgeom/Bekos0R17,PachT97}. Note that these characterizations do not directly yield recognition algorithms; in fact, recognizing (non-optimal) $k$-planar graphs is NP-complete for every $k \geq 1$~\cite{DBLP:journals/jgt/KorzhikM13}. The complexity of recognizing $h$-quasiplanar graphs is still open for any $h \geq 3$.

Aside these two major families, we mention the \emph{fan-planar} graphs, in which no edge is crossed by two independent edges or by two adjacent edges from different directions~\cite{BekosCGHK14,DBLP:journals/jgaa/BinucciCDGKKMT17,BinucciGDMPST15,KaufmannU14}, and the \emph{RAC graphs}, in which the edges are poly-lines with few bends and crossings only happen at right angles~\cite{DBLP:journals/tcs/AngeliniBFK20,DidimoEL11,Didimo2013,EadesL13}. These and other graph classes have been also investigated with respect to their density, recognition, and relationship with other classes; see also the recent survey~\cite{DBLP:journals/csur/DidimoLM19}. 

Beyond-planar classes have also been studied under additional constraints on the placement of the vertices. In the \emph{outer model}~\cite{AuerBBGHNR16,BekosCGHK14,DBLP:conf/gd/ChaplickKLLW17,DehkordiEHN16,DBLP:journals/ipl/Didimo13,HongEKLSS15,HongN15} every vertex is incident to the unbounded region of the drawing, while in the \emph{$2$-layer model}~\cite{DBLP:journals/jgaa/BinucciCDGKKMT17,BinucciGDMPST15,GiacomoDEL14,DBLP:journals/ipl/Didimo13} the vertices lie on two horizontal lines and every edge is a $y$-monotone curve.
The latter model requires the graph to be bipartite, and the constraints on the placement of the vertices emphasize the bipartite structure. Beyond-planar bipartite graphs have also been considered in the general drawing model, without any additional restriction~\cite{DBLP:conf/isaac/AngeliniB0PU18}. We remark that the $2$-layer model lies at the core of the Sugiyama framework for general layered drawings~\cite{DBLP:series/sseke/Sugiyama02,DBLP:journals/tsmc/SugiyamaTT81}.

In~\cite{GiacomoDEL14}, it was shown that $2$-layer RAC graphs have at most $\frac{3}{2}n-2$ edges and that this bound is tight, exploiting a characterization which also leads to an efficient recognition algorithm. Later, Didimo~\cite{DBLP:journals/ipl/Didimo13} observed that $2$-layer $1$-planar graphs are $2$-layer RAC graphs, and that the optimal graphs in these two classes coincide. Thus, the tight bound of $\frac{3}{2}n-2$ edges extends to $2$-layer $1$-planar graphs. For $h$-quasiplanar graphs, Walczak~\cite{gdInvitedTalk} provided a density upper bound of $(h-1)(n-1)$ edges, following from the fact that \emph{convex bipartite geometric} $h$-quasiplanar graphs can be $(h-1)$-colored so that edges with the same color do not cross. For ($3$-)quasiplanar graphs, the $2n-2$ bound can be improved to $2n-4$ by observing that they are planar bipartite graphs. Since fan-planar graphs are also quasiplanar, this density bound holds for $2$-layer fan-planar graphs, as well. Further, this bound is tight for both classes, since the complete bipartite graph $K_{2,n}$ is $2$-layer fan-planar. Note that $2$-layer fan-planar graphs have been characterized~\cite{DBLP:journals/jgaa/BinucciCDGKKMT17} and can be recognized when the graph is biconnected~\cite{DBLP:journals/jgaa/BinucciCDGKKMT17} or a tree~\cite{DBLP:journals/corr/abs-2002-09597}.
Another property that has been investigated in the $2$-layer model is the pathwidth. Namely, $2$-layer fan-planar graphs have pathwidth $2$~\cite{DBLP:journals/corr/abs-2002-09597}, while $2$-layer graphs with at most $c$ crossings in total have pathwidth $2c+1$~\cite{Dujmovic2008}; note that both results can be extended to general layered graphs.

\paragraph{Our Contribution.} From the above discussion it is evident that, in the wide literature on the $2$-layer model, the study of the central class of $k$-planar graphs is completely missing, except for the special case $k = 1$. In this paper, we make several contributions towards filling this gap. We provide tight density bounds for $2$-layer $k$-planar graphs with $k \in \{2,3,4,5\}$ in \cref{sec:smallK}. Exploiting these bounds, we deduce a Crossing Lemma for $2$-layer graphs in \cref{sec:densityLargeK}. This implies a density upper bound for general values of $k$. We then show a lower bound construction that is within a factor of $1/1.84$ from the upper bound. Finally, in \cref{sec:prop}, we investigate two additional properties. First, we prove that $2$-layer $2$-planar graphs are $2$-layer quasiplanar, as in the case where the vertices are not restricted to two layers~\cite{DBLP:journals/jctb/AngeliniBBLBDHL20}. For larger $k$, we show a stronger relationship, namely, every $2$-layer $k$-planar~graph is $2$-layer $h$-quasiplanar for $h =\left\lceil\frac{2}{3}k+2\right\rceil$. Second, we demonstrate that $2$-layer $k$-planar graphs have pathwidth at most $k+1$, which is the first result of this type, since they may have a linear number of crossings and may not be fan-planar. \arxapp{}{Note that missing proofs and full proofs of proof sketches may be found in~\cite{arxivVersion}.}

\section{Preliminaries}\label{sec:pre}

\paragraph{The $2$-layer model.} 
A \emph{bipartite graph} $G=(U\dot{\cup}V,E)$ is a graph with vertex subsets $U$ and $V$, so that $E \subseteq U  \times V$.
A \emph{topological $2$-layer graph} is a bipartite graph drawn in the plane so that the vertices in $U$ and $V$ are mapped to distinct
points on two horizontal lines $L_u$ and $L_v$, respectively, and the edges are mapped to $y$-monotone Jordan arcs. A topological $2$-layer graph can be assumed to be simple, that is, no two adjacent edges cross each other, and every two independent edges cross each other at most once.

Let $G$ be a topological $2$-layer graph. We denote the vertices in $U$ and in $V$ as $u_1,\ldots,u_p$ and $v_1,\ldots,v_q$, respectively, in the order in which they appear in positive $x$-direction along $L_u$ and $L_v$.  We denote the number of vertices of $G$ by $n=p+q$ and the number of edges in $E$ by $m$. We call $G$ \emph{$k$-planar} if each edge is crossed at most $k$ times, and \emph{$h$-quasiplanar} if there is no set of $h$ pairwise crossing edges. Further, we say that a bipartite graph $G$ is \emph{$2$-layer $k$-planar} (\emph{$h$-quasiplanar}) if there exists a topological $2$-layer $k$-planar (resp.\ $h$-quasiplanar) graph whose underlying abstract graph is isomorphic to $G$.

The \emph{maximum number of edges} of a graph class $\cal C$ is a function $m_\mathcal{C}: \mathbb{N} \rightarrow \mathbb{N}$ such that \begin{inparaenum}[(i)]
\item every $n$-vertex graph in $\cal C$ has at most $m_\mathcal{C}(n)$ edges, and
\item for every $n$, there is an $n$-vertex graph in $\cal C$ with $m_\mathcal{C}(n)$ edges.
\end{inparaenum} 
The \emph{(maximum edge) density} of $\cal C$ is a function $d_\mathcal{C}: \mathbb{N} \rightarrow \mathbb{N}$ such that \begin{inparaenum}[(i)]
\item for every $n$, it holds that $d_\mathcal{C}(n) \geq m_\mathcal{C}(n)$, and
\item there are infinitely many values of $n$ such that $d_\mathcal{C}(n) = m_\mathcal{C}(n)$.
\end{inparaenum} 
We say that an $n$-vertex graph in $\cal C$ with $d_\mathcal{C}(n)$ edges is \emph{optimal}. 

Note that $2$-layer quasiplanar graphs are equivalent to the \emph{convex bipartite geometric quasiplanar graphs}, where vertices lie on a convex shape so that the two partition sets are well-separated~\cite{gdInvitedTalk}. Since these graphs are planar bipartite, as discussed in \cref{sec:intro}, and include $K_{2,n}$, their density can be established using the same argumentation as for convex bipartite geometric quasiplanar~graphs~in~\cite{gdInvitedTalk}:

\begin{theorem}\label{thm:quasiplanarDensity}
An $n$-vertex $2$-layer quasiplanar graph has at most $2n-4$ edges  for $n \geq 3$. Also, there exist infinitely many $2$-layer quasiplanar graphs with $n$ vertices and $2n-4$ edges.
\end{theorem}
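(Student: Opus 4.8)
The plan is to obtain the upper bound by showing that every $2$-layer quasiplanar graph is, as an abstract graph, planar (and bipartite), and to obtain tightness from the family $\{K_{2,n-2}\}$; in effect this re-derives Walczak's argument for convex bipartite geometric quasiplanar graphs~\cite{gdInvitedTalk} in the $2$-layer setting, exploiting the equivalence between the two models stated above.

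For the upper bound, fix a simple topological $2$-layer quasiplanar drawing $\Gamma$ of an $n$-vertex graph $G$, with $U=\{u_1,\dots,u_p\}$ and $V=\{v_1,\dots,v_q\}$ named as in \cref{sec:pre}, and recall that in $\Gamma$ the edges $u_iv_j$ and $u_{i'}v_{j'}$ cross exactly when $(i-i')(j-j')<0$. I would first $2$-color the edges so that monochromatic edges are pairwise non-crossing in $\Gamma$: order the edges so that $u_iv_j$ precedes $u_{i'}v_{j'}$ when $i<i'$, or $i=i'$ and $j<j'$, and give each edge $e$ the color equal to the largest size of a set of pairwise crossing edges whose last element in this order is $e$. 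Quasiplanarity forces this value to lie in $\{1,2\}$, so two colors suffice; moreover, if two edges $e,f$ of the same color crossed, say $e$ before $f$, then either $\{e,f\}$ already witnesses color $\ge 2$ for $f$ (when their common color is $1$), or appending $f$ to a size-$2$ witness for $e$ yields three pairwise crossing edges — both impossible. Hence $E(G)$ splits into two classes $E_1,E_2$, each drawn without crossings in $\Gamma$. Now reinterpret $\Gamma$ as a convex drawing: placing the vertices on a circle in the cyclic order $u_1,\dots,u_p,v_q,\dots,v_1$ induces on the chords $u_iv_j$ exactly the crossing pattern of $\Gamma$, so drawing $E_1$ inside the circle and $E_2$ outside gives a $2$-page book embedding of $G$. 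In particular $G$ is planar, and since it is also simple and bipartite, Euler's formula yields $m\le 2n-4$ for $n\ge 3$.

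For tightness, I would use $K_{2,n-2}$, which is $2$-layer quasiplanar for every $n\ge 3$: placing the two vertices of $U$ on $L_u$ and the remaining $n-2$ vertices on $L_v$ in any order, all edges through a fixed $u_i$ are pairwise non-crossing, so any three edges contain two sharing an endpoint and hence cannot be pairwise crossing; this graph has $2(n-2)=2n-4$ edges, giving infinitely many optimal examples. The only delicate point is the $2$-coloring step, where one must use that edges sharing an endpoint never cross (so that $i=i'$ or $j=j'$ precludes a crossing), which is exactly what makes the maximum pairwise-crossing sets extend monotonically along the chosen order; everything else is Euler's formula together with the standard translation between $2$-layer and convex drawings, so I do not anticipate a real obstacle beyond bookkeeping.
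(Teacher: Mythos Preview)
Your proposal is correct and follows essentially the same route as the paper: the paper also appeals to Walczak's $(h-1)$-coloring of convex bipartite geometric $h$-quasiplanar graphs (which for $h=3$ is exactly your two-coloring), then notes that the resulting $2$-page embedding makes the graph planar bipartite, and takes $K_{2,n-2}$ for tightness. You have simply made the coloring and the book-embedding step explicit, and your verification that same-colored edges cannot cross (via the transitivity of the ``$i$ increases while $j$ decreases'' relation) is sound.
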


\paragraph{Tree and path decomposition.} A \emph{tree decomposition} of a graph $G=(V,E)$ is a tree $T$ on vertices $B_1,\ldots,B_n$ called \emph{bags} such that the following properties hold:
\begin{inparaenum}[(P.1)]
\item\label{prop:treedecomp1} each bag $B_i$ is a subset of $V$,
\item\label{prop:treedecomp2} $V = \bigcup_{i=1}^n B_i$,
\item\label{prop:treedecomp3} for every edge $(u,v) \in E$, there exists a bag $B_i$ such that $u,v \in B_i$, and
\item\label{prop:treedecomp4} for every vertex $v$, the bags containing $v$ induce a connected subtree of $T$.
\end{inparaenum}
If $T$ is a path, we call $T$ a \emph{path decomposition}. The \emph{width} of a tree decomposition $T$ is the maximum cardinality of any of its bags minus one, i.e., $\text{width}(T)=\max_{i \in\{1,\ldots,n\}} (|B_i|-1)$. The \emph{treewidth} of a graph $G$ is the minimum width of any of its tree decompositions, whereas the \emph{pathwidth} of $G$ is the minimum width of any of its path decompositions.


\section{Tight Density Results For Small Values of \texorpdfstring{$k$}{k}}
\label{sec:smallK}

In this section, we establish the density of $2$-layer $k$-planar graphs for small values of $k$. 
We start with a preliminary observation, which follows from the fact that the density of $k$-planar graphs can be upper bounded by a linear function in $n$~\cite{DBLP:journals/corr/Ackerman15,PachT97} and that the density of $2$-layer $1$-planar graphs is lower bounded by $\frac{3}{2}n-2$~\cite{GiacomoDEL14}. This allows us to derive the following\arxapp{ (see \cref{app:smallK} for a proof)}{}:

\begin{restatable}{lemma}{rationalDensity}
\label{lem:rational-density}
	 For  $k \geq 1$, there exist positive rational numbers $a_k \geq \frac{3}{2}$ and $b_k \geq 0$ such that \begin{inparaenum}[(i)]
	\item every $n$-vertex $2$-layer $k$-planar graph has at most $a_kn-b_k$ edges for $n \geq n_k$ with $n_k$ a constant,  and
	\item there is a $2$-layer $k$-planar graph with $n$ vertices and exactly $a_kn-b_k$ edges for some $n>0$.
\end{inparaenum}
\end{restatable}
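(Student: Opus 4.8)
Fix $k\ge 1$. The plan is to argue directly about the function $f(n)$ equal to the maximum number of edges over all $n$-vertex $2$-layer $k$-planar graphs; for each $n$ this maximum is finite and attained, since there are only finitely many abstract graphs on $n$ vertices. Two facts from the introduction are used. First, a $2$-layer $k$-planar graph is in particular a $k$-planar graph, so the density bound for $k$-planar graphs coming from~\cite{DBLP:journals/corr/Ackerman15} together with the Crossing Lemma~\cite{AiZi98} (or, for small $k$, the exact bounds such as $4n-8$ and $5n-10$) yields a threshold $n_0$ and a rational number $c\ge\tfrac32$ with $f(n)\le c\,n$ for all $n\ge n_0$. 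Second, every $1$-planar graph is $k$-planar for $k\ge 1$, hence every $2$-layer $1$-planar graph is $2$-layer $k$-planar, so by~\cite{GiacomoDEL14} we have $f(n)\ge\tfrac32 n-2$ for infinitely many $n$; this second fact forces every pair $(a_k,b_k)$ satisfying~(i)--(ii) to have $a_k\ge\tfrac32$, since otherwise $\tfrac32 n-2$ would eventually exceed $a_kn-b_k$ and contradict~(i) --- this is precisely where the hypothesis $a_k\ge\tfrac32$ comes from.

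Next I would fix any rational $a_k$ with $a_k>c$, set $\delta:=a_k-c>0$ and $n_k:=n_0$. For every $n\ge n_k$ we then have $a_kn-f(n)\ge a_kn-c\,n=\delta n>0$, so the function $n\mapsto a_kn-f(n)$ is strictly positive on $\{n\ge n_k\}$ and tends to $+\infty$; therefore it attains its minimum at some $n^\ast\ge n_k$. I then set $b_k:=a_kn^\ast-f(n^\ast)$, which is rational (a rational minus an integer) and satisfies $b_k\ge\delta n_k>0$. By minimality, $f(n)\le a_kn-b_k$ for all $n\ge n_k$, which is~(i); and an $n^\ast$-vertex $2$-layer $k$-planar graph attaining $f(n^\ast)$ edges has exactly $a_kn^\ast-b_k$ edges and $n^\ast\ge n_k\ge 1>0$, which is~(ii). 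Since $a_k>c\ge\tfrac32$, the requirement $a_k\ge\tfrac32$ is met.

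The only step needing care is the claim that $a_kn-f(n)$ genuinely diverges, which is what guarantees a finite minimiser $n^\ast$ and hence a finite, positive, rational value of $b_k$; this is exactly why $a_k$ must be taken strictly above the asymptotic density slope of $k$-planar graphs rather than equal to it, and also why the statement can only promise a single tight value of $n$ in~(ii) (with this choice of $a_k$ there are just finitely many $n$ with $f(n)=a_kn-b_k$). No super-additivity argument is needed for the statement as phrased. If one instead wanted $a_k,b_k$ to reflect the true density of $2$-layer $k$-planar graphs, one could additionally apply Fekete's lemma to the side-by-side construction that glues two optimal $2$-layer drawings by identifying the rightmost vertex of one layer of the first drawing with the leftmost vertex of the same layer of the second --- this creates no new crossings and shows that $n\mapsto f(n+1)$ is super-additive, so $f(n)/n$ converges --- but that refinement is not required here.
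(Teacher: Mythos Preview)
Your proof is correct and follows essentially the same strategy as the paper's: bound the extremal function $f(n)$ above using the general $k$-planar density bound, use the $2$-layer $1$-planar lower bound of~\cite{GiacomoDEL14} to force $a_k\ge\tfrac32$, and then take $b_k$ as the minimum of $a_kn-f(n)$ over the relevant range.

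The one genuine difference is in the choice of $a_k$. The paper aims at the \emph{true} asymptotic slope of $f(n)$ (rounding up to a nearby rational if it happens to be irrational) and then argues that, since $a_k$ is rational and $f(n)$ is an integer, the differences $a_kn-f(n)$ lie in a discrete set, so one may pick the ``smallest divergence'' as $b_k$. You instead pick $a_k$ strictly above a known rational upper-bound slope $c$; this forces $a_kn-f(n)\ge(a_k-c)n\to\infty$, so the minimum over $n\ge n_k$ is transparently attained at some finite $n^\ast$, with no need to know that $f(n)/n$ converges. Your remark about Fekete's lemma is exactly the missing ingredient the paper would need to justify that convergence, and you are right that the lemma as stated does not require it. The trade-off is that your $a_k$ may be far from the tight slope; but since the only downstream use of the lemma (e.g.\ in \cref{lem:twoLayerGraphsNoDegreeOneAndNoBrick}) relies solely on $a_k>1$, this costs nothing.
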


We then define a useful concept for the analysis of $2$-layer $k$-planar graphs:

\begin{definition}
Let $G$ be a topological $2$-layer $k$-planar graph
and let $G[i,j|x,y]$, with $1 \leq i \leq j \leq p$ and $1 \leq x \leq y \leq q$, be the topological subgraph of $G$
induced  by vertices $\{u_i,\ldots,u_j,\allowbreak v_x,\ldots,v_y\}$.
$G[i,j|x,y]$ is a \emph{brick} if it 
contains two distinct crossing-free edges, namely $(u_i,v_x)$ and $(u_j,v_y)$, that are also crossing-free in $G$.
\end{definition}

The smallest brick, called \emph{trivial}, contains one vertex of one partition set, say $u_i=u_j$, and two consecutive vertices of the second one, say $v_x$ and $v_y=v_{x+1}$. 

\begin{observation}
Every optimal topological $2$-layer $k$-planar graph contains planar edges $(u_1,v_1)$ and $(u_p,v_q)$, and hence at least one brick.
\end{observation}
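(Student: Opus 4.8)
The plan is to exploit the combinatorial rigidity of the $2$-layer model: in a simple topological $2$-layer graph, whether two independent edges cross depends only on the underlying bipartite graph and the two vertex orders. Concretely, $(u_i,v_j)$ and $(u_{i'},v_{j'})$ cross if and only if they \emph{interleave}, i.e., $(i-i')(j-j')<0$. I would prove this by a parity argument: using $y$-monotonicity, write each edge as a curve $x=f(y)$ over the horizontal strip bounded by $L_u$ and $L_v$; the relative left-to-right order of the two curves is reversed between $L_u$ and $L_v$ precisely when the indices interleave, so the two curves cross an odd number of times in the interleaving case and an even number otherwise, and simplicity caps this number at one.

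The ``planarity'' half of the statement is then immediate. Every edge other than $(u_1,v_1)$ is either adjacent to it, hence crossing-free with it by simplicity, or independent of it and thus of the form $(u_{i'},v_{j'})$ with $i',j'\ge 2$, so that $(1-i')(1-j')>0$ and there is no crossing. Symmetrically, an edge $(u_{i'},v_{j'})$ independent of $(u_p,v_q)$ has $i'\le p-1$ and $j'\le q-1$, so $(p-i')(q-j')>0$. Hence, whenever they are present, $(u_1,v_1)$ and $(u_p,v_q)$ are crossing-free in \emph{every} simple topological $2$-layer drawing.

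It remains to show that an optimal graph must contain these two edges. Suppose, for a contradiction, that $(u_1,v_1)\notin E(G)$, and fix a simple $k$-planar topological $2$-layer drawing of $G$, which induces linear orders on $U$ and on $V$. Passing to the straight-line drawing $D'$ that realizes these same two orders (perturbed into general position if needed) changes no pair of crossing edges, by the interleaving criterion, so $D'$ is again a simple $k$-planar topological $2$-layer drawing of $G$. In $D'$ we may add $(u_1,v_1)$ as a straight segment, which by the interleaving criterion crosses no other edge; hence $G+(u_1,v_1)$ is $2$-layer $k$-planar with one more edge than $G$, contradicting optimality. The argument for $(u_p,v_q)$ is identical. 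Therefore both edges belong to $G$ and, by the previous paragraph, are crossing-free; since they are distinct unless $p=q=1$, the whole graph $G=G[1,p|1,q]$ then satisfies the definition of a brick, which yields the last claim.

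I expect the only delicate point to be the interleaving characterization -- equivalently, the assertion that the straight-line realization of the two vertex orders introduces no new crossing; everything else is routine. One should also note the degenerate case $p=q=1$, in which $G$ has at most one edge and the ``brick'' conclusion is vacuous.
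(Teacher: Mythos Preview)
The paper states this as an \emph{observation} and gives no proof at all, so there is nothing to compare against beyond the implicit intended reasoning. Your argument is correct and supplies precisely that reasoning: the interleaving criterion $(i-i')(j-j')<0$ for crossings in a simple $2$-layer drawing, the immediate consequence that the extremal edges $(u_1,v_1)$ and $(u_p,v_q)$ can cross nothing, and the optimality contradiction if either edge is absent.

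One minor remark: the detour through a straight-line realization is valid (and usefully records that topological and geometric simple $2$-layer drawings have identical crossing sets), but it is heavier than needed. In the given topological drawing you can route the missing edge $(u_1,v_1)$ as a $y$-monotone arc that stays strictly to the left of every other edge curve; this visibly introduces no crossings and keeps the drawing simple and $k$-planar, yielding the contradiction directly. Either way, the conclusion stands, and your handling of the degenerate case $p=q=1$ is appropriate.
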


Regarding the connectivity we observe the following. If a  topological $2$-layer $k$-planar graph $G$ is not connected, we can draw the connected components as consecutive bricks and connect two consecutive bricks with another edge. Hence, we conclude the following:

\begin{observation}
Every optimal topological $2$-layer $k$-planar graph is connected.
\end{observation}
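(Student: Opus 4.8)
The plan is a short contradiction argument. Suppose $G$ is an optimal topological $2$-layer $k$-planar graph on $n$ vertices that is disconnected; the goal is to build an $n$-vertex $2$-layer $k$-planar graph with one more edge, which is impossible because $G$ already has the maximum number of edges among $n$-vertex graphs in the class. Being disconnected, $G$ has $n \ge 2$ vertices, and if $G$ had no edge at all it would be beaten by the star $K_{1,n-1}$ (put its center alone on $L_u$, all leaves on $L_v$, and draw it without crossings), which is $2$-layer $k$-planar with $n-1 \ge 1$ edges. Hence I may assume $G$ has an edge, so at least one connected component $C$ of $G$ has a vertex on $L_u$ and a vertex on $L_v$.

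First I would restrict the given drawing to each connected component, obtaining for each a topological $2$-layer $k$-planar drawing, and then redraw $G$ by translating these drawings into pairwise disjoint, consecutive vertical strips (freely reflecting individual strips horizontally when convenient, which preserves $2$-layer $k$-planarity). In the new drawing no edge gains a crossing, since edges of distinct components now lie in disjoint strips; thus it is still $2$-layer $k$-planar with the same $m$ edges, and there is an empty vertical gap between any two consecutive strips.

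It then remains to add one edge, drawn as a $y$-monotone arc through such a gap and joining a vertex of one component to a vertex of the next; since $G$ is bipartite with $U$ on $L_u$ and $V$ on $L_v$, its two endpoints must lie on opposite layers. Call a placement of the strips \emph{good} if, for some ordering and some choice of reflections, two consecutive strips have their facing extreme vertices (the rightmost vertex of the left strip and the leftmost vertex of the right strip) on opposite layers. If a good placement exists, I connect those two vertices by an arc that hugs the empty outer sides of the two strips, so it is $y$-monotone and crosses nothing, and we are done. If no good placement exists, then every component must have both its leftmost and its rightmost vertex on one common layer $m$ (otherwise a component with extremes on both layers, or a component with extremes on $L_u$ placed immediately left of one with extremes on $L_v$, would yield a good placement). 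In that case I would take $C$ as above, let $b$ be the leftmost vertex of $C$ on the layer complementary to $m$ (it exists since $C$ touches both layers), take any other component $D$ with its rightmost vertex $d$ (necessarily on $m$), place $D$ immediately left of $C$, and connect $d$ to $b$: this arc is $y$-monotone, leaves $D$ on its empty right side, and inside $C$'s strip can be routed weakly to the left of every edge of $C$, because every edge of $C$ incident to a vertex left of $b$ has its other endpoint at or right of $b$, while every other edge of $C$ lies entirely at or right of $b$. In all cases we obtain a topological $2$-layer $k$-planar graph on $n$ vertices with $m+1$ edges, contradicting optimality of $G$.

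The step I expect to be the crux is the last one: ruling out that we are forced to join two vertices on the same layer (which bipartiteness forbids), and then checking that the detour arc from $d$ to $b$ really crosses nothing. Routing it from the far side to the extreme opposite-layer vertex $b$ of $C$ resolves this, but the crossing-freeness needs the small case distinction on how the edges of $C$ straddle $b$; the rest of the argument is routine.
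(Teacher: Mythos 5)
Your proof is correct and follows essentially the same approach as the paper, which simply notes that the connected components can be laid out in consecutive strips and joined by one additional crossing-free edge. The only difference is that you carefully handle the case where facing extreme vertices lie on the same layer (which bipartiteness forbids connecting), a detail the paper's one-line justification glosses over; your resolution via the leftmost vertex of $C$ on the complementary layer is sound.
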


Next, we establish a useful property of an optimal $2$-layer $k$-planar graph $G$. 

\begin{lemma}\label{lem:twoLayerGraphsNoDegreeOneAndNoBrick}
Let $G$ be an optimal topological $2$-layer $k$-planar graph with exactly $a_kn-b_k$ edges. Then $G$ contains no vertex of degree $1$ and no trivial brick.
\end{lemma}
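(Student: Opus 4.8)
The plan is to argue by contradiction. Suppose $G$ is an optimal topological $2$-layer $k$-planar graph on $n$ vertices with exactly $m = a_kn-b_k$ edges and that $G$ either has a vertex of degree~$1$ or contains a trivial brick. In either case I will exhibit a topological $2$-layer $k$-planar graph $G'$ on $n-1$ vertices with $m-1$ edges. Since $m-1 = a_kn-b_k-1$ and $a_k \ge \frac32 > 1$ by \cref{lem:rational-density}, this yields $m-1 > a_k(n-1)-b_k$, so $G'$ violates the density upper bound of \cref{lem:rational-density}(i) (valid for $n\ge n_k+1$, which we may assume, the statement being relevant only for large $n$). This contradiction proves the lemma.

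\emph{Degree-$1$ vertex.} If $w$ has degree $1$ in $G$, let $G'$ be obtained by deleting $w$ and its unique incident edge. Removing a vertex together with its incident edges cannot increase the number of crossings on any surviving edge, so $G'$ is still topological $2$-layer $k$-planar; moreover it has $n-1$ vertices and $m-1$ edges, as required.

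\emph{Trivial brick.} Suppose $G$ contains a trivial brick; up to the symmetry between $U$ and $V$ it consists of a vertex $u_i$, two consecutive vertices $v_x,v_{x+1}$, and the two edges $(u_i,v_x)$ and $(u_i,v_{x+1})$, both crossing-free in $G$. I first observe that $u_i$ is the \emph{only} common neighbour of $v_x$ and $v_{x+1}$: a common neighbour $u_j$ with $j\ne i$ would provide an edge (namely $(u_j,v_{x+1})$ if $j<i$, and $(u_j,v_x)$ if $j>i$) whose endpoints interleave, along $L_u$ and $L_v$, with those of $(u_i,v_x)$ resp.\ $(u_i,v_{x+1})$, thus crossing one of the two crossing-free brick edges --- a contradiction. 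Now let $G'$ be obtained from $G$ by identifying $v_x$ and $v_{x+1}$ into a single vertex $v^\ast$, placed on $L_v$ strictly between the former positions of $v_x$ and $v_{x+1}$. Recalling that in a simple $2$-layer drawing two independent edges cross exactly when their endpoints interleave along $L_u$ and $L_v$: since $v^\ast$ lies on the same side of every vertex of $V\setminus\{v_x,v_{x+1}\}$ as both $v_x$ and $v_{x+1}$ did, the interleaving of any edge with any edge not incident to $v^\ast$ is unchanged; the two brick edges merge into the single crossing-free edge $(u_i,v^\ast)$; and the edges incident to $v^\ast$ in $G'$ are pairwise non-crossing, being adjacent. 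Hence no edge of $G'$ has more crossings than in $G$, so $G'$ is topological $2$-layer $k$-planar. By the observation above, the only pair of edges of $G$ that gets identified is $\{(u_i,v_x),(u_i,v_{x+1})\}$, so $G'$ has $n-1$ vertices and $m-1$ edges, as required.

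I expect the trivial-brick case to be the crux, the delicate point being that merging the two consecutive vertices $v_x,v_{x+1}$ creates no new crossing. This becomes routine once one uses that the crossing pattern of a simple $2$-layer drawing is determined by the two vertex orderings together with the edge set --- equivalently, that every such drawing admits a straight-line realization --- so that sliding $v^\ast$ into the gap between $v_x$ and $v_{x+1}$ disturbs no interleaving with the remaining vertices.
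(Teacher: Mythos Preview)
Your proof is correct and follows the same strategy as the paper: delete the degree-$1$ vertex, respectively contract the two consecutive vertices of the trivial brick, and observe that the resulting $(n-1)$-vertex graph has $m-1 = a_k(n-1)-b_k+(a_k-1) > a_k(n-1)-b_k$ edges, contradicting \cref{lem:rational-density}. The paper's proof is terser --- it simply asserts that the contracted graph has $m-1$ edges --- whereas you additionally verify that $u_i$ is the unique common neighbour of $v_x,v_{x+1}$ (so exactly one edge is lost) and that the contraction preserves $k$-planarity; these justifications are welcome and the argument is sound.
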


\begin{proof} 
Assume that $G$ contains a degree-$1$ vertex $v$ and consider the graph $G'$ obtained from $G$ by removing $v$. This graph has $m'=m-1$ edges and $n' =n-1$ vertices. Then, $m'=a_kn-b_k-1 = a_k(n-1)-b_k+(a_k-1)$, which is larger than $a_k(n-1)-b_k$ since $a_k \geq \frac{3}{2}$, by \cref{lem:rational-density}; a contradiction.

Second, assume that $G$ contains a trivial brick $G[i,i|x,x+1]$. Then, consider the graph $G'$ obtained from $G$ by identifying vertices $v_x$ and $v_{x+1}$. Clearly $G'$ has $m'=m-1$ edges (edges $(u_i,v_x)$ and $(u_i,v_{x+1})$ coincide in $G'$)  and $n' =n-1$ vertices. This leads to the same contradiction as in the previous case.  \end{proof}

\subsection{$2$-Layer $2$-Planar Graphs} We start with an observation about \emph{maximal} topological $2$-layer $2$-planar graphs, that is, in which no edge may be inserted without violating $2$-planarity.

\begin{figure}[t]
\centering
\begin{subfigure}[b]{0.475\textwidth}
\centering
\includegraphics[width=\textwidth,page=9]{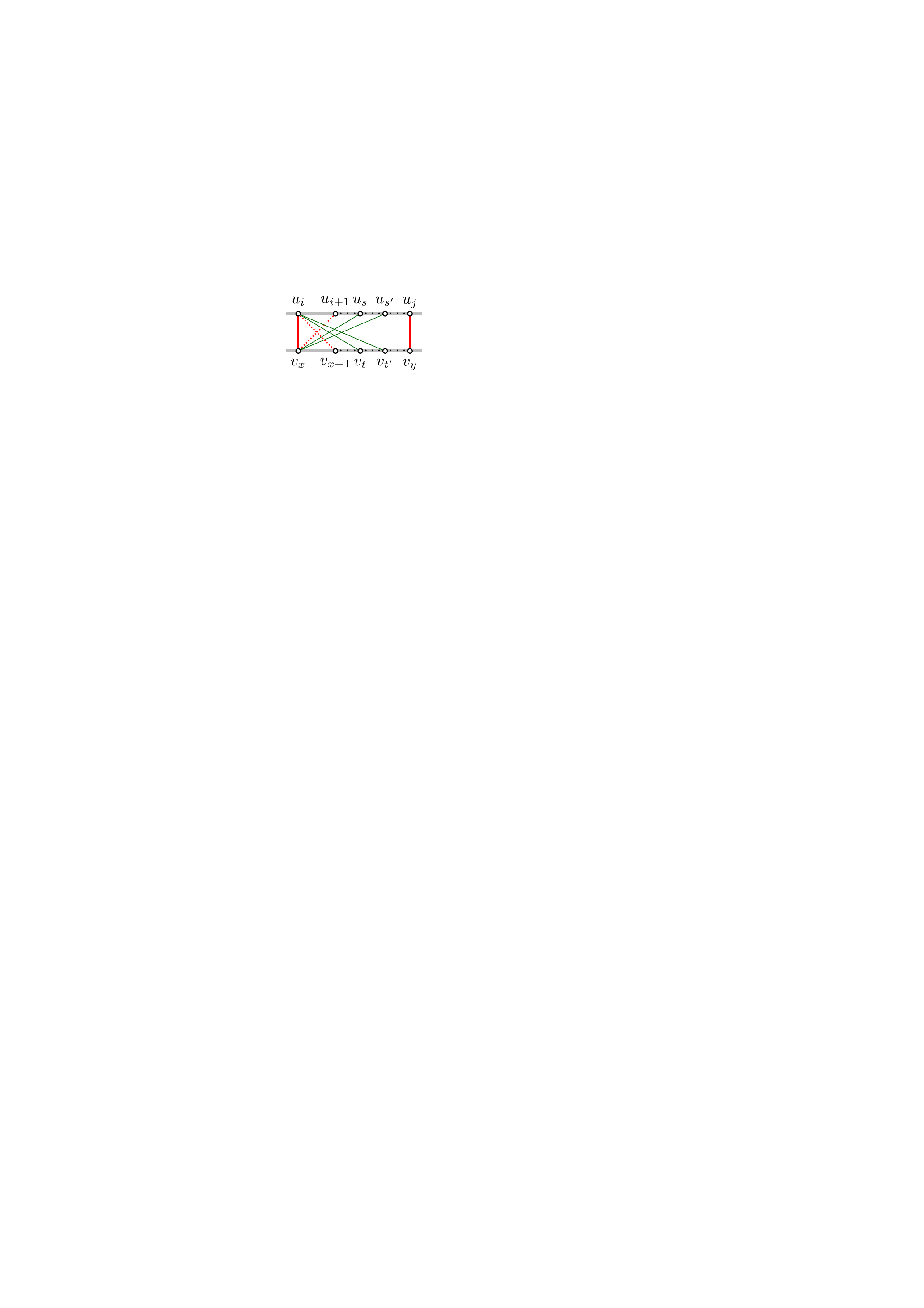}
\caption{}
\end{subfigure}
\hfil
\begin{subfigure}[b]{0.475\textwidth}
\centering
\includegraphics[width=\textwidth,page=36]{figures}
\caption{}
\end{subfigure}
\caption{(a)~A maximal topological $2$-layer $2$-planar graph that is not optimal, as shown by the graph in~(b). Differences between the two graphs are dashed blue.}
\label{fig:accordion}
\end{figure}

\begin{observation}
There exists a maximal topological $2$-layer $2$-planar graph  that is not optimal; see \cref{fig:accordion}.
\end{observation}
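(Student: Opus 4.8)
\emph{Proof plan.} We establish the observation by exhibiting the concrete drawing in \cref{fig:accordion}(a) — call it the \emph{accordion} $A$ — and showing that (i)~$A$ is a topological $2$-layer $2$-planar graph, (ii)~$A$ is maximal, i.e.\ no $y$-monotone arc can be added to its drawing without creating an edge with three crossings, and (iii)~$A$ is not optimal, as witnessed by the graph in \cref{fig:accordion}(b), which has the same number $n$ of vertices, is $2$-layer $2$-planar, and has strictly more edges. The workhorse for (i) and (ii) is the elementary fact that, in a simple topological $2$-layer graph with top order $u_1,\dots,u_p$ and bottom order $v_1,\dots,v_q$, two independent edges $(u_i,v_j)$ and $(u_{i'},v_{j'})$ cross exactly once if their endpoints \emph{interleave} (i.e.\ $i<i'$ and $j>j'$, or $i>i'$ and $j<j'$) and do not cross otherwise; this follows from a parity argument on a horizontal sweep line using $y$-monotonicity. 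Hence the crossing pattern of the drawing is fully determined by the two vertex orders.

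Claim (i) then reduces to reading off, from the interleaving relation, the number of crossings on each edge of $A$ and checking it never exceeds two; this is a finite, routine verification on the bounded-size block that the accordion repeats. For claim (ii) we use the fact in the opposite direction: a candidate new arc $(u_i,v_j)$ is forced to cross every existing edge interleaving with $\{u_i,v_j\}$ (at least once each) and can be routed to avoid all other edges, so its minimum-crossing insertion crosses exactly the interleaving edges, once each; any other routing only makes matters worse. Thus $(u_i,v_j)$ can be added while preserving $2$-planarity \emph{iff} (a)~at most two existing edges interleave with $\{u_i,v_j\}$ and (b)~each of them currently carries at most one crossing. To prove maximality we enumerate the non-edges of $A$, grouped (as indicated in \cref{fig:accordion}(a)) by the positions of $u_i$ and $v_j$ relative to the accordion blocks, and show that each non-edge violates (a) or (b): non-edges between far-apart blocks interleave with too many edges and fail~(a), while non-edges between nearby vertices within or between consecutive blocks fail~(b), because the relevant local edges are already saturated with two crossings. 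This case analysis is the main obstacle of the proof — not deep, but requiring care to cover every non-edge; it is also exactly where the precise design of the accordion gadget is used, since a less careful gadget would admit an extra edge.

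Finally, for claim (iii) we take the drawing in \cref{fig:accordion}(b): it has the same vertices in the same order as $A$ and differs from $A$ only in the dashed-blue edges, so a direct application of the interleaving characterization shows it is again $2$-layer $2$-planar while containing at least one more edge than $A$. Therefore $A$ has strictly fewer edges than some $n$-vertex $2$-layer $2$-planar graph, whence $A$ is not optimal, and the observation follows.
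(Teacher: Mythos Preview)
Your proposal is correct and aligns with the paper: the paper offers no written argument for this observation beyond the two drawings in \cref{fig:accordion}, so its ``proof'' is entirely pictorial, and your plan is precisely the rigorous unpacking of what inspecting those pictures entails. Your use of the interleaving characterization of crossings in $2$-layer drawings is the right tool, and the three-part structure (2-planarity of $A$, maximality of $A$, existence of a denser $2$-layer $2$-planar graph on the same vertex set) is exactly what is needed.

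One small caution on part~(iii): the caption says the dashed-blue edges mark the \emph{differences} between the two graphs, so the graph in~(b) is not obtained from $A$ by pure addition --- some edges of $A$ are removed and others are inserted. Your argument therefore must count edges on both sides, not just note that~(b) contains an edge absent from $A$. This does not change your method, but the sentence ``differs from $A$ only in the dashed-blue edges, \dots\ while containing at least one more edge than $A$'' should be read as a net edge count after both deletions and insertions. Beyond that, what you have is a proof outline rather than a completed proof; the actual case analysis for maximality is left implicit, which matches the level of detail the paper itself provides.
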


We now characterize the structure of bricks in optimal $2$-layer $2$-planar graphs.

\begin{lemma}
\label{lem:bricksIn2planarGraphs}
Let $G$ be an optimal topological $2$-layer $2$-planar graph with exactly $a_2n-b_2$ edges and let $G[i,j|x,y]$ be a brick of $G$. Then, $j\geq i+1$ and $y = x+1$, or  $j = i+1$ and $y \geq x+1$.
\end{lemma}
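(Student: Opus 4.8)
The plan is to rule out every brick $G[i,j|x,y]$ that violates the stated dichotomy by contradicting the optimality of $G$. A violating brick is necessarily of one of two types (the remaining degenerate possibilities are not bricks at all, when $j=i$ and $y=x$ since then $(u_i,v_x)=(u_j,v_y)$, or they are trivial bricks, when $j=i,\,y=x+1$ or $j=i+1,\,y=x$, and hence are already excluded by \cref{lem:twoLayerGraphsNoDegreeOneAndNoBrick}): \textbf{type (a)}, $j=i$ and $y\geq x+2$ (or symmetrically $y=x$ and $j\geq i+2$); and \textbf{type (b)}, $j\geq i+2$ and $y\geq x+2$.

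Type (a) follows from \cref{lem:twoLayerGraphsNoDegreeOneAndNoBrick} by a degree argument. Say $j=i$ and $y\geq x+2$. The frame edges $(u_i,v_x)$ and $(u_i,v_y)$ are crossing-free in $G$ and share the endpoint $u_i$; together with the portion of $L_v$ between $v_x$ and $v_y$ they enclose a region $R$ whose only vertices are $u_i,v_x,\dots,v_y$. For an interior vertex $v_b$ with $x<b<y$, any edge $(v_b,u_a)$ with $a\neq i$ would have to leave $R$ and therefore cross a frame edge — impossible, since frame edges are crossing-free. So $v_b$ is adjacent only to $u_i$, forcing $\deg(v_b)\leq 1$, which contradicts \cref{lem:twoLayerGraphsNoDegreeOneAndNoBrick} (the degree-$0$ case being likewise impossible in an optimal graph, by deleting $v_b$ as in that proof). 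The case $y=x$, $j\geq i+2$ is symmetric, with $u_{i+1},\dots,u_{j-1}$ now playing the role of the degree-one vertices.

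Type (b) — the ``fat'' bricks, with at least three vertices on each side — is the main case, and I would attack it by a local exchange. The crossing-free frame edges $(u_i,v_x)$ and $(u_j,v_y)$, together with the two layer segments, bound a region $R$ containing exactly $u_i,\dots,u_j,v_x,\dots,v_y$; moreover every edge of $G$ with an endpoint interior to the brick lies inside $R$, since it cannot cross either frame edge. Hence the drawing inside $R$ is self-contained and may be replaced by \emph{any} topological $2$-layer $2$-planar drawing on the same vertices that preserves the left-to-right orders on $L_u$ and $L_v$ and keeps $(u_i,v_x)$ and $(u_j,v_y)$ crossing-free, without affecting the rest of $G$ or its $2$-planarity. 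If some such replacement has strictly more edges than $G[i,j|x,y]$, then $m$ increases, contradicting optimality; a natural candidate is a concatenation of thin bricks of the form $G[a,a{+}1\,|\,b,b{+}2]$, each a copy of $K_{2,3}$.

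The step I expect to be the crux is making this exchange quantitative. One needs a sharp upper bound on the number of edges of a fat brick in terms of $p'=j-i+1$ and $q'=y-x+1$, exploiting $2$-planarity together with the fact that \emph{both} extreme edges $(u_i,v_x)$ and $(u_j,v_y)$ are crossing-free, as well as a matching lower bound on the number of edges of the thin replacement on $p'+q'$ vertices, and one must then verify that the latter strictly exceeds the former. An alternative route I would also pursue is to delete the interior of $R$ from $G$, apply \cref{lem:rational-density} to the remaining $2$-layer $2$-planar graph to bound its edge count by $a_2n'-b_2$, and combine this with an upper bound on the edges inside $R$ to contradict $m=a_2n-b_2$; in every variant, the sharp edge bound for a fat brick is the main obstacle.
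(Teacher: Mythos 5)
Your case split is the right one, and your disposal of the one-sided case ($j=i$, $y\geq x+2$) is correct and self-contained: the two crossing-free frame edges sharing $u_i$ trap every interior $v_b$ in a region whose only $U$-vertex is $u_i$, forcing $\deg(v_b)\leq 1$ and contradicting \cref{lem:twoLayerGraphsNoDegreeOneAndNoBrick}. (The paper's own proof in fact only addresses the case $y\geq x+2$ \emph{and} $j\geq i+2$, so your explicit treatment of this degenerate case is, if anything, more complete on that point.)

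The problem is that for the main case --- $j\geq i+2$ and $y\geq x+2$ --- you have a plan, not a proof, and you say so yourself: the ``sharp upper bound on the number of edges of a fat brick'' that your exchange argument hinges on is exactly the part that is missing. This gap is not cosmetic. First, such a bound is essentially equivalent in difficulty to the lemma itself: the whole point of \cref{lem:bricksIn2planarGraphs} is to feed into \cref{thm:2planarUpperbound}, so proving the lemma by comparing brick densities against a $K_{2,3}$-chain risks circularity (and your alternative route via \cref{lem:rational-density} stalls because at this stage only $a_2\geq\tfrac32$ is known, not the value of $a_2$). Second, even granting an upper bound, the replacement side has its own issues: a chain of $K_{2,3}$'s achieving $\tfrac53 n'-\tfrac73$ edges only exists for $n'\equiv 2\pmod 3$ and with a compatible split between the two layers, so the claimed strict gain would need a separate verification. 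The paper avoids all of this with a purely \emph{local} argument: it takes the rightmost neighbor $v_t$ of $u_i$ and the rightmost neighbor $u_s$ of $v_x$, observes that $(u_i,v_t)$ and $(v_x,u_s)$ cross each other and hence (by $2$-planarity) can absorb only one further crossing each, and deduces that every edge incident to $u_{i+1}$ or $v_{x+1}$ must spend one of these two remaining crossings. This budget forces either a degree-$1$ vertex, an insertable crossing-free edge, or a local two-for-one edge exchange --- each contradicting optimality. Some counting argument of this local flavor is what your write-up still needs before the type~(b) case can be considered proved.
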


\begin{proof}
By \cref{lem:twoLayerGraphsNoDegreeOneAndNoBrick}, $G[i,j|x,y]$  is not a trivial brick.
Assume, for a contradiction, that both $y \geq x+2$ and $j\geq i +2$. We first observe that $u_i$ is connected to some $v_t\neq v_x$, while $v_x$ is connected to some $u_s \neq u_i$. If this were not the case, say if $u_i$ were only incident to $v_x$, then a crossing-free edge $(v_x,u_{i+1})$ could be inserted, contradicting the optimality of $G$; see \cref{fig:2planarBricks1} and recall that a brick has no crossing-free edge, except for $(u_i,v_x)$ and $(u_j,v_y)$. So in the following assume that $(u_i,v_t)$ and $(v_x,u_s)$ belong to $G[i,j|x,y]$, with $v_t\neq v_x$ and $u_s \neq u_i$, such that there exists no edge $(u_i,v_{t'})$ with $t' > t$ and no edge $(v_x,v_{s'})$ with $s' > s$.
\begin{figure}[t]
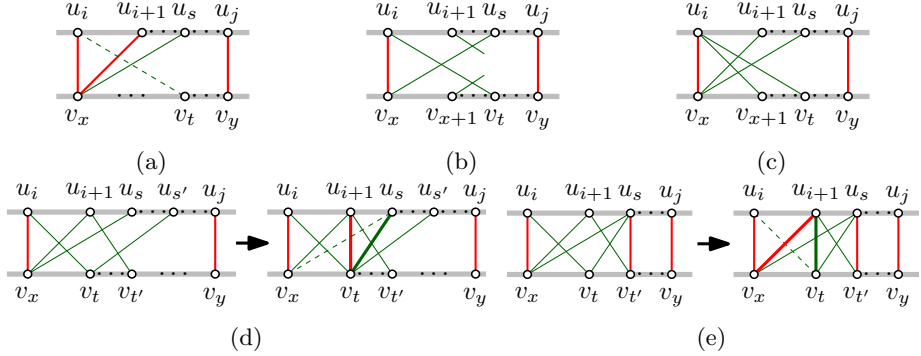

\centering
\begin{subfigure}[b]{0.32\textwidth}
\centering \includegraphics[scale=1,page=31]{figures}
\caption{}
\label{fig:2planarBricks1}
\end{subfigure}
\hfil
\begin{subfigure}[b]{0.32\textwidth}
\centering \includegraphics[scale=1,page=32]{figures}
\caption{}
\label{fig:2planarBricks2}
\end{subfigure}
\hfil
\begin{subfigure}[b]{0.32\textwidth}
\centering \includegraphics[scale=1,page=33]{figures}
\caption{}
\label{fig:2planarBricks3}
\end{subfigure}
\hfil
\begin{subfigure}[b]{0.525\textwidth}
\centering \includegraphics[width=\textwidth,page=34]{figures}
\caption{}
\label{fig:2planarBricks4}
\end{subfigure}
\hfil
\begin{subfigure}[b]{0.45\textwidth}
\centering \includegraphics[width=\textwidth,page=35]{figures}
\caption{}
\label{fig:2planarBricks5}
\end{subfigure}
\caption{Illustrations for the proof of \cref{lem:bricksIn2planarGraphs}.}
\label{fig:2planarBricks}
\end{figure}
Next, we consider $u_{i+1}$ and $v_{x+1}$. Assume first that $u_{i+1} \neq u_s$ and that $v_{x+1} \neq v_t$. Then, all edges incident to $u_{i+1}$ and $v_{x+1}$ have a crossing with  $(u_i,v_t)$ or $(v_x,u_s)$. Since $(u_i,v_t)$ and $(v_x,u_s)$ cross each other, there can be at most two such edges, and thus $u_{i+1}$ or $v_{x+1}$ has degree one; see \cref{fig:2planarBricks2,fig:2planarBricks3}. By \cref{lem:twoLayerGraphsNoDegreeOneAndNoBrick}, this contradicts the optimality of $G$. Hence, assume w.l.o.g. that $v_{x+1}=v_t$. Note that $u_s \neq u_{i+1}$, as otherwise the crossing-free edge $(u_{i+1},v_{x+1})$ could be inserted, contradicting the optimality of $G$. In addition, $u_s=u_{i+2}$, since otherwise $u_{i+1}$ and $u_{i+2}$ could only be incident to a total of two edges, by the same argument as above, resulting in a degree-$1$ vertex, which contradicts the optimality of $G$. 

By \cref{lem:twoLayerGraphsNoDegreeOneAndNoBrick}, both $u_{i+1}$ and  $v_{x+1}$ have degree at least $2$. Let $u_{s'}$ and $v_{t'}$ denote the neighbors of $v_{x+1}$ and  $u_{i+1}$ respectively, such that $s'$ and $t'$ are maximal. First assume that $t' \neq t$. If $s'=i+1$, the crossing-free edge $(u_s,v_t)$ can be inserted, contradicting the optimality of $G$. We observe that edge $(u_{i+1},v_{t'})$ is crossed by edges $(v_x,u_s)$ and $(v_t,u_{s'})$. If $u_s \neq u_s'$, we can obtain a topological $2$-layer $2$-planar graph $G'$ by removing edge  $(v_x,u_s)$  and inserting edges $(v_t,u_s)$ and $(v_t,u_{i+1})$; see \cref{fig:2planarBricks4}. This clearly contradicts the optimality of $G$. If $u_s = u_s'$, we can obtain a topological $2$-layer $2$-planar graph $G'$ by removing edge  $(u_i,v_t)$  and inserting edges $(v_x,u_{i+1})$ and $(v_t,u_{i+1})$; see \cref{fig:2planarBricks5}. This again contradicts the optimality of $G$. We conclude that $t'=t$.

Since $(v_x,u_s)$ is crossed by edges $(u_{i},v_{t})$ and $(u_{i+1},v_{t})$, we conclude that $(u_s,v_t)$ can be inserted without crossings, contradicting the optimality of $G$.
\end{proof}

By \cref{lem:twoLayerGraphsNoDegreeOneAndNoBrick,lem:bricksIn2planarGraphs}, we get that every brick must be a $K_{2,h}$ for some $h \geq 2$. The following observation shows that $h \leq 3$; see also \cref{fig:k_2-4}: 


\begin{observation}
\label{obs:noK33orK24}
The complete bipartite graph $K_{2,4}$  is not $2$-layer $2$-planar.
\end{observation}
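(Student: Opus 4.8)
The plan is to argue directly about the cyclic/linear structure of a putative topological $2$-layer $2$-planar drawing of $K_{2,4}$. Write $U=\{a,b\}$ and $V=\{v_1,v_2,v_3,v_4\}$, where $v_1,\dots,v_4$ appear in this left-to-right order on $L_v$; without loss of generality $a$ lies to the left of $b$ on $L_u$. For each $v_i$ consider the two edges $a v_i$ and $b v_i$; since the drawing is simple, $a v_i$ and $b v_i$ do not cross, and together they partition the strip between $L_u$ and $L_v$ into two regions. The key combinatorial gadget is the set $X$ of the four ``left'' edges $a v_1,\dots,a v_4$ and the set $Y$ of the four ``right'' edges $b v_1,\dots,b v_4$. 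Two edges $a v_i$ and $a v_j$ are adjacent (share $a$), hence do not cross; likewise any two edges of $Y$ do not cross. So all crossings are between $X$ and $Y$, and I will count them.

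First I would pin down exactly which $X$–$Y$ pairs are forced to cross. Edge $a v_i$ (for $i\ge 2$) and edge $b v_j$ (for $j\le i-1$) have endpoints interleaved along the boundary walk $L_u\cup L_v$: going around, we meet $a, b, v_1,\dots,v_4$ in order, and the pair $(a v_i, b v_j)$ with $j<i$ separates the two lines in a way that forces an odd number of crossings, hence at least one crossing (this is the standard parity argument for $y$-monotone arcs between two layers with fixed endpoint order). Thus $a v_i$ must cross $b v_1,\dots,b v_{i-1}$, giving $a v_i$ at least $i-1$ crossings from $Y$; symmetrically $b v_j$ must cross $a v_{j+1},\dots,a v_4$, i.e.\ $4-j$ crossings from $X$. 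Summing, the total number of crossings is at least $\sum_{i=1}^4 (i-1) = 6$.

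Now I apply the $2$-planarity budget. There are $8$ edges, each crossed at most twice, so the number of crossing pairs is at most $\frac{8\cdot 2}{2}=8$. That alone is not a contradiction, so the finishing step is to look at the most heavily crossed edge. Consider $a v_4$: by the above it must cross $b v_1, b v_2, b v_3$, which is three crossings, exceeding the bound of $2$ — contradiction. (Equivalently, $b v_1$ must cross $a v_2, a v_3, a v_4$.) Hence no such drawing exists, and $K_{2,4}$ is not $2$-layer $2$-planar. I expect the only real subtlety to be the clean justification of the ``interleaved endpoints force a crossing'' claim for $y$-monotone arcs with prescribed endpoint order on the two lines; this is routine but should be stated explicitly, perhaps by noting that $a v_i$ separates the region between $L_u$ and $L_v$ into a left part containing $v_1,\dots,v_{i-1}$ and a right part, while $b$ lies on the right side of $a v_i$, so any edge from $b$ to a $v_j$ with $j<i$ crosses $a v_i$.
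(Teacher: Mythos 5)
Your argument is correct and matches the paper's justification in spirit: the paper simply points to a figure of the (essentially unique) $2$-layer drawing of $K_{2,4}$, in which the forced crossings are visible, and your explicit count — with $a$ left of $b$ and $v_1,\dots,v_4$ ordered on $L_v$, the edge $av_4$ must cross $bv_1,bv_2,bv_3$ — is exactly the formal content of that figure. The separation argument you give for why $av_i$ must cross $bv_j$ when $j<i$ is the right way to make the figure's claim rigorous, and no further work is needed.
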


%

We are ready to prove a tight bound for the density of $2$-layer $2$-planar graphs:

\begin{theorem}
\label{thm:2planarUpperbound}
Any $2$-layer $2$-planar graph on $n$ vertices has at most $\frac{5}{3}n-\frac{7}{3}$ edges. Moreover, the optimal $2$-layer $2$-planar graphs with exactly $\frac{5}{3}n-\frac{7}{3}$ edges are sequences of $K_{2,3}$'s such that consecutive $K_{2,3}$'s share one planar~edge.
\end{theorem}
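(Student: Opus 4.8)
The plan is to combine the structural results already established with a careful counting argument that decomposes an optimal topological $2$-layer $2$-planar graph into bricks. First I would argue that an optimal graph $G$ can be viewed as a sequence of bricks glued along shared planar edges: by the two observations preceding \cref{lem:twoLayerGraphsNoDegreeOneAndNoBrick}, $G$ is connected and contains the planar edges $(u_1,v_1)$ and $(u_p,v_q)$, hence at least one brick. I would then show that any optimal $G$ decomposes as $B_1, B_2, \dots, B_\ell$ where consecutive bricks $B_t$, $B_{t+1}$ share exactly one planar edge (and no vertex appears in three bricks), essentially because a planar edge of $G$ splits the drawing into two independent halves. The key input is that, by \cref{lem:twoLayerGraphsNoDegreeOneAndNoBrick} and \cref{lem:bricksIn2planarGraphs}, each $B_t$ is a complete bipartite graph $K_{2,h_t}$ with $h_t \geq 2$, and by \cref{obs:noK33orK24} together with the symmetric impossibility of $K_{3,3}$ (a brick with $j \ge i+2$ and $y \ge x+2$ is already excluded by \cref{lem:bricksIn2planarGraphs}, but one should also rule out $K_{2,4}$ on the ``long'' side), in fact $h_t \in \{2,3\}$.

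Next comes the counting. Suppose $G$ decomposes into $\ell$ bricks, of which $a$ are copies of $K_{2,3}$ and $b$ are copies of $K_{2,2}$, so $\ell = a+b$. Each $K_{2,3}$ brick has $5$ vertices and $6$ edges; each $K_{2,2}$ brick has $4$ vertices and $4$ edges. Gluing $\ell$ bricks in a path shares $\ell-1$ planar edges, and each shared edge identifies $2$ vertices and $1$ edge. Hence $n = (5a + 4b) - 2(\ell-1)$ and $m = (6a + 4b) - (\ell - 1)$. Eliminating $\ell$ and using $\ell = a+b$ gives $n = 3a + 2b + 2$ and $m = 5a + 3b + 1$. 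Then I would maximize $m$ subject to $3a + 2b = n - 2$, $a,b \ge 0$: since $m = 5a + 3b + 1 = \tfrac{5}{3}(3a) + 3b + 1 = \tfrac{5}{3}(n - 2 - 2b) + 3b + 1 = \tfrac{5}{3}n - \tfrac{7}{3} - \tfrac{1}{3}b$, the maximum is attained exactly when $b = 0$, i.e. when every brick is a $K_{2,3}$, yielding $m = \tfrac{5}{3}n - \tfrac{7}{3}$. This simultaneously gives the upper bound and the characterization of the optimal graphs as sequences of $K_{2,3}$'s sharing one planar edge each. For the lower bound (the ``moreover there exist infinitely many'' part implicit via \cref{lem:rational-density}, though here we want the exact construction) I would exhibit the sequence of $\tfrac{n-2}{3}$ copies of $K_{2,3}$ explicitly and check it is $2$-layer $2$-planar, which is routine since within each $K_{2,3}$ one can route edges with at most two crossings and distinct bricks are crossing-independent.

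The main obstacle is the decomposition step: showing rigorously that every optimal $G$ is a linear sequence of bricks, rather than merely containing bricks. The subtlety is that an optimal $2$-layer $2$-planar graph a priori may have edges that lie ``outside'' any brick, or bricks that overlap in more complicated ways; one must show that optimality forces a clean brick partition. I would handle this by induction on $n$: take the leftmost planar edge $e$ incident to $u_1$ (equivalently $v_1$); the subgraph to the left of the ``last'' planar edge incident to $u_1$ or $v_1$ forms a brick $B_1$ by \cref{lem:bricksIn2planarGraphs}, the remainder is again optimal (else we could improve $G$), and $B_1$ meets the remainder in exactly the shared planar edge — using \cref{lem:twoLayerGraphsNoDegreeOneAndNoBrick} to preclude degree-$1$ vertices and stray configurations. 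Care is needed to ensure that removing $B_1$ and re-gluing does not create a cheaper graph, so the inductive bound transfers; this is where the constants $a_2 = \tfrac{5}{3}$, $b_2 = \tfrac{7}{3}$ must be pinned down rather than left abstract as in \cref{lem:rational-density}.
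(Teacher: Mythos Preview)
Your proposal is correct and follows essentially the same route as the paper. The counting argument is identical: with $a$ copies of $K_{2,3}$ and $b$ copies of $K_{2,2}$ glued along $\ell-1$ shared planar edges, you obtain $n=3a+2b+2$ and $m=5a+3b+1=\tfrac{5}{3}n-\tfrac{7}{3}-\tfrac{1}{3}b$, maximized at $b=0$; the paper does the same computation with $\beta=a+b$ and $\beta_2=b$. The only difference is that you devote a paragraph to justifying the brick decomposition by an inductive peeling argument, whereas the paper simply asserts that the planar edges separate $G$ into a sequence of bricks---in the $2$-layer model a crossing-free edge genuinely disconnects the strip, so this is immediate and your extra care, while not wrong, is unnecessary.
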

\begin{figure}[t]
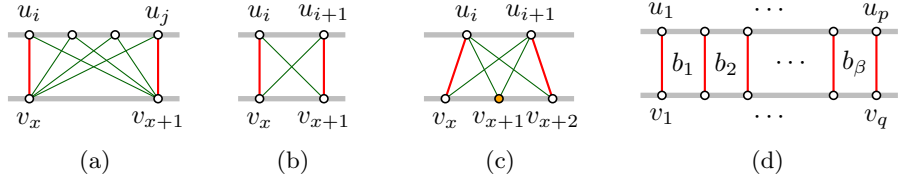

\centering
\begin{subfigure}[b]{0.2\textwidth}
\centering \includegraphics[scale=1,page=4]{figures}
\caption{}
\label{fig:k_2-4}
\end{subfigure}
\hfil
\begin{subfigure}[b]{0.2\textwidth}
\centering \includegraphics[scale=1,page=6]{figures}
\caption{}
\label{fig:k22piece}
\end{subfigure}
\hfil
\begin{subfigure}[b]{0.22\textwidth}
\centering \includegraphics[scale=1,page=7]{figures}
\caption{}
\label{fig:k23piece}
\end{subfigure}
\hfil
\begin{subfigure}[b]{0.33\textwidth}
\centering \includegraphics[scale=1,page=8]{figures}
\caption{}
\label{fig:pieceSequence}
\end{subfigure}
\caption{The unique $2$-layer drawings of 
(a)~$K_{2,4}$;  
(b)~$K_{2,2}$;
(c)~$K_{2,3}$.
(d)~An optimal $2$-layer $2$-planar graph is a sequence of bricks joint at planar edges.}
\label{fig:structureOfOptimal2Planar}
\end{figure} 
\begin{proof}
%
%

\cref{lem:twoLayerGraphsNoDegreeOneAndNoBrick,lem:bricksIn2planarGraphs}, and \cref{obs:noK33orK24} imply that $G$ contains only $K_{2,2}$- and $K_{2,3}$-bricks;
see \cref{fig:k22piece,fig:k23piece}.
Moreover, the planar edges separate $G$ into a sequence of $\beta$ bricks $(b_1,\ldots, b_\beta)$
such that $b_i$ and $b_{i+1}$ share one planar edge. Let $\beta_2$ denote the number of $K_{2,2}$-bricks.
Then, $G$ has $\beta-\beta_2$ $K_{2,3}$-bricks. Moreover, $n = 2\beta+2+(\beta-\beta_2) = 3\beta - \beta_2 + 2$
since each of the $\beta+1$ planar edges is incident to two distinct vertices
while each $K_{2,3}$-brick contains an additional vertex; see  \cref{fig:k23piece}.
Finally, $m=\beta+1+2\beta_2+4(\beta-\beta_2) = 5 \beta -2 \beta_2 + 1$ since every $K_{2,2}$-brick contains two non-planar edges
while every $K_{2,3}$-brick contains four. For a fixed value of $n$, $\beta=\frac{1}{3}n+\frac{1}{3}\beta_2-\frac{2}{3}$ and the density is $m=\frac{5}{3}n-\frac{1}{3}\beta_2-\frac{7}{3}$. This is clearly maximized for $\beta_2=0$. Hence, the maximum density is $m=\frac{5}{3}n-\frac{7}{3}$ which is tightly achieved for graphs in which every brick is a $K_{2,3}$. 
%
%
\end{proof}

\subsection{$2$-Layer $3$-Planar Graphs}  Next, we give a tight bound on the density of $2$-layer $3$-planar graphs. We first present a lower bound construction:

\begin{theorem}
There exist infinitely many $2$-layer $3$-planar graphs with $n$ vertices and $2n - 4$ edges.
\label{thm:3planarLowerbound}
\end{theorem}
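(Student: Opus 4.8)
The plan is to give an explicit family of graphs together with $2$-layer $3$-planar drawings achieving the bound. For every integer $p\ge 4$ let $G_p$ be the bipartite graph with parts $U=\{u_1,\dots,u_p\}$ and $V=\{v_1,\dots,v_p\}$ and edge set $E=\{(u_i,v_j)\colon i-1\le j\le i+2\text{ and }1\le j\le p\}$, and draw it in the $2$-layer model by placing $u_1,\dots,u_p$ on $L_u$ and $v_1,\dots,v_p$ on $L_v$ in index order and routing each edge as a $y$-monotone arc. Then $n=2p$, and summing the degrees on the $U$-side ($\deg u_1=\deg u_{p-1}=3$, $\deg u_p=2$, and $\deg u_i=4$ for $2\le i\le p-2$) yields $m=4(p-3)+3+3+2=4p-4=2n-4$. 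As $p$ ranges over $\{4,5,6,\dots\}$ this produces infinitely many values of $n$ for which such a graph exists, so it remains only to verify that the drawing is $3$-planar.

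For the crossing analysis I will use the standard fact that in a $2$-layer drawing two edges $(u_i,v_j)$ and $(u_{i'},v_{j'})$ with $i\neq i'$ and $j\neq j'$ cross if and only if $(i-i')(j-j')<0$, while edges sharing an endpoint may be assumed not to cross. Fix an edge $(u_i,v_j)\in E$. If $(u_{i'},v_{j'})$ crosses it, say with $i'>i$ and $j'<j$, then the edge condition $j'\ge i'-1$ together with $j'<j\le i+2$ forces $j'\in\{i,i+1\}$ and hence $i'\in\{i+1,i+2\}$; the case $i'<i$, $j'>j$ is symmetric. So only a bounded window of edges can cross $(u_i,v_j)$, and a short case distinction over $j\in\{i-1,i,i+1,i+2\}$ shows that the two ``boundary'' band edges $(u_i,v_{i-1})$ and $(u_i,v_{i+2})$ are each crossed by exactly three edges — for $(u_i,v_{i+2})$ these are $(u_{i+1},v_i)$, $(u_{i+1},v_{i+1})$ and $(u_{i+2},v_{i+1})$, and symmetrically for $(u_i,v_{i-1})$ — whereas the two ``interior'' edges $(u_i,v_i)$ and $(u_i,v_{i+1})$ are crossed only once. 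Near the two ends of $L_u$ and $L_v$ some of the listed edges are absent, so there the crossing counts can only be smaller. Hence no edge of $G_p$ is crossed more than three times, and $G_p$ is $2$-layer $3$-planar.

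The only step that is not a routine count is checking that the boundary edges of the band are crossed exactly three, and not four, times; this is precisely where the band width $4$ is used (a band of width $5$ would already contain an edge with four crossings). Everything else — the vertex and edge count, and the remark that edges near the ends of the layers are crossed no more often than interior edges of the same type — is immediate.
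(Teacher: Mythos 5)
Your construction is exactly the family used in the paper (the band graph with edges $(u_i,v_j)$ for $j\in\{i-1,i,i+1,i+2\}$), with the same degree count giving $m=2n-4$; the only difference is that you verify $3$-planarity by an explicit crossing count, whereas the paper delegates this to a figure. The proposal is correct and follows essentially the same approach.
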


\begin{proof}
\begin{figure}[t]
  \centering
  \includegraphics[scale=1,page=11]{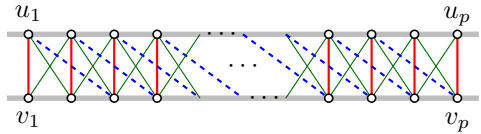}
  \caption{A family of $3$-planar graphs on $n = 2p$ vertices with $2n-4$ edges.}
  \label{fig:3planarLowerbound}
\end{figure}
We describe a family of graphs where $p=q$; refer to \cref{fig:3planarLowerbound}. Each graph has the following edges:
$(u_i,v_i)$ for $1 \leq i \leq p$ (red edges in \cref{fig:3planarLowerbound});
$(u_i,v_{i+1})$ for $1 \leq i \leq p-1$, and $(u_i,v_{i-1})$ for $2 \leq i \leq p$ (green edges in \cref{fig:3planarLowerbound});
$(u_i,v_{i+2})$ for $1 \leq i \leq p-2$ (dashed blue edges in \cref{fig:3planarLowerbound}).
Vertices $u_1$, $u_{p-1}$, $v_2$ and $v_p$ have degree 3, $u_{p}$ and $v_1$ have degree 2,
and all other vertices have degree 4, yielding $4n-8$ for the sum of the vertex degrees
and hence $2n-4$ edges.
\end{proof}

The following theorem provides the corresponding density upper bound:

\begin{restatable}{theorem}{threePlanarUpperbound}
\label{thm:3planarUpperbound}
Let $G$ be a topological $2$-layer $3$-planar graph on $n$ vertices. Then $G$ has at most $2n - 4$ edges for $n \geq 3$.
Moreover, if $G$ is optimal,  it is quasiplanar.
\end{restatable}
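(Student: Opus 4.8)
\textbf{Proof plan for \cref{thm:3planarUpperbound}.}
The plan is to mirror the structure of the $2$-planar case: reduce to optimal graphs, understand their bricks, and count. By \cref{lem:rational-density} we may assume $G$ is optimal with $a_3 n - b_3$ edges, and by \cref{lem:twoLayerGraphsNoDegreeOneAndNoBrick} it has no degree-$1$ vertex and no trivial brick. As in \cref{thm:2planarUpperbound}, the planar edges of $G$ split it into a sequence of bricks $b_1,\dots,b_\beta$, where consecutive bricks share a single planar edge; it suffices to bound the edge-to-vertex ratio inside each brick. So the first main step is: classify the possible bricks of an optimal $2$-layer $3$-planar graph. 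I would prove an analogue of \cref{lem:bricksIn2planarGraphs}, arguing that in an optimal graph a brick $G[i,j\mid x,y]$ cannot simultaneously be ``wide'' on both sides — i.e. one of $j=i+1$ or $y=x+1$ holds — by the same style of local surgery (if $u_{i+1}$ or $v_{x+1}$ is starved of non-crossing room, it becomes degree $1$ or admits an insertable planar edge, contradicting optimality); the extra crossing budget (each edge crossed $\le 3$ times rather than $\le 2$) will make the case analysis longer but not structurally different.

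Once a brick has, say, $y=x+1$, it is a graph on the two vertices $u_i,\dots,u_j$ on one side and exactly two consecutive vertices $v_x,v_{x+1}$ on the other, with $(u_i,v_x)$ and $(u_j,v_{x+1})$ planar; the second step is to show such a brick has at most $2t-2$ edges where $t=j-i+1$ is the number of $u$-vertices it contributes (so that, amortizing the two shared planar edges correctly, each brick stays at the $2n$ ratio and the shared edges plus endpoint corrections produce the $-4$). Here the relevant sub-configuration is essentially a convex/two-layer $3$-planar ``double fan'': every $v_x$--$u_\ell$ edge and every $v_{x+1}$--$u_\ell$ edge, and I would bound how many such edges can coexist with each crossed $\le 3$ times. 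The cleanest route is probably to observe that a brick with only two vertices on one side, drawn in two layers and $3$-planar, is automatically quasiplanar (three pairwise-crossing edges would all be incident to $v_x$ or all to $v_{x+1}$ since they are $y$-monotone between the two lines — actually three pairwise crossing edges between $\{u\}$-vertices and $\{v_x,v_{x+1}\}$ force two to share the same $v$-endpoint and hence be non-crossing, a contradiction), and then invoke \cref{thm:quasiplanarDensity} locally. That simultaneously gives the density bound and the ``optimal $\Rightarrow$ quasiplanar'' clause: if every brick is quasiplanar and bricks meet only at planar edges, no three pairwise-crossing edges can span two bricks (crossings stay inside a single brick), so the whole graph is quasiplanar.

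For the ``moreover'' part I would be slightly more careful: optimality forces every brick to attain the local maximum $2t-2$, and I would check that the extremal bricks are themselves quasiplanar (ruling out, e.g., a $K_{3,3}$-type brick with a triple crossing by \cref{obs:noK33orK24}-style reasoning and the degree constraints), then glue as above. Concretely: show the optimal bricks are exactly the small planar-plus-one-extra pieces realized in the lower-bound construction of \cref{thm:3planarLowerbound} (overlapping ``diamonds'' $u_i,u_{i+1},v_i,v_{i+1},v_{i+2}$ pattern), each of which is visibly quasiplanar.

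\textbf{Main obstacle.} The real work is the brick-classification lemma: with a crossing budget of $3$ the local insertion/rerouting arguments branch into many more cases (which neighbor of $u_{i+1}$, $u_{i+2}$, $v_{x+1}$ is extremal, whether crossing edges share endpoints, etc.), and one must be disciplined to ensure every non-``slim'' brick really does yield either a degree-$1$ vertex or an insertable planar edge or a reroute that strictly increases the edge count. Everything after that — the $2t-2$ count per brick, the summation over the sequence, and the quasiplanarity transfer — is routine and parallels \cref{thm:2planarUpperbound}.
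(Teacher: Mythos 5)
There is a genuine gap, and it sits exactly where you located the ``real work'': the brick-classification lemma you want as your first main step is false for $k=3$. The paper's own lower-bound construction (\cref{thm:3planarLowerbound}) is an optimal $2$-layer $3$-planar graph with $2n-4$ edges in which the only crossing-free edges are $(u_1,v_1)$ and $(u_p,v_p)$: every interior edge $(u_i,v_i)$ is crossed by $(u_{i-1},v_{i+1})$, and every edge $(u_i,v_{i\pm 1})$ is crossed by $(u_{i\pm1},v_i)$, so no other planar edges exist and the whole graph is the single brick $G[1,p\mid 1,p]$, wide on both sides. Hence no analogue of \cref{lem:bricksIn2planarGraphs} forcing $j=i+1$ or $y=x+1$ can be proved — the ``longer but not structurally different'' case analysis you anticipate must fail — and everything downstream (the $2t-2$ count for slim bricks, the observation that a brick with only two $v$-vertices is automatically quasiplanar, the identification of optimal bricks with small diamond pieces) never gets to apply. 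Your pigeonhole argument that three pairwise crossing edges cannot all live between $\{u_i,\dots,u_j\}$ and $\{v_x,v_{x+1}\}$ is correct, but it addresses a configuration that optimal $3$-planar graphs are not forced into.

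The paper's proof avoids bricks entirely and is much shorter. It first shows (\cref{lem:3planarNonquasi}) that if $(u_i,v_y)$, $(u_s,v_t)$, $(u_j,v_x)$ are pairwise crossing with $i<s<j$ and $x<t<y$, then $u_s$ and $v_t$ are jointly incident to at most $3$ edges: any edge at $u_s$ or $v_t$ other than $(u_s,v_t)$ must cross $(u_i,v_y)$ or $(u_j,v_x)$, and each of those two edges has only one crossing left in its budget of three. Deleting $u_s$ and $v_t$ from an optimal graph then removes $2$ vertices and at most $3$ edges; since $a_3\ge 2$ by the lower bound of \cref{thm:3planarLowerbound}, the resulting graph would exceed $a_3n'-b_3$, a contradiction. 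So an optimal graph contains no triple of pairwise crossing edges, is therefore quasiplanar, and \cref{thm:quasiplanarDensity} gives the $2n-4$ bound together with the ``moreover'' clause. If you want to keep a decomposition-based route you would need an edge count valid for bricks that are wide on both sides, which is essentially the whole theorem again.
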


\begin{proof}[Sketch]
We show that optimal $2$-layer $3$-planar graphs are quasiplanar, which implies the statement, by \cref{thm:quasiplanarDensity}. \arxapp{Refer to \cref{app:smallK} for details.}{}
\end{proof}



\subsection{$2$-Layer $4$-Planar Graphs}

We first present a lower bound construction for this class of graphs:

\begin{theorem}
\label{thm:4planarLowerbound}
There exist infinitely many $2$-layer $4$-planar graphs with $n$ vertices and $2n - 3$ edges.
\end{theorem}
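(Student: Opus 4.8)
The statement is a lower bound, so the task is to construct an infinite family of $2$-layer $4$-planar graphs on $n$ vertices with $2n-3$ edges. My plan is to build on the $2n-4$ construction in the proof of \cref{thm:3planarLowerbound}: there $p=q$, every $u_i$ is joined to $v_{i-1},v_i,v_{i+1},v_{i+2}$ (whenever these exist), all interior vertices have degree $4$, the $-4$ term is due entirely to a constant number of low-degree vertices near the two ends of the drawing, and the ``long'' diagonal edges $(u_i,v_{i+2})$ are each crossed exactly $3$ times. Since $4$-planarity permits one extra crossing per edge, the idea is to spend this slack to gain exactly one more edge while keeping every edge crossed at most $4$ times; by the handshaking lemma this means redesigning one end of the drawing so that its low-degree vertices reduce the total degree deficit (relative to an all-degree-$4$ graph) from $8$ to $6$.

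Concretely, I would parameterize the family by $p$ (which already gives infinitely many graphs), retain the band $u_i\sim v_{i-1},v_i,v_{i+1},v_{i+2}$ at all but a constant number of indices, and replace the gadget at one end by a denser gadget on a constant number of vertices --- for example by raising the degree of the extreme vertex from $2$ to $3$ through a carefully placed chord, accompanied by a coordinated relocation of a bounded number of ``long'' diagonal edges so that this chord, and every edge crossing it, stays within $4$ crossings. The edge count $2n-3$ then follows from the explicit degree sequence, and $4$-planarity is verified using the standard fact that in a $2$-layer drawing the edges $(u_a,v_b)$ and $(u_c,v_d)$ cross if and only if $(a-c)(b-d)<0$: in the interior the crossing count of each of the finitely many edge-types is computed once and is at most $4$, so only the bounded number of edges that touch or pass over the redesigned end require an explicit case check.

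I expect the main obstacle to be exactly this end-gadget design. In the natural $2$-layer drawing of the base $2n-4$ construction, every non-edge crosses at least five existing edges --- at least six of them when the non-edge lies in the interior --- so that construction is already $4$-planar-\emph{saturated}; one therefore cannot simply insert an edge, and a valid denser gadget must alter several edges at once. The delicate point is to ensure that none of the already heavily crossed ``long'' diagonal edges adjacent to the gadget is pushed from $3$ up to $5$ crossings. Once a correct gadget is found, the remaining steps --- fixing the drawing, counting the edges, and checking the uniform interior crossings --- are routine.
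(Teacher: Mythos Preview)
Your proposal is a plan, not a proof: the one step that carries all the content---the explicit end gadget that gains one edge while keeping every edge at $\le 4$ crossings---is left undone, and you yourself flag it as ``the main obstacle.'' Your saturation observation is correct (for instance, the natural candidate $(u_p,v_{p-2})$ already meets five existing edges), so no single insertion works; any fix must rearrange several edges simultaneously. Until that rearrangement is exhibited and its crossings are checked, the argument does not establish the bound.

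The paper avoids this difficulty by using a different and much simpler construction: a chain of $K_{3,3}$-bricks in which consecutive bricks share a crossing-free edge. In the unique $2$-layer drawing of $K_{3,3}$ the two outer edges are planar and every inner edge has at most four crossings, so gluing $\beta$ bricks along their planar edges keeps the bricks combinatorially independent and the whole drawing is $4$-planar by inspection. This yields $n=4\beta+2$ vertices and $m=8\beta+1=2n-3$ edges directly, with no end-gadget engineering; the ``extra'' edge over $2n-4$ is distributed uniformly across the bricks rather than concentrated at one end.
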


\begin{proof}
\begin{figure}[t]
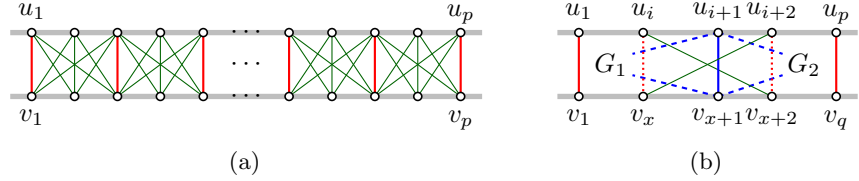

\centering
  \begin{subfigure}[b]{0.6\textwidth}
\centering \includegraphics[scale=1,page=12]{figures}
\caption{}
\label{fig:4planarLowerbound}
\end{subfigure}
\hfil
  \begin{subfigure}[b]{0.35\textwidth}
\centering 
\includegraphics[scale=1,page=13]{figures}
\caption{}
\label{fig:4planarUpperbound}
\end{subfigure}
\caption{(a)~A family of $4$-planar graphs on $n=2p$ vertices with $2n-3$ edges. (b)~A triple of pairwise crossing edges and at most 4 additional edges separates an optimal $2$-layer $4$-planar graph into graphs $G_1$ and $G_2$.}
\label{fig:4planar}
\end{figure}
%
We describe a family of graphs where $p=q$;
see \cref{fig:4planarLowerbound}.
Each topological graph $G$ consists of a sequence $(b_1,\ldots,b_\beta)$ of $K_{3,3}$-bricks
such that $b_i$ and $b_{i+1}$ share a planar edge for $1 \leq i \leq \beta-1$.
Then $G$ has $n=4\beta+2$ vertices and $m=8\beta+1=2n-3$ edges.
\end{proof}

Next, we provide a matching upper bound.

\begin{restatable}{theorem}{fourPlanarUpperbound}
\label{thm:4planarUpperbound}
Any $2$-layer $4$-planar graph on $n$ vertices has at most $2n - 3$ edges.
\end{restatable}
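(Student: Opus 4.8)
\textbf{Proof proposal for \cref{thm:4planarUpperbound}.}

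The plan is to mirror the strategy used for smaller values of $k$: reduce to the structure of an optimal topological $2$-layer $4$-planar graph $G$ with exactly $a_4 n - b_4$ edges and analyze its bricks. By \cref{lem:twoLayerGraphsNoDegreeOneAndNoBrick} we may assume $G$ has no degree-$1$ vertex and no trivial brick, and by the two observations preceding that lemma $G$ is connected, contains the planar edges $(u_1,v_1)$ and $(u_p,v_q)$, and hence decomposes along its planar edges into a sequence of bricks $(b_1,\dots,b_\beta)$ where consecutive bricks share a single planar edge. It therefore suffices to bound, for each brick $b$ on $n_b$ vertices, the number of non-planar edges it contains; if every brick contributes at most $2n_b - 4$ edges in total (the two shared planar edges plus at most $2n_b-6$ interior edges, counted with the standard sharing bookkeeping from the proof of \cref{thm:2planarUpperbound}), summing over the sequence yields $m \le 2n - 3$. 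The lower bound construction in \cref{thm:4planarLowerbound} tells us the extremal brick is $K_{3,3}$, so the core claim to establish is: \emph{a brick of an optimal $2$-layer $4$-planar graph has at most $8$ edges (equivalently $6$ non-planar edges), with equality only for $K_{3,3}$.}

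The key steps, in order. First, I would show that any brick $G[i,j\,|\,x,y]$ which is not $K_{3,3}$ or a prefix of the $K_{3,3}$-sequence must admit an insertable crossing-free edge or contain a forced degree-$1$ vertex, exactly as in \cref{lem:bricksIn2planarGraphs} but now with the crossing budget raised to $4$. Concretely: $(u_i,v_x)$ and $(u_j,v_y)$ are the only crossing-free edges, so every edge incident to an interior vertex $u_{i+1},\dots$ or $v_{x+1},\dots$ must be crossed; I would use the hint of \cref{fig:4planarUpperbound}, namely that a triple of pairwise crossing edges together with at most four further edges separates an optimal graph into two parts $G_1,G_2$, to argue that a brick too large to be $K_{3,3}$ necessarily contains three pairwise crossing edges whose removal (or the removal of a suitable separating set of size $\le 3+4=7$) splits it into two smaller bricks, each of which can be analyzed inductively; the induction on $n$ (base cases small, as in \cref{lem:rational-density}) then gives the per-brick bound. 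Second, I would verify directly that $K_{3,3}$ is $2$-layer $4$-planar with its canonical drawing (each edge crossed exactly $\le 4$ times) and that no brick on $\ge 7$ vertices can beat the ratio, using the $K_{2,4}$-type exclusion argument of \cref{obs:noK33orK24} adapted upward (e.g.\ $K_{3,4}$ and $K_{2,5}$ fail $4$-planarity in the $2$-layer model). Third, assemble: $n = \sum (n_{b_\ell}) - (\beta-1)\cdot 2$ counting shared planar-edge endpoints, $m = (\beta+1) + \sum (\text{interior edges of } b_\ell)$, and optimize the linear expression exactly as in \cref{thm:2planarUpperbound}; the additive constant works out to $2n-3$ because the two end-bricks each lose one unit relative to the interior pattern.

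The main obstacle I anticipate is the case analysis in the first step: with a crossing budget of $4$, the number of local configurations around a pair of interior vertices $u_{i+1},v_{x+1}$ is considerably larger than in the $2$-planar case, and one must carefully rule out "thick" bricks such as $K_{3,4}$-like or $K_{4,4}$-like fragments that at first glance might seem to pack more than $2n_b-4$ edges. The cleanest route is probably to avoid an exhaustive enumeration and instead lean on the separator statement illustrated in \cref{fig:4planarUpperbound} — if every non-trivial configuration yields a separating set consisting of three mutually crossing edges plus $\le 4$ edges, then a single induction handles everything and the only residual work is checking the handful of genuinely indivisible small bricks ($K_{2,2},K_{2,3},K_{3,3}$) by hand. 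A secondary subtlety is handling bricks that are "one-sided" (one side of the bipartition has a single vertex, so they are $K_{2,h}$): these must be shown to have $h\le 3$ by the $K_{2,4}$ exclusion, contributing at most $2n_b-5$ edges and thus never being extremal, so they can only appear as non-critical fillers.
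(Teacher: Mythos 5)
Your instinct to use the separator around a triple of pairwise crossing edges together with an induction is exactly the engine of the paper's proof, but the brick-based packaging you wrap around it contains two genuine errors. First, the arithmetic of your core claim is wrong: the $K_{3,3}$-brick has $9$ edges on $6$ vertices, i.e.\ $2n_b-3$ edges (two planar plus seven crossed), not $8$ edges or $2n_b-4$. If every brick really had at most $2n_b-4$ edges, the bookkeeping you cite ($n=\sum_\ell n_{b_\ell}-2(\beta-1)$ and $m=\sum_\ell m_{b_\ell}-(\beta-1)$) would give $m\le 2n-\beta-3\le 2n-4$, contradicting \cref{thm:4planarLowerbound}. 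So the per-brick bound you would actually have to prove is $2n_b-3$, and proving that for an arbitrary brick is exactly as hard as the theorem itself --- which exposes the second error: bricks in $2$-layer $4$-planar graphs need not be small and need not contain a triple of pairwise crossing edges. The family of \cref{thm:3planarLowerbound} is $4$-planar, quasiplanar, and has no crossing-free edge other than $(u_1,v_1)$ and $(u_p,v_q)$, so it is a \emph{single} brick of unbounded size with $2n-4$ edges. Hence your step ``a brick too large to be $K_{3,3}$ necessarily contains three pairwise crossing edges'' is false, and the residual work is not ``a handful of indivisible small bricks'' but the entire class of quasiplanar graphs; the correct base case is \cref{thm:quasiplanarDensity} ($m\le 2n-4$, with the degenerate $n=2$ case giving $1=2n-3$), not a finite hand-check and not an induction on $n$ with small base cases.

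Once these are repaired, the brick layer evaporates, and you land on the paper's argument. The paper works directly on an optimal graph $G$ with $a_4n-b_4$ edges: if $G$ contains a triple of pairwise crossing edges $(u_i,v_y)$, $(u_s,v_t)$, $(u_j,v_x)$, it first shows the triple is tight ($s=i+1$, $j=i+2$, $t=x+1$, $y=x+2$), because an extra vertex strictly between $u_i$ and $u_j$ would force three vertices carrying at most five edges in total, whose removal yields a denser graph. Then deleting at most seven edges --- the at most five edges incident to $u_{i+1}$ or $v_{x+1}$ together with $(u_i,v_{x+2})$ and $(u_{i+2},v_x)$ --- separates $G$ into the induced subgraphs $G_1$ and $G_2$ of \cref{fig:4planarUpperbound} plus two isolated vertices, so $n=n_1+n_2+2$ and, by induction on the \emph{number of triples} (not on $n$, and with $G_1,G_2$ arbitrary induced subgraphs rather than bricks), $m\le(2n_1-3)+(2n_2-3)+7=2n-3$. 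You have all the needed pieces; drop the brick decomposition, replace the constant per-brick bound by the quasiplanar base case, and run the induction on triples.
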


\begin{proof}[Sketch]
%
%
We first prove that in an optimal topological $2$-layer $4$-planar graph $G$, every triple of pairwise crossing edges is such that removing the triple and at most four other edges separates $G$ into two subgraphs $G_1$ and $G_2$ as shown in \cref{fig:4planarUpperbound}. Based on this observation, we apply induction on the number of such triples in $G$. Note that in the base case, i.e., no triples of pairwise crossing edges exist, the graph is quasiplanar. \arxapp{Refer to \cref{app:smallK} for details.}{}
\end{proof}

\subsection{$2$-Layer $5$-Planar Graphs}

We first provide a lower bound construction for this class of graphs:

\begin{theorem}
\label{thm:5planarLowerbound}
There exist infinitely many $2$-layer $5$-planar graphs with $n$ vertices and  $\frac{9}{4}n-\frac{9}{2}$ edges.
\end{theorem}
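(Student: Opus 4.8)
The plan is to build the extremal $2$-layer $5$-planar graphs as sequences of ``bricks'' joined at planar edges, exactly in the spirit of the constructions for $k\in\{2,3,4\}$ above; the only new ingredient is to find the right single brick whose edge-to-vertex ratio achieves $9/4$. A $K_{2,3}$ brick used for $k=2$ contributes $4$ non-planar edges per $1$ new vertex (plus one shared planar edge per $2$ shared vertices), and a $K_{3,3}$ brick used for $k=4$ contributes $8$ non-planar edges per $2$ new vertices. For $k=5$ the target slope $9/4$ suggests a brick on $4$ internal (non-shared) vertices contributing $9$ non-planar edges; concretely I would take a brick on $3+3$ vertices $u_i,u_{i+1},u_{i+2}$ and $v_x,v_{x+1},v_{x+2}$ whose drawing contains the two planar boundary edges $(u_i,v_x)$ and $(u_{i+2},v_{x+2})$ together with $9$ further $y$-monotone edges, each crossed at most $5$ times, and such that each boundary edge is crossing-free so that two consecutive bricks can be glued along it.

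The key steps, in order, are: (1) exhibit one brick $B$ — I expect the natural candidate to be the complete bipartite graph on the $3+3$ vertices minus some edges, drawn with the standard ``fan'' crossing pattern, keeping only enough edges that the max-crossing count is $5$ and the two extreme edges stay planar; (2) verify directly from the drawing that every edge of $B$ has at most $5$ crossings (this is a finite check over the at most $\binom{9}{2}$ pairs of non-boundary edges plus their crossings with boundary edges); (3) concatenate $\beta$ copies of $B$, identifying the right planar boundary edge of $b_i$ with the left planar boundary edge of $b_{i+1}$, and observe that gluing at a crossing-free edge creates no new crossings, so $5$-planarity is preserved globally; (4) count: if each brick has $c$ internal vertices and $e$ internal (non-shared) edges and the chain has $\beta+1$ planar edges, then $n = c\beta + 2$ and $m = e\beta + (\beta+1)$, and solving for $\beta$ in terms of $n$ gives $m = \frac{e+1}{c}\,n - \frac{e+1}{c}\cdot 2 + 1$; choosing the brick so that $\frac{e+1}{c} = \frac{9}{4}$ and checking the additive constant works out to $-\frac{9}{2}$ yields the claim for the infinitely many values $n = c\beta+2$.

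The main obstacle is step~(1)–(2): designing an explicit $2$-layer drawing of the single brick in which the two extreme edges are planar (needed for concatenation) while packing in as many edges as possible without any edge exceeding $5$ crossings. One must choose the $x$-orderings of $u_i,u_{i+1},u_{i+2}$ and $v_x,v_{x+1},v_{x+2}$ and decide which of the (up to) nine ``long'' edges to include so that the crossing load is spread evenly; in particular the most central edges are the bottleneck, and a naive choice will push some edge above $5$ crossings or else waste capacity and fall short of slope $9/4$. I would fix this by a short case analysis on the drawing (which is what \cref{fig:4planarLowerbound} does for $k=4$), presenting the brick together with an annotated figure listing the crossing count of each edge, and then the remaining arithmetic in steps~(3)–(4) is routine.
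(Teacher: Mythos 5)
There is a concrete gap in steps~(1)--(2): the brick you are looking for cannot exist. A brick on $3+3$ vertices is a subgraph of $K_{3,3}$ and therefore has at most $9$ edges, but you ask for two planar boundary edges \emph{plus} nine further edges, i.e.\ $11$ edges in total. More robustly, your own counting in step~(4) shows that a pure concatenation of bricks with slope $\frac{e+1}{c}=\frac{9}{4}$ would yield $m=\frac{9}{4}n-\frac{2(e+1)}{c}+1=\frac{9}{4}n-\frac{7}{2}$ edges (not $\frac{9}{4}n-\frac{9}{2}$ as you assert), which for large $n$ exceeds the matching upper bound of \cref{thm:5planarUpperbound}; hence no such brick exists for \emph{any} brick size, and the additive constant can never ``work out to $-\frac{9}{2}$'' within this scheme. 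The two constants $\frac{9}{4}$ and $-\frac{9}{2}$ are simply not simultaneously realizable by edges that all live inside individual bricks.

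The paper's construction resolves exactly this tension by making the extra edges span \emph{between} consecutive bricks rather than inside them: it takes the chain of $\beta$ $K_{3,3}$-bricks from \cref{thm:4planarLowerbound} (with $n=4\beta+2$ and $8\beta+1$ edges) and adds a path of length $\beta-1$ whose edges cross from one brick into the next (the dashed edges in \cref{fig:5planarLowerbound}), raising the crossing number of some edges from $4$ to $5$. Because there are $\beta-1$ junctions but $\beta$ bricks, this contributes one fewer edge than a per-brick gadget would, giving $m=8\beta+1+(\beta-1)=9\beta=\frac{9}{4}n-\frac{9}{2}$ exactly. If you want to salvage your write-up, replace the search for a denser brick by this augmentation of the $K_{3,3}$-chain and verify $5$-planarity of the augmented drawing; the rest of your step~(3)--(4) bookkeeping then goes through with the corrected edge count.
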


\begin{proof}
\begin{figure}[t]
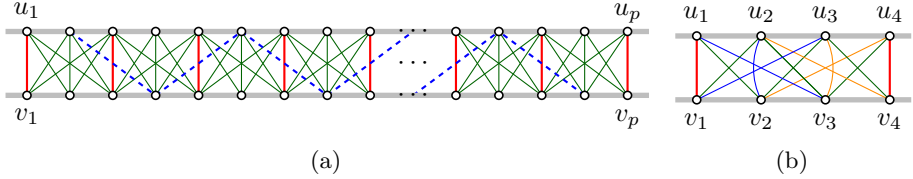

\centering
  \begin{subfigure}[b]{0.7\textwidth}
\centering \includegraphics[scale=1,page=14]{figures}
\caption{}
\label{fig:5planarLowerbound}
\end{subfigure}
\hfil
  \begin{subfigure}[b]{0.28\textwidth}
\centering 
\includegraphics[scale=1,page=41]{figures}
\caption{}
\label{fig:thomasSpecialGraph}
\end{subfigure}
\caption{(a)~A family of $5$-planar graphs on $n=2p$ vertices with $\frac{9}{4}n-\frac{9}{2}$ edges. (b)~Graph $\mathcal{S}$ with $n=8$ vertices and $m=
14 > \frac{9}{4}\cdot8-\frac{9}{2}=13.5$ edges.}
\label{fig:5planarLowerbounds}
\end{figure}
We augment the construction from \cref{thm:4planarLowerbound}
by a path of length $\beta-1$, where $\beta$ is the number of $K_{3,3}$ subgraphs;
see the dashed blue edges in \cref{fig:5planarLowerbound}.
The obtained graph has $n=4\beta+2$ vertices and $m=9\beta=\frac{9}{4}n-\frac{9}{2}$ edges.
\end{proof}

For the specific value $n=8$, we can provide a denser lower bound construction.

\begin{observation}\label{obs:thomasSpecialGraph}
There exists a topological $2$-layer $5$-planar graph $\mathcal{S}$ with $n=8$ vertices and $m=
14 > \frac{9}{4}n-\frac{9}{2}$ edges; see \cref{fig:thomasSpecialGraph}.
\end{observation}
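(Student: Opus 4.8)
The plan is to construct the graph $\mathcal{S}$ explicitly and verify by hand that it has the claimed properties. Since $n=8$ is a fixed small value, this is essentially a finite check: I need eight vertices, say $u_1,\dots,u_4$ on $L_u$ and $v_1,\dots,v_4$ on $L_v$, fourteen $y$-monotone edges, and a drawing witnessing that no edge is crossed more than five times. The natural starting point is to take the densest-known symmetric $4$-planar building block --- a single $K_{3,3}$-brick from \cref{thm:4planarLowerbound} has $8$ edges on $6$ vertices; here instead I would aim for a configuration on $4+4$ vertices that packs in as many edges as possible. A promising candidate is $K_{4,4}$ minus two edges, or a near-complete bipartite graph: $K_{4,4}$ has $16$ edges, and one checks that the convex/2-layer drawing of $K_{4,4}$ has an edge (namely the ``diagonal'' $(u_1,v_4)$ or $(u_4,v_1)$) crossed $9$ times, which is too much, so we must delete edges to bring every crossing count down to $5$. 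Deleting the two most-crossed edges (the two longest diagonals $(u_1,v_4)$ and $(u_4,v_1)$, say, or whichever pair the chosen linear orders make worst) should suffice, leaving $14$ edges; then I would recount crossings on each remaining edge in the resulting drawing.

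The key steps, in order, are: (1) fix the left-to-right orders of $U$ and $V$ on the two layers and list the $14$ edges as pairs $(u_a,v_b)$; (2) recall that in a $2$-layer drawing two independent edges $(u_a,v_b)$ and $(u_c,v_d)$ cross if and only if $a<c$ and $b>d$ or $a>c$ and $b<d$, so crossings are completely determined by the combinatorics of the chosen orders --- there is no drawing freedom to worry about beyond the two permutations; (3) for each of the $14$ edges, count the number of listed edges that form an inversion with it, and confirm the maximum is $\le 5$; (4) observe $14 = 2n-2 > \tfrac94 n - \tfrac92 = 13.5$ for $n=8$, completing the proof. It is worth double-checking that the graph is simple and bipartite (automatic from the construction) and that the drawing can indeed be realized with $y$-monotone arcs (automatic, since any fixed pair of vertex orders on the two lines is realizable).

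The main obstacle is the crossing-count bookkeeping: with $14$ edges there are $\binom{14}{2}=91$ pairs to consider, and one must be careful that the symmetry of the construction (if it is chosen $180^\circ$-rotationally symmetric, e.g. $u_i \leftrightarrow v_{5-i}$) genuinely halves the work, and that the deleted edges are exactly the ones that reduce every per-edge crossing count to at most $5$ and not, say, one edge ending up at $6$. If the first candidate ($K_{4,4}$ minus the two long diagonals) fails the bound on some edge, the fallback is to delete a different pair of edges, or to slightly perturb the vertex orders; since the space of $4!\times 4! = 576$ order pairs and $\binom{16}{2}=120$ edge-deletion choices is tiny, an exhaustive search (by hand or trivially by computer) is guaranteed to either find the claimed $\mathcal{S}$ or show it does not exist --- and the figure referenced as \cref{fig:thomasSpecialGraph} already asserts its existence, so this is really just a matter of transcribing that drawing and verifying it. I would present the final proof simply as: ``Consider the graph $\mathcal{S}$ drawn in \cref{fig:thomasSpecialGraph}; it has $8$ vertices and $14$ edges, and one checks directly from the drawing that every edge is crossed at most $5$ times, while $14 > \tfrac94\cdot 8 - \tfrac92 = 13.5$.''
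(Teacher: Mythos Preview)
Your approach is correct and matches the paper's: the paper simply exhibits the drawing in \cref{fig:thomasSpecialGraph} without further argument, and your concrete candidate --- $K_{4,4}$ on $u_1,\dots,u_4$ and $v_1,\dots,v_4$ with the two long diagonals $(u_1,v_4)$ and $(u_4,v_1)$ deleted --- is exactly that graph (the paper later uses that $(u_4,v_4)$ is crossing-free and that $(u_2,v_4)$, $(u_4,v_2)$ each have five crossings in $\mathcal{S}$, which your construction realizes). A direct count confirms every remaining edge has at most five crossings, so no fallback search is needed.
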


We show that the graph $\mathcal{S}$ is in fact an exception, by demonstrating that the lower bound construction in \cref{thm:5planarLowerbound} is tight for all other values of $n$.
%


\begin{restatable}{theorem}{fivePlanarUpperbound}
\label{thm:5planarUpperbound}
Any $2$-layer $5$-planar graph on $n \geq 3$ vertices has at most $\frac{9}{4}n - \frac{9}{2}$ edges, except for graph $\mathcal{S}$ which has $8$ vertices and $14$ edges.
\end{restatable}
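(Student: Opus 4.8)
\textbf{Proof proposal for \cref{thm:5planarUpperbound}.}
The plan is to follow the blueprint established for the smaller values of $k$: restrict attention to an optimal topological $2$-layer $5$-planar graph $G$ and show that, unless $G\cong\mathcal{S}$, it decomposes into bricks of bounded size glued along planar edges, from which the density follows by a counting argument as in the proof of \cref{thm:2planarUpperbound}. First I would invoke \cref{lem:twoLayerGraphsNoDegreeOneAndNoBrick} to rule out degree-$1$ vertices and trivial bricks in $G$, and \cref{obs:thomasSpecialGraph} as the sole documented exception we must carve out. The key structural step is to understand the ``densest gadget'': the $5$-planar analogue of the $K_{3,3}$-brick used in the lower bound of \cref{thm:5planarLowerbound}, augmented by the extra path edge. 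I would argue that every brick $G[i,j|x,y]$ of $G$, being crossing-free on its two boundary edges, is sandwiched between at most $5+5$ crossings on those boundary edges, forcing $j-i$ and $y-x$ to be small; a careful case analysis — analogous to but heavier than \cref{lem:bricksIn2planarGraphs} — should show that each brick has at most $4$ vertices on each side, i.e.\ is a subgraph of $K_{3,3}$ or of $K_{3,3}$ plus one path edge (or a slightly larger but no denser configuration), and that the only way to beat the $\frac94$ ratio locally is the sporadic $8$-vertex configuration $\mathcal{S}$.

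Concretely, I would proceed in the following order. (1) Set up the brick decomposition: the planar edges of $G$ (which exist by the observation that optimal graphs contain $(u_1,v_1)$ and $(u_p,v_q)$ as planar edges) cut $G$ into a sequence $(b_1,\dots,b_\beta)$ of bricks sharing consecutive planar edges, exactly as in \cref{fig:pieceSequence}. (2) Bound the size and edge count of a single brick: if a brick $b$ on vertex set $\{u_i,\dots,u_j\}\cup\{v_x,\dots,v_y\}$ has $\sigma$ internal vertices (i.e.\ all but the four corner vertices, counted with the sharing convention) then it contributes $\sigma$ to $n$ and I claim it contributes at most $\frac94\sigma + O(1)$ non-shared edges; the worst case is the $K_{3,3}$-plus-path gadget, contributing $2$ internal vertices and, say, $5$ new edges beyond the shared planar edge, matching the $\frac94$ slope. (3) Handle the boundary/additive terms: summing over all $\beta$ bricks and accounting for the $\beta+1$ planar edges gives $m\le\frac94 n - \frac92$ for $n$ large, with the constant tightened by \cref{lem:rational-density} (which already guarantees the bound has the form $a_5 n - b_5$ with $a_5\ge\frac32$, so it suffices to pin down $a_5=\frac94$ and then read off $b_5$ from the extremal construction). (4) Treat the small-$n$ residual cases by hand, where the global asymptotic argument is not yet decisive, and verify that $n=8$, $m=14$ is realized only by $\mathcal{S}$ and nowhere else; this is where \cref{obs:thomasSpecialGraph} is shown to be a genuine exception rather than the start of a family.

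The main obstacle I anticipate is step (2): unlike the $k=2$ case, where a brick forced by \cref{lem:bricksIn2planarGraphs} to be a $K_{2,h}$ and then by \cref{obs:noK33orK24} to have $h\le3$, the $5$-planar bricks admit genuinely two-dimensional configurations (both partition classes can grow inside a brick), so the case analysis has many more subcases and one must rule out ``balanced'' dense bricks that are neither $K_{2,h}$ nor $K_{3,3}$-like. The crucial quantitative lemma — that no brick (except the one hidden in $\mathcal{S}$) achieves a local edge-to-internal-vertex ratio exceeding $\frac94$ — will require an exchange argument in the spirit of \cref{fig:2planarBricks4,fig:2planarBricks5}: whenever a brick is ``too dense'', reroute or swap a constant number of edges to either reduce crossings (contradicting that the boundary edges are planar) or expose a chord that can be added (contradicting optimality), or else recognize that the local structure is exactly $\mathcal{S}$. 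A secondary subtlety is bookkeeping the shared planar edges so that the additive constant comes out to exactly $-\frac92$ and the sporadic exception at $n=8$ is cleanly isolated; I would double-check this against the construction in \cref{thm:5planarLowerbound} and against $\mathcal S$ directly.
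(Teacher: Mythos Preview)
Your plan diverges from the paper's proof in a fundamental way, and the divergence exposes a genuine gap.

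The paper does \emph{not} use a brick decomposition for $k=5$ (nor for $k=3,4$). Instead it runs an induction on the number of (maximal) triples of pairwise crossing edges. Given such a triple $(u_i,v_y),(u_s,v_t),(u_j,v_x)$, one first shows --- by a ``remove four vertices and at most seven edges'' density-contradiction argument using $a_5\ge\tfrac94$ --- that at most one extra vertex can sit strictly between $u_i,u_j$ or between $v_x,v_y$. This forces the triple (plus at most a constant number of further edges) to separate $G$ into two pieces $G_1,G_2$; one then applies the induction hypothesis to each piece, handling as special subcases $G_1\cong\mathcal S$, $n_1=2$, and the situation where $G_2$ itself has a constrained boundary (cf.\ \cref{fig:5planarUpperbound1,fig:5planarUpperbound2,fig:5planarUpperbound3}). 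The base case is quasiplanarity, where \cref{thm:quasiplanarDensity} gives $2n-4\le\tfrac94 n-\tfrac92$ for $n\ge 3$.

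Your brick approach, in contrast, needs many planar edges to cut $G$ into small pieces. But the very lower-bound family in \cref{thm:5planarLowerbound} shows this can fail: the extra path of length $\beta-1$ added on top of the $K_{3,3}$-bricks consists of edges that go from one brick into the next, and each such edge necessarily crosses the formerly planar shared edge $(u_{2i+1},v_{2i+1})$. Hence in the optimal construction only the two extreme edges $(u_1,v_1)$ and $(u_p,v_q)$ remain crossing-free, and your step~(1) yields a single brick containing all $n$ vertices. Step~(2) then degenerates into the statement of the theorem itself, so no progress has been made. In short, the brick decomposition that drives the $k=2$ argument simply does not exist in the $5$-planar optimum, and the ``crucial quantitative lemma'' you flag as the main obstacle is not merely hard --- it is vacuous in the form you state it. The paper's separator-at-a-crossing-triple mechanism is precisely what replaces the missing planar edges.
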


\begin{proof}[Sketch]
\begin{figure}[t]
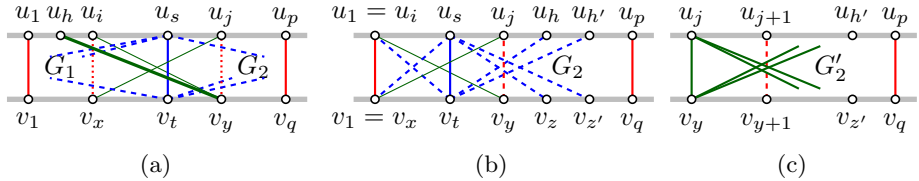

  \centering
  \begin{subfigure}[b]{0.35\textwidth}
\centering \includegraphics[scale=1,page=15]{figures}
\caption{}
\label{fig:5planarUpperbound1}
\end{subfigure}
\hfil
  \begin{subfigure}[b]{0.355\textwidth}
\centering \includegraphics[scale=1,page=16]{figures}
\caption{}
\label{fig:5planarUpperbound2}
\end{subfigure}
\hfil
  \begin{subfigure}[b]{0.275\textwidth}
\centering \includegraphics[scale=1,page=42]{figures}
\caption{}
\label{fig:5planarUpperbound3}
\end{subfigure}
  \caption{
    (a)~A triple $(u_i,v_y)$, $(u_s,v_t)$, $(u_j,v_x)$ of pairwise crossing edges  and at most six other edges separates an optimal $2$-layer $5$-planar graph into subgraphs $G_1$ and $G_2$.
    If $G_1$ consists of a single edge, (b)~there can be edges $(u_s,v_z)$, $(u_s,v_{z'})$, $(v_t,u_h)$, $(v_t,u_{h'})$, in which case  
    (c)~$G_2$ consists of a graph $G_2'$, vertices $u_j$, $v_y$ and at most four of the green edges.}
\label{fig:5planarUpperbound}
\end{figure}
First observe that the theorem is clearly fulfilled if $G=\mathcal{S}$. Otherwise, we apply an argument similar to the proof of \cref{thm:4planarUpperbound}. Namely, we first prove that if there is a triple of pairwise crossing edges in an optimal topological $2$-layer $5$-planar graph, the removal of few edges separates the graph into two components $G_1$ and $G_2$; see \cref{fig:5planarUpperbound1}. We then apply induction on the number of such triples in $G$. In particular, we consider some special cases, namely $G_1$ could be $\mathcal{S}$ or a single edge; see also \cref{fig:5planarUpperbound2}. In the latter case, we also investigate the structure of graph $G_2$ in more careful detail to prove our result; see also \cref{fig:5planarUpperbound3}. \arxapp{Refer to \cref{app:smallK} for more details.}{}
\end{proof}

\section{A Crossing Lemma and General Density Bounds} 
\label{sec:densityLargeK}
In this section we generalize the well-known Crossing Lemma~\cite{ajtai82,ErGu73,Le83}
to a meta Crossing Lemma for general graphs (\cref{thm:crossingLemma}), which also yields a density upper bound
for $k$-planar graphs.
We denote by $\mathcal{R}$ a restriction on graphs, e.g., $\mathcal{R}$ can be ``bipartite'' or ``$2$-layer''.
We assume that for a fixed $t>0$, there are $\alpha_i,\beta_i\in \mathbb{R}$ for $i\in \{0,\ldots,t-1\}$ such that $m\le \alpha_i n - \beta_i$ is an upper bound
for the number of edges in $\mathcal{R}$-restricted $i$-planar graphs.
Let $\alpha := \sum_{i=0}^{t-1} \alpha_i$ and $\beta := \sum_{i=0}^{t-1} \beta_i$.
The proof of the next theorem follows the probabilistic technique of 
Chazelle, Sharir and Welzl (see e.g.~\cite[Chapter 35]{AiZi98})\arxapp{; see also \cref{app:densityLargeK}}{}.

\begin{restatable}{theorem}{crossingLemma}
\label{thm:crossingLemma}
Let $G$ be a simple $\mathcal{R}$-restricted graph with $n \geq 4$ vertices and $m \geq \frac{3\alpha}{2t}n$ edges.
The following inequality holds for the crossing number $cr(G)$:
\begin{linenomath}
\begin{equation}
\label{eq:crossingLemma}
cr(G) \geq \frac{4 t^3}{27\alpha^2} \frac{m^3}{n^2}.
\end{equation}
\end{linenomath}
\end{restatable}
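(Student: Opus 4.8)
The statement is a standard "Crossing Lemma from a density bound" argument, made abstract by letting $\mathcal{R}$ be an arbitrary hereditary restriction and by using $t$ separate bounds (one for each $i$-planar class, $i=0,\dots,t-1$) instead of just the planar one. The plan is to follow the probabilistic deletion method of Chazelle–Sharir–Welzl in the version presented in \emph{Proofs from THE BOOK}. First I would record the deterministic starting point: in any drawing of $G$, for each $i\in\{0,\dots,t-1\}$, deleting one edge from each crossing that involves an edge crossed more than $i$ times yields an $i$-planar graph; more usefully, I will use the cleaner bound that any drawing $D$ of $G$ has at least $m-(\alpha_i n-\beta_i)$ edges whose removal is "forced", giving the elementary inequality $cr(D)\ge m-(\alpha_i n - \beta_i)$ when $i=0$ and, summing a suitable family of such inequalities, $cr(G)\ge t\,m-\alpha n+\beta$. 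Since $\beta\ge 0$ we may weaken this to the affine bound
\begin{linenomath}
\begin{equation*}
cr(G)\ \ge\ t\,m-\alpha n.
\end{equation*}
\end{linenomath}
This is the only place the hypotheses on $\mathcal{R}$, $\alpha_i$, $\beta_i$ enter; everything afterwards is pure probability and calculus.

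Next I would run the random-sampling step. Fix $p\in(0,1]$ to be chosen later and let $G_p$ be the random induced subgraph obtained by keeping each vertex independently with probability $p$; restrict the given drawing of $G$ to $G_p$. Because $\mathcal{R}$ is closed under taking induced subgraphs (a "restriction" in the paper's sense), $G_p$ is again $\mathcal{R}$-restricted, so the affine bound applies to $G_p$: $cr(G_p)\ge t\,|E(G_p)|-\alpha\,|V(G_p)|$. Taking expectations and using $\mathbb{E}[|V(G_p)|]=pn$, $\mathbb{E}[|E(G_p)|]=p^2m$, and $\mathbb{E}[cr(G_p)]\le p^4 cr(G)$ (a crossing survives iff its four endpoints all survive), gives
\begin{linenomath}
\begin{equation*}
p^4\,cr(G)\ \ge\ t\,p^2 m-\alpha\,p n,
\qquad\text{i.e.}\qquad
cr(G)\ \ge\ \frac{t\,m}{p^2}-\frac{\alpha\,n}{p^3}.
\end{equation*}
\end{linenomath}

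Finally I would optimize over $p$. Treating the right-hand side as a function of $p$ and differentiating, the optimum is at $p=\frac{3\alpha n}{2t m}$; the hypothesis $m\ge \frac{3\alpha}{2t}n$ guarantees exactly that this value lies in $(0,1]$, which is the reason that assumption appears in the statement. Substituting $p=\frac{3\alpha n}{2tm}$ back in and simplifying the resulting expression yields $cr(G)\ge \frac{4t^3}{27\alpha^2}\cdot\frac{m^3}{n^2}$, which is \eqref{eq:crossingLemma}. (The condition $n\ge 4$ is only needed so that the crossing-number inequalities are not vacuous and $p^4 cr(G)$ is the correct bound on $\mathbb{E}[cr(G_p)]$.)

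I do not expect a genuine obstacle here: the argument is textbook once the affine bound $cr(G)\ge t\,m-\alpha n$ is in place. The one point that needs a little care — and the only place the generalization over the classical statement matters — is justifying that bound from the $t$ hypotheses $m\le\alpha_i n-\beta_i$: concretely, one fixes an optimal drawing, lets $c=cr(G)$, and argues that for each $i$ one can destroy all edges crossed $>i$ times by deleting at most $\lceil c/(i+1)\rceil$ edges in a way that is compatible across the levels, so that deleting a single well-chosen set of $\le c$ edges total leaves, at level $i$, a graph with $\le \alpha_i n-\beta_i$ edges; telescoping over $i=0,\dots,t-1$ then gives $m\le c+\alpha n-\beta$. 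Making this counting fully rigorous (rather than the hand-wave "delete one edge per crossing") is the part I would write most carefully, but it is routine and is exactly the step deferred to the cited BOOK chapter.
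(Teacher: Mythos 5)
Your proposal is correct and follows essentially the same route as the paper: the paper isolates your affine bound $cr(G)\ge tm-\alpha n+\beta$ as a separate auxiliary lemma and then runs the identical Chazelle--Sharir--Welzl sampling argument with $\pi=\frac{3\alpha n}{2tm}$, dropping the $+\beta$ term after taking expectations exactly as you do, and optimizing to get $\frac{4t^3}{27\alpha^2}\frac{m^3}{n^2}$. The one wobble is in your closing sketch of that auxiliary bound: ``telescoping'' to $m\le c+\alpha n-\beta$ would yield only $cr(G)\ge m-\alpha n+\beta$, losing the factor $t$ in front of $m$; the accounting the paper uses is that, for each $i\in\{0,\dots,t-1\}$, at least $m-(\alpha_i n-\beta_i)$ edges carry more than $i$ crossings, and these $t$ lower bounds are summed.
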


The meta Crossing Lemma is used to obtain the following theorem regarding the density. We follow closely the proof for corresponding
statements for $k$-planar and bipartite $k$-planar graphs~\cite{DBLP:journals/corr/Ackerman15,DBLP:conf/isaac/AngeliniB0PU18}\arxapp{ in \cref{app:densityLargeK}}{}.

\begin{restatable}{theorem}{kPlanarUpperbound}
\label{thm:kPlanarUpperbound}
Let $G$ be a simple $\mathcal{R}$-restricted $k$-planar graph with $n \geq 4$ vertices for some $k \ge t$.
Then
\begin{linenomath}
$$m \leq   \max \left\{ 1, \sqrt{\frac{3}{2t}}\sqrt{k} \right\}  \cdot \frac{3\alpha}{2t}n. $$
\end{linenomath}
\end{restatable}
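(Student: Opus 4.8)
The plan is to run the classical ``Crossing Lemma versus trivial crossing bound'' argument, now with the meta Crossing Lemma (\cref{thm:crossingLemma}) in place of the usual one, splitting into a sparse and a dense case according to the threshold $\frac{3\alpha}{2t}n$. In the \emph{sparse case} $m < \frac{3\alpha}{2t}n$ there is nothing to prove, since then $m < \frac{3\alpha}{2t}n \le \max\{1,\sqrt{\tfrac{3}{2t}}\sqrt{k}\}\cdot\frac{3\alpha}{2t}n$ already via the ``$1$'' branch of the maximum. So the work is entirely in the remaining case $m \ge \frac{3\alpha}{2t}n$.

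In the \emph{dense case} I would bound $cr(G)$ from below and from above and compare. For the lower bound, observe that $G$ is simple, $\mathcal{R}$-restricted, has $n \ge 4$ vertices, and satisfies $m \ge \frac{3\alpha}{2t}n$; hence \cref{thm:crossingLemma} applies and gives $cr(G) \ge \frac{4t^3}{27\alpha^2}\frac{m^3}{n^2}$. For the upper bound, fix a topological $\mathcal{R}$-restricted $k$-planar drawing of $G$: each of the $m$ edges is crossed at most $k$ times and each crossing is charged to exactly two edges, so that drawing has at most $\frac{km}{2}$ crossings and therefore $cr(G) \le \frac{km}{2}$. Chaining the two inequalities yields $\frac{4t^3}{27\alpha^2}\frac{m^3}{n^2} \le \frac{km}{2}$; dividing by $m > 0$ and isolating $m$ gives $m^2 \le \frac{27\alpha^2 k}{8t^3}\,n^2$, i.e. $m \le \sqrt{\tfrac{27}{8}}\cdot\frac{\alpha}{t^{3/2}}\sqrt{k}\,n$, which equals $\sqrt{\tfrac{3}{2t}}\sqrt{k}\cdot\frac{3\alpha}{2t}\,n$ after rewriting $\sqrt{27/8}\,/\,t^{3/2} = \frac{3}{2t}\sqrt{3/(2t)}$. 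This is $\le \max\{1,\sqrt{\tfrac{3}{2t}}\sqrt{k}\}\cdot\frac{3\alpha}{2t}n$, completing the proof. (Since $k \ge t$, one in fact has $\sqrt{\tfrac{3}{2t}}\sqrt{k} \ge \sqrt{3/2} > 1$, so in the relevant regime the maximum is always attained by the second term; the ``$1$'' is only needed to absorb the sparse case uniformly.)

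I do not expect a serious obstacle: the statement is essentially a black-box consequence of \cref{thm:crossingLemma}. The points requiring care are that the threshold $m \ge \frac{3\alpha}{2t}n$ invoked to apply \cref{thm:crossingLemma} is exactly the case boundary (so the two cases jointly cover all $G$), that $k$-planarity is turned into $cr(G) \le km/2$ via the double counting of crossings (and not the weaker $cr(G)\le km$), and that the constant $\sqrt{27/8}$ emerging from the cubic inequality is correctly massaged into the stated form $\frac{3\alpha}{2t}\sqrt{\frac{3k}{2t}}$.
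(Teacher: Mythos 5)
Your proof is correct and follows essentially the same route as the paper's: the sparse case $m\le\frac{3\alpha}{2t}n$ is absorbed by the ``$1$'' branch, and in the dense case \cref{thm:crossingLemma} is chained with the trivial bound $cr(G)\le\frac{1}{2}km$ coming from the $k$-planar drawing, after which solving the cubic inequality for $m$ gives exactly the constant $\frac{3\alpha}{2t}\sqrt{\frac{3}{2t}}\sqrt{k}$. Your algebra and the remarks on the case boundary and on the factor $\frac{1}{2}$ in the crossing count all check out.
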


We apply \cref{thm:crossingLemma,thm:kPlanarUpperbound} to 2-layer $k$-planar graphs for $t=6$.
By~\cite{DBLP:journals/ipl/Didimo13},
\cref{thm:2planarUpperbound,thm:3planarUpperbound,thm:4planarUpperbound,thm:5planarUpperbound},
we have
$(\alpha_0, \alpha_1, \alpha_2, \alpha_3, \alpha_4, \alpha_5)
=(1,        \tfrac32, \frac53,  2,        2,        \frac94)$,
yielding $\alpha = \frac{125}{12}$. 
By substituting the numbers in \cref{thm:crossingLemma} we obtain the following.

\begin{corollary}
\label{coro:crossingLemma}
Let $G$ be a simple $2$-layer graph with $n \geq 4$ vertices and $m \geq \frac{125}{48}n$ edges.
Then, the following inequality holds for the crossing number $cr(G)$:
\begin{linenomath}
\begin{equation*}
\label{eq:crossingLemmaImproved}
cr(G) \geq \frac{4.608}{15.625} \frac{m^3}{n^2} \approx 0.295 \frac{m^3}{n^2}.
\end{equation*}
\end{linenomath}
\end{corollary}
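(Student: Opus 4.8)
The statement is a direct instantiation of the meta Crossing Lemma (\cref{thm:crossingLemma}) with the restriction $\mathcal{R}$ equal to ``$2$-layer'' and with $t=6$, so the plan is essentially to substitute the right constants and verify the hypotheses. All the ingredients have in fact already been collected in the paragraph preceding the statement: the six linear density bounds for $2$-layer $i$-planar graphs with $i\in\{0,1,2,3,4,5\}$ come from the fact that a crossing-free $2$-layer graph is a forest (hence $\alpha_0=1$), from Didimo's bound on $2$-layer $1$-planar graphs~\cite{DBLP:journals/ipl/Didimo13} (hence $\alpha_1=\tfrac32$), and from \cref{thm:2planarUpperbound,thm:3planarUpperbound,thm:4planarUpperbound,thm:5planarUpperbound} (hence $\alpha_2=\tfrac53$, $\alpha_3=2$, $\alpha_4=2$, $\alpha_5=\tfrac94$), giving $\alpha=\sum_{i=0}^{5}\alpha_i=\tfrac{125}{12}$.

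First I would fix coefficients $\beta_i\ge 0$ so that $m\le \alpha_i n-\beta_i$ is a genuine upper bound for \emph{every} $2$-layer $i$-planar graph, as required by the hypothesis of \cref{thm:crossingLemma}. Five of the six bounds hold unconditionally; the only one that needs care is $i=5$, because \cref{thm:5planarUpperbound} admits the exceptional graph $\mathcal{S}$ with $n=8$ and $m=14$. Since $14\le \tfrac94\cdot 8-4$, taking $\beta_5=4$ repairs this: the inequality $m\le \tfrac94 n-4$ then holds for $\mathcal{S}$ with equality and, a fortiori, for all other $2$-layer $5$-planar graphs, since $\tfrac94 n-4\ge \tfrac94 n-\tfrac92$. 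This adjustment is harmless for the conclusion because the $\beta_i$ do not appear in the statement of \cref{thm:crossingLemma} at all; only $\alpha$ and $t$ do.

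Then I would check the two hypotheses of \cref{thm:crossingLemma} and read off its conclusion. The requirement $n\ge 4$ is assumed; topological $2$-layer graphs may be taken to be simple, matching the simplicity hypothesis; and the edge-count threshold is $\frac{3\alpha}{2t}=\frac{3}{12}\cdot\frac{125}{12}=\frac{125}{48}$, which is exactly the hypothesis $m\ge \frac{125}{48}n$ in the statement. Substituting $t=6$ and $\alpha=\tfrac{125}{12}$ into $cr(G)\ge \frac{4t^3}{27\alpha^2}\cdot\frac{m^3}{n^2}$ gives
\begin{equation*}
\frac{4t^3}{27\alpha^2}=\frac{4\cdot 216}{27\cdot(125/12)^2}=\frac{32\cdot 144}{15625}=\frac{4608}{15625}=\frac{4.608}{15.625}\approx 0.295,
\end{equation*}
which is the claimed inequality. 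There is no genuine obstacle here: the corollary is a one-line specialization of \cref{thm:crossingLemma}, and the only subtlety worth flagging in the write-up is the bookkeeping around the exceptional graph $\mathcal{S}$ described above.
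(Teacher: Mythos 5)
Your proposal is correct and matches the paper's own derivation, which likewise just sets $t=6$, reads off $(\alpha_0,\dots,\alpha_5)=(1,\tfrac32,\tfrac53,2,2,\tfrac94)$ from Didimo's bound and \cref{thm:2planarUpperbound,thm:3planarUpperbound,thm:4planarUpperbound,thm:5planarUpperbound}, and substitutes $\alpha=\tfrac{125}{12}$ into \cref{thm:crossingLemma}. Your explicit repair of the $i=5$ bound via $\beta_5=4$ to absorb the exceptional graph $\mathcal{S}$ is a point the paper passes over silently, and it is a legitimate and welcome clarification since the $\beta_i$ only need to be nonnegative and do not enter the final constant.
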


By plugging the result into \cref{thm:kPlanarUpperbound} we obtain.

\begin{corollary}
\label{coro:kPlanarUpperbound}
Let $G$ be a simple $2$-layer $k$-planar graph with $n \geq 4$ vertices for some $k > 5$. Then
\begin{linenomath}
$$ m \leq \max \left\{ \frac{125}{48}, \frac{125}{96} \sqrt{k} \right\} \cdot n. $$
\end{linenomath}
\end{corollary}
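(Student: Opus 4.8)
The plan is to specialize \cref{thm:kPlanarUpperbound} to the restriction $\mathcal{R} = $ ``$2$-layer'' with the parameter choice $t = 6$, exactly as announced in the paragraph preceding \cref{coro:crossingLemma}. First I would recall the six density bounds we already have for $2$-layer $i$-planar graphs with $0 \le i \le 5$: the case $i=0$ is planar bipartite, which gives $m \le 2n-4$ \emph{but} we only need a bound of the form $\alpha_i n - \beta_i$, and for the Crossing-Lemma machinery the relevant slope is $\alpha_0 = 1$ (this is the slope used for $2$-layer $1$-planar graphs in~\cite{DBLP:journals/ipl/Didimo13}, and for $i=0$ one may use the cruder $m \le n$ for connected bipartite $2$-layer graphs, or simply note $\alpha_0 = 1$ suffices since a looser bound only weakens the hypothesis, not the conclusion); $\alpha_1 = \tfrac32$ by~\cite{DBLP:journals/ipl/Didimo13}; $\alpha_2 = \tfrac53$ by \cref{thm:2planarUpperbound}; $\alpha_3 = 2$ by \cref{thm:3planarUpperbound}; $\alpha_4 = 2$ by \cref{thm:4planarUpperbound}; and $\alpha_5 = \tfrac94$ by \cref{thm:5planarUpperbound}. (The constant additive term for $i=5$ and the exceptional graph $\mathcal{S}$ do not matter: $\mathcal{S}$ has only $n=8$ vertices, and the Crossing-Lemma argument is asymptotic, so for the density statement we may absorb $\mathcal{S}$ into the ``$n \ge 4$'' regime by a direct check or simply note $14 \le \tfrac{125}{48}\cdot 8$.)

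Next I would compute $\alpha = \sum_{i=0}^{5}\alpha_i = 1 + \tfrac32 + \tfrac53 + 2 + 2 + \tfrac94$. Putting everything over $12$: $1 = \tfrac{12}{12}$, $\tfrac32 = \tfrac{18}{12}$, $\tfrac53 = \tfrac{20}{12}$, $2 = \tfrac{24}{12}$, $2 = \tfrac{24}{12}$, $\tfrac94 = \tfrac{27}{12}$, summing to $\tfrac{12+18+20+24+24+27}{12} = \tfrac{125}{12}$, as claimed in the excerpt. Then with $t = 6$ the coefficient $\tfrac{3\alpha}{2t}$ in \cref{thm:kPlanarUpperbound} becomes $\tfrac{3}{12}\cdot\tfrac{125}{12} = \tfrac{125}{48}$, and $\sqrt{\tfrac{3}{2t}} = \sqrt{\tfrac14} = \tfrac12$, so $\sqrt{\tfrac{3}{2t}}\sqrt{k}\cdot\tfrac{3\alpha}{2t} = \tfrac12\sqrt{k}\cdot\tfrac{125}{48} = \tfrac{125}{96}\sqrt{k}$. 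Substituting into the bound $m \le \max\{1,\sqrt{3/(2t)}\sqrt{k}\}\cdot\tfrac{3\alpha}{2t}n$ of \cref{thm:kPlanarUpperbound} yields precisely $m \le \max\left\{\tfrac{125}{48},\ \tfrac{125}{96}\sqrt{k}\right\}\cdot n$, which is the statement. Since \cref{thm:kPlanarUpperbound} requires $k \ge t = 6$, the hypothesis ``$k > 5$'' is exactly what we need.

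The only genuine verification is that each $\alpha_i$ we plug in is a \emph{correct} upper-bound slope for simple $2$-layer $i$-planar graphs on $n \ge 4$ vertices — i.e. that the $2$-layer theorems cited indeed give $m \le \alpha_i n - \beta_i$ with those slopes (and nonnegative $\beta_i$, so the hypotheses of \cref{thm:kPlanarUpperbound} are met). For $i \in \{0,1\}$ this is classical / from~\cite{DBLP:journals/ipl/Didimo13}; for $i \in \{2,3,4,5\}$ it is \cref{thm:2planarUpperbound,thm:3planarUpperbound,thm:4planarUpperbound,thm:5planarUpperbound} of this paper. I do not anticipate a real obstacle here: the result is a direct corollary obtained by arithmetic substitution into \cref{thm:kPlanarUpperbound}. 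The one subtlety worth a sentence in the writeup is the exceptional graph $\mathcal{S}$ from \cref{thm:5planarUpperbound}: one checks $m(\mathcal{S}) = 14 \le \tfrac{125}{48}\cdot 8 \approx 20.8$, so $\mathcal{S}$ satisfies the claimed bound and needs no special treatment.
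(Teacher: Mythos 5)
Your proposal is correct and is exactly the paper's (implicit) argument: the paper derives \cref{coro:kPlanarUpperbound} by substituting $t=6$ and $\alpha=\frac{125}{12}$ (from the tuple $(1,\frac32,\frac53,2,2,\frac94)$) into \cref{thm:kPlanarUpperbound}, and your arithmetic $\frac{3\alpha}{2t}=\frac{125}{48}$, $\sqrt{3/(2t)}=\frac12$ checks out. Your side remark flagging the exceptional graph $\mathcal{S}$ is a reasonable extra precaution that the paper itself does not spell out, but it does not change the argument.
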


Note that for $2$-layer $6$-planar graphs, \cref{coro:kPlanarUpperbound} certifies that $m \leq 3.19n$. We can show that there is only a gap of $0.69n$ towards an optimal solution:

\begin{theorem}
\label{thm:6planarLowerbound}
There exist infinitely many $2$-layer $6$-planar graphs with $n$ vertices and $\frac{5}{2}n-6$ edges.
\end{theorem}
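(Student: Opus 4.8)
The plan is to generalize the lower bound construction from \cref{thm:5planarLowerbound} by adding yet another layer of ``long'' edges to the sequence of $K_{3,3}$-bricks. Recall that in \cref{thm:5planarLowerbound} we started from a chain of $\beta$ many $K_{3,3}$-bricks glued at planar edges (contributing $8\beta+1$ edges on $n=4\beta+2$ vertices) and added a path of length $\beta-1$ skipping from one brick to the next, reaching $9\beta$ edges. For the $6$-planar bound I would instead augment the $K_{3,3}$-chain by two parallel ``skip'' structures: in addition to the path already used for the $5$-planar construction, add a second family of edges that jump across bricks (e.g., connecting a low-indexed vertex of brick $b_i$ to a high-indexed vertex of brick $b_{i+1}$, or directly linking $b_i$ to $b_{i+2}$), chosen so that each new edge crosses at most $6$ edges and, crucially, does not increase the crossing count of the previously placed edges beyond $6$. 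The bookkeeping target is $\Theta(\beta)$ extra edges on the same $n=4\beta+2$ vertices so that $m$ grows from $9\beta$ to $10\beta - c$ for a small constant $c$; since $\beta = \frac{n-2}{4}$, this gives $m = \frac{10}{4}n - \frac{10}{4}\cdot 2 - c = \frac{5}{2}n - 5 - c$, and a careful accounting of the boundary bricks (where some of the extra edges cannot be added because there is no neighbouring brick) should yield exactly the claimed $\frac{5}{2}n - 6$.

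The key steps, in order, are: (1) fix the base chain of $\beta$ $K_{3,3}$-bricks with the explicit vertex labelling $u_1,\dots,u_p$, $v_1,\dots,v_q$ as in \cref{fig:5planarLowerbound}, and recall from \cref{thm:4planarLowerbound,thm:5planarLowerbound} that each existing edge is crossed at most $4$ (brick edges) or at most $5$ (the skip-path edges) times; (2) specify the new set of long edges precisely by their endpoint indices, together with their $y$-monotone routing, so that each is nested just outside the skip-path edges of the previous construction; (3) verify locally, brick by brick, that every new edge is crossed at most $6$ times and that adding them raises the crossing count of each old edge to at most $6$ — this is the only place where $6$-planarity is actually checked; (4) count vertices and edges exactly, being careful about the first and last bricks where the long edges are truncated, to arrive at $n = 4\beta+2$ and $m = \frac{5}{2}n-6$; (5) observe that $\beta$ can be taken arbitrarily large, giving infinitely many such graphs. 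A clean figure analogous to \cref{fig:5planarLowerbounds} will carry most of the verification burden.

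The main obstacle I anticipate is step (3): the skip-path edges from the $5$-planar construction are already crossed $5$ times, so the new long edges must be routed to add \emph{at most one} crossing to each of them, and simultaneously the new edges themselves — which span two bricks and therefore pass through a dense region — must not exceed $6$ crossings each. Getting a routing that is simultaneously tight for the old edges and feasible for the new ones requires choosing the endpoint indices and the nesting order very deliberately; a naive ``one more parallel path'' will typically push some edge to $7$ crossings. The resolution is to interleave the new edges so that they cross the brick edges and the old skip edges but avoid crossing \emph{each other}, exploiting the bipartite $2$-layer structure (two $y$-monotone curves with the same pair of ``directions'' relative to the bricks cross at most once). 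Once the explicit index pattern is pinned down, the crossing bound becomes a finite case check per brick and the edge count is routine.
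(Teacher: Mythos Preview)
Your plan is correct and matches the paper's approach exactly: the paper augments the $5$-planar construction by one further path of length $\beta-1$ (the ``dotted blue path'' in \cref{fig:6planarLowerbound}), giving $m=10\beta-1=\tfrac{5}{2}n-6$ on $n=4\beta+2$ vertices, with $6$-planarity verified directly from the figure. The routing obstacle you anticipate is real but mild---a second skip-path placed alongside the one from \cref{thm:5planarLowerbound} adds exactly one crossing to each old skip edge and is itself crossed at most six times, so no elaborate interleaving is needed.
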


\begin{proof}
\begin{figure}[t]
\centering
\includegraphics[scale=1,page=17]{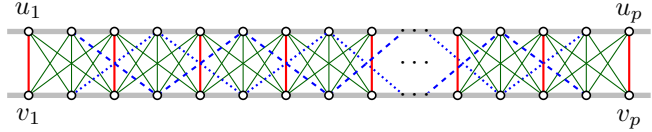}
\caption{A family of $6$-planar graphs on $n=2p$ vertices with $\frac{5}{2}n-6$ edges.}
\label{fig:6planarLowerbound}
\end{figure}
We augment the construction from \cref{thm:5planarLowerbound} by a path of length $\beta-1$,
where $\beta$ is the number of $K_{3,3}$ subgraphs; refer to the dotted blue path
in \cref{fig:6planarLowerbound}.
The obtained graph has $n=4\beta+2$ vertices and $m=10\beta-1=\frac{5}{2}n-6$ edges.
\end{proof}

In the next theorem, we additionally show that the multiplicative constant from \cref{coro:kPlanarUpperbound} is within a factor of $1.84$ of the optimal achievable upper bound.

\begin{restatable}{theorem}{kPlanarLowerbound}
\label{thm:kPlanarLowerbound}
For any $k$, there exist infinitely many $2$-layer $k$-planar graphs with $n$ vertices and
$m=\left\lfloor\sqrt{k/2}\right\rfloor n - \mathcal{O}(f(k)) \approx 0.707 \sqrt{k} n - \mathcal{O}(f(k))$ edges.
\end{restatable}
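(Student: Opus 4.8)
The plan is to generalize the lower bound constructions that already appeared for small $k$ (the $K_{3,3}$-brick chain of \cref{thm:4planarLowerbound} and its path-augmentations in \cref{thm:5planarLowerbound,thm:6planarLowerbound}) to an arbitrary $k$. The natural building block is a complete bipartite graph $K_{s,s}$ with $s$ roughly $\sqrt{k/2}$: in the unique $2$-layer drawing of $K_{s,s}$ (vertices $u_1,\dots,u_s$ on $L_u$ and $v_1,\dots,v_s$ on $L_v$, edges straight), the edge $(u_a,v_b)$ is crossed by every edge $(u_c,v_d)$ with $a<c$ and $b>d$ or $a>c$ and $b<d$; the maximum number of crossings on a single edge is attained by a ``middle'' edge and is a quadratic function of $s$. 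First I would compute this maximum exactly: for $K_{s,s}$ it is well known to be $2\binom{\lfloor s/2\rfloor+1}{2}$-type quantity, i.e.\ $\Theta(s^2/2)$, and choosing $s=\lfloor\sqrt{k/2}\rfloor+O(1)$ makes it at most $k$, so $K_{s,s}$ is $2$-layer $k$-planar. Then I would build a chain of $\beta$ copies of $K_{s,s}$ where consecutive copies share a single planar edge (the edge $(u_s,v_s)$ of one copy identified with $(u_1,v_1)$ of the next), exactly as in \cref{fig:4planarLowerbound}; such a chain has $n = 2\beta(s-1)+2$ vertices and $\beta s^2$ edges, which already gives $m = \frac{s^2}{2(s-1)}n - O(s^2) = \bigl(\tfrac{s}{2}+O(1)\bigr)n - O(s^2) \approx \tfrac12\sqrt{k/2}\,n$. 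To reach the claimed leading constant $\lfloor\sqrt{k/2}\rfloor$ rather than half of it, I would additionally overlay $s-1$ ``long'' paths connecting corresponding vertices of consecutive bricks (generalizing the single dashed path of \cref{thm:5planarLowerbound} and the two paths of \cref{thm:6planarLowerbound}); each extra path contributes roughly $\beta$ edges while adding no vertices, and one checks these extra edges raise the per-edge crossing count by a controlled amount that still stays $\le k$ after readjusting $s$ by a constant. The bookkeeping yields $m = \lfloor\sqrt{k/2}\rfloor\, n - \mathcal{O}(f(k))$ for a suitable polynomial $f(k)=\Theta(k)$ (the additive loss being the per-brick slack times the number of bricks absorbed by the endpoints), and taking $\beta\to\infty$ gives infinitely many such $n$.

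The key steps, in order, are: (1) fix the drawing of $K_{s,s}$ and prove the exact formula for the maximum number of crossings on any edge, identifying which edges are ``heaviest''; (2) solve for the largest $s$ with this maximum at most $k$, obtaining $s = \sqrt{k/2} + O(1)$ and in particular $s \ge \lfloor\sqrt{k/2}\rfloor$ up to a controllable additive constant; (3) define the brick chain and the overlaid long paths, and verify that the combined drawing is still $k$-planar — here one must check the crossings between a brick edge and the long paths, and among the long paths themselves, and show the worst edge still has $\le k$ crossings after decreasing $s$ by an absolute constant to create room; (4) count $n$ and $m$ as functions of $\beta$ and $s$, substitute $s$, and simplify to the stated asymptotic, tracking the $\mathcal{O}(f(k))$ additive term; (5) observe that varying $\beta$ over all positive integers gives infinitely many valid $n$.

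The main obstacle is step (3): ensuring that adding the long auxiliary paths does not push any edge — neither a $K_{s,s}$ edge near the shared planar edges nor an edge of the paths themselves — over $k$ crossings. A heaviest edge of a $K_{s,s}$ brick is already near its crossing budget, and the long paths run ``through'' several bricks, so their routing must be chosen (e.g.\ hugging the $L_u$ or $L_v$ line, or nested close to the shared planar edge) so that each long-path edge crosses only $O(s)$ brick edges and each brick edge is crossed by only $O(1)$ path edges per path, for a total extra load of $O(s)$, which is absorbed by replacing $s$ with $s-c$ for a fixed constant $c$. Making this routing precise and verifying the crossing bound edge-class by edge-class (brick vs.\ brick, brick vs.\ path, path vs.\ path) is the technical heart; everything else is elementary counting. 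I would present the construction with an explicit figure analogous to \cref{fig:6planarLowerbound} and relegate the full crossing verification to the appendix.
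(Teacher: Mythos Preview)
Your plan diverges from the paper and, as written, does not reach the claimed constant.

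The paper does not use bricks at all. Its construction is a uniform ``band'' graph: with $p=q$ and $\ell=\lfloor\sqrt{k/2}\rfloor$, each $u_i$ is joined to $v_{i+1},\dots,v_{i+\ell}$ and each $v_i$ to $u_{i+1},\dots,u_{i+\ell}$, so (away from the ends) every vertex has degree $2\ell$ and the edge density is exactly $\ell$. The entire proof is then a direct count of the crossings on a generic edge $(u_i,v_{i+r})$, by grouping the crossing edges according to their left endpoint, yielding at most $\ell^2+r^2\le 2\ell^2\le k$. No chaining, no auxiliary paths, no reabsorbing constants into $s$.

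Your step~(1) contains a factual error: in the $2$-layer drawing of $K_{s,s}$ the edge $(u_a,v_b)$ has $(s-a)(b-1)+(a-1)(s-b)$ crossings, and this is maximized not at a middle edge but at the long diagonals $(u_1,v_s)$ and $(u_s,v_1)$, where it equals $(s-1)^2$. So $K_{s,s}$ is $2$-layer $k$-planar iff $(s-1)^2\le k$, giving $s\approx\sqrt{k}$. A chain of such bricks then has density $(s+1)/2\approx\sqrt{k}/2=\sqrt{k/4}$, strictly smaller than the target $\sqrt{k/2}$. Overlaying $s-1$ long paths adds only about $(s-1)\beta$ edges on $n\approx 2\beta(s-1)$ vertices, a density increment of $\tfrac12$ independent of $k$; and shrinking $s$ by a constant $c$ frees only $O(s)$ crossings per heavy edge, so you can afford only $O(s)$ paths, again an $O(1)$ density gain. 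In short, the brick-chain with paths tops out near $\sqrt{k}/2\cdot n$, missing the theorem's $\sqrt{k/2}\cdot n$ by a factor $\sqrt{2}$. Closing that gap would force adjacent bricks to overlap substantially, at which point you are effectively rebuilding the paper's band construction; so your step~(3) is not a routine verification but hides the need for a different construction.
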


\begin{proof}[Sketch]
\begin{figure}[t]
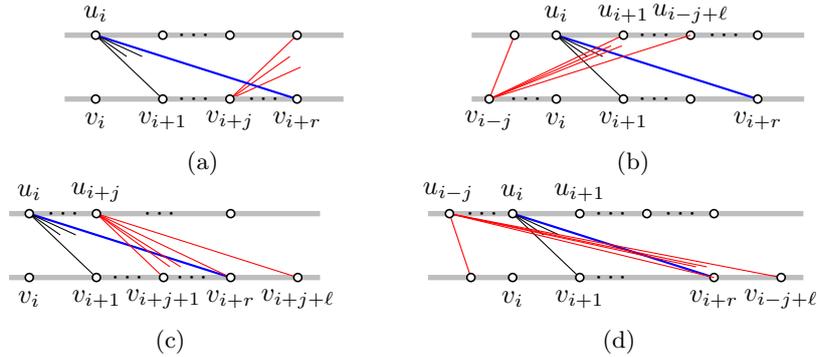

  \centering
    \begin{subfigure}[b]{0.35\textwidth}
      \centering \includegraphics[scale=1,page=37]{figures}
      \caption{}
      \label{fig:generalLowerBound1}
    \end{subfigure}
    \hfil
    \begin{subfigure}[b]{0.4\textwidth}
      \centering \includegraphics[scale=1,page=38]{figures}
      \caption{}
      \label{fig:generalLowerBound2}
    \end{subfigure}
    
		\begin{subfigure}[b]{0.4\textwidth}
		  \centering \includegraphics[scale=1,page=39]{figures}
			\caption{}
			\label{fig:generalLowerBound3}
		\end{subfigure}
		\begin{subfigure}[b]{0.55\textwidth}
		  \centering \includegraphics[scale=1,page=40]{figures}
			\caption{}
			\label{fig:generalLowerBound4}
		\end{subfigure}
	\caption{Illustrations for the proof of \cref{thm:kPlanarLowerbound}.}
  \label{fig:generalLowerBound}
\end{figure}
We choose $p=q$ and a parameter $\ell = \lfloor \sqrt{k/2} \rfloor$. We connect vertex $u_i$ to the $\ell$ vertices $v_{i+1}\ldots,v_{i+\ell}$ and vertex $v_i$ to vertices $u_{i+1}\ldots,u_{i+\ell}$. Note that by symmetry, $u_i$ is also incident to the $\ell$ vertices $v_{i-1}\ldots,v_{i-\ell}$ and vertex $v_i$ to vertices $u_{i-1}\ldots,u_{i-\ell}$.  Clearly, this gives the density bound in the statement of the theorem. Then, we consider an edge $(u_i,v_{i+r})$ and the crossings it forms with edges incident to some other vertices; see \cref{fig:generalLowerBound}. This allows us to establish that each edge has at most $k$ crossings. \arxapp{For details, see \cref{app:densityLargeK}.}{}
\end{proof}

\section{Properties of $2$-Layer $k$-Planar Graphs}
\label{sec:prop}
In this section, we present some properties of $2$-layer $k$-planar graphs.

In \cref{thm:3planarUpperbound}, we have established that every optimal $2$-layer $3$-planar graph is ($3$-)quasiplanar, which is also the case in the general, non-layered, drawing model~\cite{DBLP:conf/compgeom/Bekos0R17}. 
A more general relationship between the classes of $k$-planar and $h$-quasiplanar graphs was uncovered in~\cite{DBLP:journals/jctb/AngeliniBBLBDHL20}, where it is proven that every $k$-planar graph is $(k+1)$-quasiplanar, for every $k \geq 2$. Next, we show that for 2-layer drawings an even stronger relationship holds.

\begin{theorem}\label{th:relationship}
For $k\ge 3$, every $2$-layer $k$-planar graph is $2$-layer $\left\lceil\frac{2}{3}k+2\right\rceil$-quasiplanar.
Further, every $2$-layer $2$-planar graph is $2$-layer (3-)quasiplanar.
\end{theorem}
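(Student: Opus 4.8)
The plan is to bound, for a topological $2$-layer $k$-planar graph, the size of any set of pairwise crossing edges, and to show it is at most $\lceil \tfrac23 k + 2\rceil - 1$ (so that no forbidden $h$-clique of crossing edges exists, with $h = \lceil \tfrac23 k + 2\rceil$). First I would recall the basic combinatorial structure of crossings in the $2$-layer model: two independent edges $(u_a, v_b)$ and $(u_c, v_d)$ cross if and only if $a < c$ and $b > d$, or $a > c$ and $b < d$. Hence a set $\mathcal{F}$ of pairwise crossing edges, after fixing one linear order, is exactly a set of edges $(u_{a_1}, v_{b_1}), \ldots, (u_{a_s}, v_{b_s})$ with $a_1 \le a_2 \le \cdots \le a_s$ and $b_1 \ge b_2 \ge \cdots \ge b_s$, where, since the drawing is simple, consecutive edges that share an endpoint are ruled out among pairwise-crossing edges — so in fact the $a_i$'s are strictly increasing on blocks and likewise the $b_i$'s, but two edges may share at most... actually in a pairwise-crossing family no two edges share an endpoint, so $a_1 < a_2 < \cdots < a_s$ and $b_1 > b_2 > \cdots > b_s$. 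So $\mathcal{F}$ looks like a ``staircase'' of $s$ edges going up-left to down-right.

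The key step is a counting argument on how many crossings the "middle" edges of such a staircase necessarily accumulate. Consider the edge $e_i = (u_{a_i}, v_{b_i})$ sitting at position $i$ in the staircase. Every other edge $e_j$ with $j \ne i$ crosses $e_i$, contributing $s - 1$ crossings already. But additionally, the staircase structure forces many further crossings: the vertices $u_{a_1}, \ldots, u_{a_{i-1}}$ lie to the left of $u_{a_i}$ and the vertices $v_{b_1}, \ldots, v_{b_{i-1}}$ lie to the right of $v_{b_i}$, so any edge incident to one of these "upper-left" $u$'s and going to a vertex at or to the left of $v_{b_i}$, or incident to one of these "upper-right" $v$'s and going at or to the right of $u_{a_i}$, will also cross $e_i$. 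I would make this precise by exhibiting, for a carefully chosen middle edge $e_i$, a set of at least $\tfrac32(s-1) - O(1)$ mandatory crossings drawn from: (a) the $s-1$ crossings with the other staircase edges, and (b) roughly $(s-1)/2$ extra crossings coming from the edges of the staircase that connect a high $u$-vertex to the $v$-side — each such edge $e_j$ with $j < i$ must cross not only $e_i$ but also some of the staircase edges $e_{i+1}, \ldots$ on its way down, but since those are already counted, the genuinely new contribution comes from pairing up the long staircase edges. The cleanest incarnation: pick $e_i$ with $i \approx s/3$ from the top; then each of the $\approx 2s/3$ edges below it in the staircase crosses $e_i$, and moreover the "nesting" of the top $\approx s/3$ edges over $e_i$ together with the bottom edges forces $e_i$ (or a symmetric choice) to be crossed $\approx \tfrac32 \cdot \tfrac{2s}{3} = s$ — I would optimize the split point to get the constant $\tfrac32$, yielding $k \ge cr(e_i) \ge \tfrac32(s-1) - c$ for a small constant $c$, hence $s \le \tfrac23 k + O(1)$, and then tighten the additive constant by hand to land exactly on $s \le \lceil \tfrac23 k + 2\rceil - 1$.

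For the second statement ($k = 2 \Rightarrow$ $2$-layer quasiplanar) the plan is a direct case analysis: suppose three edges pairwise cross, forming a staircase $e_1, e_2, e_3$ with $a_1 < a_2 < a_3$, $b_1 > b_2 > b_3$. The middle edge $e_2$ is already crossed by $e_1$ and $e_3$, using up $2$ of its $2$ allowed crossings. Then I would argue that the endpoints $u_{a_2}$ and $v_{b_2}$ of $e_2$ cannot have degree only $1$ in any reasonable (e.g. optimal, or just: add edges greedily) drawing, but more robustly: the real obstruction is that $u_{a_1}$, lying strictly left of $u_{a_2}$, together with $v_{b_3}$, lying strictly right of $v_{b_2}$'s partner... hmm — actually for $k=2$ a clean finish is to observe that the region "between" the three edges is tight enough that inserting the planar edge $(u_{a_2}, v_{b_2})$-style reroute, or using \cref{obs:noK33orK24} and \cref{lem:bricksIn2planarGraphs}, kills the configuration; I would mirror the argument used in \cite{DBLP:journals/jctb/AngeliniBBLBDHL20} for the general (non-layered) case, which the introduction already flags as applicable here.

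\textbf{Main obstacle.} The hard part will be making the "$\tfrac32$ extra crossings per staircase edge" bookkeeping rigorous without double-counting: one must choose the split point of the staircase and the witness edge $e_i$ so that the nested edges above $e_i$ and the edges below $e_i$ each contribute a disjoint batch of crossings on a single well-chosen edge, and then verify that rounding gives exactly the claimed ceiling $\lceil \tfrac23 k + 2\rceil$ rather than an off-by-one weaker bound. Handling the boundary effects (what happens near the two ends of the staircase, where fewer vertices are available to the left/right) and the parity of $k$ will require the kind of careful small-case check that the paper defers to its appendix.
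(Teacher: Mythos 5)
Your overall strategy (bound the size $s$ of a pairwise-crossing family by forcing roughly $\tfrac32(s-1)$ crossings onto one well-chosen staircase edge) contains a genuine gap, and it is exactly the step you flag as the ``main obstacle.'' The extra $\approx (s-1)/2$ crossings you want on a single staircase edge $e_i$ cannot come from the staircase itself: in a simple drawing any two of the $s$ pairwise-crossing edges cross exactly once, so every edge of the family receives exactly $s-1$ crossings from the family, no more. A bare matching drawn as a staircase of $s$ pairwise crossing edges is a perfectly valid topological $2$-layer $(s-1)$-planar graph in which no edge has more than $s-1$ crossings; hence no argument that looks only at the crossing family (and at hypothetical incident edges whose existence you never establish) can push the bound below $s\le k+1$, which is strictly weaker than the claimed $s\le\left\lceil\tfrac23k+2\right\rceil-1$ for $k\ge 3$. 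The missing ingredient is connectivity. The paper first reduces to connected $G$ (components can be drawn side by side, so quasiplanarity is decided componentwise), and then counts crossings not on a middle edge but on the two \emph{extreme} edges $e^{(1)}=(u_1',v_h')$ and $e^{(2)}=(u_h',v_1')$ of the family: each already has $h-1$ crossings from the family, and since the endpoints of the $h-2$ middle edges are trapped in the regions bounded by $e^{(1)}\cup e^{(2)}$, connectivity forces each middle edge to be adjacent to a further edge that must cross $e^{(1)}$ or $e^{(2)}$ to escape (and crosses both if it is shared by two middle edges). This adds $h-2$ crossings distributed over the pair, so one of the two extreme edges collects at least $h-1+\lceil(h-2)/2\rceil\ge\tfrac32h-2\ge k+1$ crossings; that is where the factor $\tfrac32$ actually comes from.

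Your treatment of $k=2$ is likewise not a proof: you list several possible finishes (optimality, greedy edge insertion, \cref{obs:noK33orK24}, mirroring the non-layered argument) without committing to one, and optimality is not available since \cref{th:relationship} concerns all $2$-layer $2$-planar graphs, not just the densest ones. The paper's argument is again connectivity: in a triple of pairwise crossing edges the two outer edges are already saturated with their two allowed crossings, so no further edge can be incident to either endpoint of the middle edge, which would therefore form a connected component on its own.
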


\begin{figure}[t]
  \centering
  \includegraphics[scale=1,page=10]{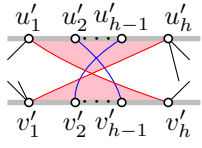}
  \caption{A set of $h$ pairwise crossing edges in a topological $2$-layer graph.}
  \label{fig:3planarNonquasi}
\end{figure}
\begin{proof}
Let $G$ be a topological $2$-layer $k$-planar graph, with $k\geq3$, which we assume w.l.o.g. to be connected.
Suppose for a contradiction that $G$ contains $h:=\lceil\frac{2}{3}k+2\rceil$ mutually crossing edges
$(u_i',v_{h+1-i}')$ for $1 \leq i \leq h$ in $G$, such that $u_1',\ldots,u_{h}'$ and $v_1',\ldots,v_{h}'$
appear in this order in $u_1,\ldots,u_p$ and $v_1,\ldots,v_q$, respectively.
Observe that $(u_1',v_{h}')$ and $(v_1',u_{h}')$ have $h-1$ crossings from this $h$-tuple.
Moreover, both endvertices of all the $h-2$ edges $(u_i',v_{h+1-i}')$, for $i=2,\dots,h-1$,
are located in regions bounded by $e^{(1)}:=(u_1',v_{h}')$ and $e^{(2)}:=(v_1',u_{h}')$; see \cref{fig:3planarNonquasi}.
Since $G$ is connected, for each $2\le i\le h-1$, the edge $(u_i',v_{h+1-i}')$ is adjacent to another edge $e_i$.
Note that either $e_i=e_j$ for some $j\ne i$, and $e_i$ crosses $e^{(1)}$ and $e^{(2)}$,
or $e_i\ne e_j$ for all $j\ne i$, and $e_i$ crosses one of $e^{(1)}$ and $e^{(2)}$.
This implies $h-2$ additional crossings for $\{e^{(1)},e^{(2)}\}$, and, consequently,
$e^{(1)}$ or $e^{(2)}$ is crossed by at least $h-1+\lceil(h-2)/2\rceil$ edges.
We obtain
\begin{linenomath}
$ h-1+\lceil(h-2)/2\rceil \ge \tfrac32h-2 \ge \tfrac32\left(\tfrac23k+2\right)-2 = k+1 $
\end{linenomath}
crossings for $e^{(1)}$ or $e^{(2)}$, a contradiction.

For the case $k=2$, assume that $G$ contains three mutually crossing edges
$e_1=(u_1',v_3')$, $e_2=(u_2',v_2')$ and $e_3=(u_3',v_1')$, such that $u_1', u_2', u_3'$ and
$v_1', v_2', v_3'$. appear in this order in $u_1,\ldots,u_p$ and $v_1,\ldots,v_q$, respectively. As $e_1$ and $e_3$ are already crossed twice, $e_2$ represents a connected 
component; contradiction.
\end{proof}
%

Next, we show that the pathwidth of $2$-layer $k$-planar graphs is bounded by $k+1$. We point out that similar results are known for layered graphs  with a bounded total number of crossings~\cite{Dujmovic2008} and for layered fan-planar graphs~\cite{DBLP:journals/corr/abs-2002-09597}, and that these bounds do not have any implication on $2$-layer $k$-planar graphs. 

\begin{theorem}
\label{thm:pathwidth}
Every $2$-layer $k$-planar graph has pathwidth at most $k+1$.
\end{theorem}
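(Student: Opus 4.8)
The plan is to construct an explicit path decomposition directly from a fixed topological $2$-layer $k$-planar drawing of $G$, sweeping a vertical line from left to right and forming one bag per "event". First I would fix a drawing and merge the two orderings $u_1,\dots,u_p$ and $v_1,\dots,v_q$ into a single left-to-right sequence $w_1,\dots,w_n$ of all vertices according to their $x$-coordinates (breaking ties arbitrarily since the points are distinct). For each consecutive pair $w_t, w_{t+1}$ consider the vertical strip between them; the bag $B_t$ will consist of all vertices $w_s$ with $s\le t$ that still have a neighbor $w_{s'}$ with $s'\ge t+1$ — i.e., vertices with an edge "cut" by the strip — together with the two endpoints needed to witness edges inside the strip. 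A cleaner formulation: let $B_t$ be the set of endpoints of all edges that cross the vertical line placed in the strip between $w_t$ and $w_{t+1}$, and additionally put $w_t$ and $w_{t+1}$ into $B_t$; then take the bags $B_1,\dots,B_{n-1}$ in this order as a path decomposition. Properties (P.2)–(P.4) are routine: every vertex $w_t$ appears in $B_{t-1}$ and $B_t$ (adjusting endpoints), every edge $(w_s,w_{s'})$ with $s<s'$ has both endpoints in every bag $B_t$ with $s\le t<s'$, in particular $B_s$, and the set of bags containing a fixed vertex is an interval of indices, hence connected.

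The crux is the width bound: I must show $|B_t|\le k+2$ for every strip, so that $\mathrm{width}\le k+1$. Here is where $k$-planarity enters, via the $y$-monotonicity that is special to the $2$-layer model. Consider the edges crossing the vertical line $\ell$ in strip $t$; each such edge is a $y$-monotone arc, so its intersection with $\ell$ is a single point, and these crossing points appear along $\ell$ in some vertical order $e_1,e_2,\dots,e_r$ (from bottom to top, say). The key claim is that \emph{consecutive} edges $e_i,e_{i+1}$ in this order must cross each other somewhere, because between them no vertex of $G$ lies on $\ell$ (vertices are on $L_u,L_v$, not in the interior of the strip), and — more carefully — two arcs that cross $\ell$ at adjacent heights and whose endpoints interleave in the appropriate way are forced to intersect; I would argue this by a continuity/Jordan-curve argument on the region of the strip bounded by $e_i$, $e_{i+1}$ and the two layer lines. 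Granting this, edge $e_1$ is crossed by $e_2$, which is crossed by $e_3$, etc.; but then focusing on a single edge, say $e_1$, it is crossed by $e_2$, and $e_2$ by $e_3$, and so on — to turn a chain into a bound on a single edge's crossings I would instead look at a \emph{middle} edge $e_j$ of the chain: it is crossed by both $e_{j-1}$ and $e_{j+1}$; iterating, I claim $e_j$ is crossed by all of $e_1,\dots,e_{j-1},e_{j+1},\dots,e_r$, giving $e_j$ at least $r-1$ crossings, hence $r-1\le k$, i.e. $r\le k+1$. Thus the number of edges crossing $\ell$ is at most $k+1$, but the endpoints of these edges on the left side of $\ell$ need not be distinct — each distinct left-endpoint $w_s$ with $s\le t$ is counted once in $B_t$ regardless of how many cut edges it has — so actually the "bag = endpoints of cut edges" count could a priori still be large if I count right endpoints too; I would define $B_t$ more economically as $\{w_t\}\cup\{\text{left endpoints of edges crossing }\ell\}$, argue separately that a vertex in the strip's left side with a cut edge forces a cut edge "at its own height", and reconcile the counting so that $|B_t|\le (k+1)+1=k+2$.

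The main obstacle I anticipate is precisely making the "consecutive crossing edges on $\ell$" and the "middle edge is crossed by all" claims airtight: the first requires a careful planar-topology argument exploiting $y$-monotonicity (an arc crossing $\ell$ exactly once, with endpoints on the two horizontal lines, partitions the strip, and the second arc's endpoints must lie on opposite sides), and the second requires showing that no two of $e_1,\dots,e_r$ can \emph{avoid} crossing, i.e. the crossing relation on $\{e_1,\dots,e_r\}$ restricted to this strip is the complete graph — equivalently, that $e_i$ and $e_j$ for $|i-j|\ge 2$ also cross, which again follows because $e_i$ separates the strip and $e_j$ has its $\ell$-crossing on the far side of $e_{i+1}$ while one endpoint of $e_j$ may be on $e_i$'s side. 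I would handle the cleanest version by proving the single combined lemma: \emph{in a topological $2$-layer drawing, any set of edges that all cross a common vertical line, ordered by the heights of their crossings with that line, is pairwise crossing}; together with $k$-planarity this instantly gives at most $k+1$ such edges, and the path-decomposition bookkeeping (adding $w_t$, $w_{t+1}$, checking P.2–P.4) is then straightforward.
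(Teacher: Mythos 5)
There is a genuine gap: the single combined lemma on which your entire width bound rests --- that any set of edges crossing a common vertical line is pairwise crossing --- is false. Take $u_1,u_2$ on $L_u$ to the left of the vertical line $\ell$ and $v_1,v_2$ on $L_v$ to the right of it, with edges $(u_1,v_1)$ and $(u_2,v_2)$: both edges are cut by $\ell$, yet they do not cross, because two edges of a $2$-layer drawing cross precisely when their endpoints appear in \emph{reversed} relative orders on the two layers. More drastically, the planar path $u_1,v_1,u_2,v_2,\dots$ drawn with all of $U$ to the left of $\ell$ and all of $V$ to the right of it is $0$-planar, yet all of its edges are cut by $\ell$ and they have $p$ distinct left endpoints; so neither the ``all endpoints of cut edges'' bag nor your economized ``left endpoints only'' bag has size bounded by any function of $k$, and a sweep over vertical strips between consecutive vertex positions cannot give width $k+1$. (A smaller slip in the same step: a $y$-monotone arc meets every \emph{horizontal} line at most once but may cross a vertical line several times, so even the ordering of the cut edges ``by height on $\ell$'' is not well defined.) The only true fragment of your claim is that every left-to-right cut edge crosses every right-to-left cut edge, which bounds the cut by $2k$ only when both kinds are present and gives nothing otherwise.

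The paper's proof avoids sweeping over vertex positions and instead sweeps over \emph{edges}: it orders the edges lexicographically by their $U$-endpoint (and then by their $V$-endpoint) and creates one bag $B_i$ per edge $e_i=(u_s,v_t)$, containing $u_s$, $v_t$, and those vertices $v_y\in V$ that are incident to an edge crossing $e_i$ and whose first and last incident edges straddle position $i$ in this edge order. Since $e_i$ is crossed at most $k$ times, each bag has at most $k+2$ vertices, which is where $k$-planarity enters --- per edge, not per vertical cut. The verification of the connectivity property then uses that the edges at each $u\in U$ are consecutive in the order, and that every edge lying between the first and last edge of a fixed $v_y\in V$ either is incident to $v_y$ or crosses one of those two extreme edges. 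If you want to keep a sweep-style intuition, you must sweep over this edge order rather than over $x$-coordinates.
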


\begin{proof}
Let $G$ be a topological $2$-layer $k$-planar graph with parts $U$ and $V$. We first define a total ordering $\prec$ on the edges as follows: We say that edge $e_1=(u_i,v_x)$ precedes edge $e_2=(u_j,v_y)$, or $e_1 \prec e_2$,  if $u_i,u_j \in U$ and either 
\begin{inparaenum}[(i)]
\item $i < j$, or
\item $i=j$ and $x<y$.
\end{inparaenum}
Let $E=(e_1,\ldots,e_m)$ be the set of edges ordered with respect to $\prec$. 
Let $e_i=(u_s,v_t)$ be an edge and let $v_y$ be a vertex in $V$. Further let $e_{y^-}$ and $e_{y^+}$ be the first and the last edge incident to $v_y$ in $\prec$, respectively. We call $v_y$ \emph{related} to $e_i$ if $v_y$ is incident to an edge crossing $e_i$ and if $y^- < i < y^+$. 
For every edge $e_i=(u_s,v_t) \in E$, we construct a bag $B_i$ that contains $u_s$, $v_t$ and all the (at most $k$) related vertices of $e_i$. Then, we connect $B_i$ to bags $B_{i-1}$ and $B_{i+1}$ (if they exist), obtaining a path of bags $P$.

In the following we show that $P$ is a valid path decomposition of $G$. Since we assigned at most $k+2$ vertices
to each bag of $P$ the width of $P$ is at most $k+1$.
Properties P.\ref{prop:treedecomp1} and P.\ref{prop:treedecomp3} of a tree decomposition are fulfilled
for $P$ by construction. We may assume that $G$ is connected, otherwise we compute a path decomposition
for each connected component and link the obtained vertex disjoint paths.
Hence also P.\ref{prop:treedecomp2} is fulfilled. Moreover, by the choice of $\prec$,
all the edges incident to a vertex $u_i \in U$ occur in a consecutive sequence,
i.e. $u_i$ is incident to edges $e_j,\ldots,e_k$ for some $1 \leq j \leq k \leq m$
and then $u_i$ appears in all of bags $B_j,\ldots,B_k$,
which is a subpath of $P$. Therefore, Property P.\ref{prop:treedecomp4} also holds for all vertices in $U$.


\begin{figure}[t]
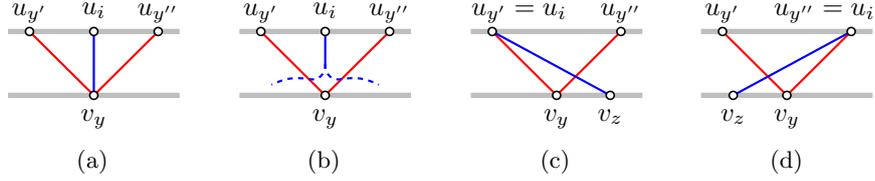

\centering
  \begin{subfigure}[b]{0.23\textwidth}
\centering \includegraphics[scale=1,page=43]{figures}
\caption{}
\label{fig:pathwidth1}
\end{subfigure}
\hfil
  \begin{subfigure}[b]{0.23\textwidth}
\centering \includegraphics[scale=1,page=44]{figures}
\caption{}
\label{fig:pathwidth2}
\end{subfigure}
\hfil
  \begin{subfigure}[b]{0.23\textwidth}
\centering \includegraphics[scale=1,page=45]{figures}
\caption{}
\label{fig:pathwidth3}
\end{subfigure}
\hfil
  \begin{subfigure}[b]{0.23\textwidth}
\centering \includegraphics[scale=1,page=46]{figures}
\caption{}
\label{fig:pathwidth4}
\end{subfigure}
\caption{Illustrations for the proof of \cref{thm:pathwidth}.}
\label{fig:pathwidth}
\end{figure}

It remains to show that Property P.\ref{prop:treedecomp4} holds for every vertex $v_y \in V$. Let $e_{y^-}=(u_{y'},v_y)$ and $e_{y^+}=(u_{y''},v_y)$.
Note that each of the edges
$e_{y^-},e_{y^-+1},\ldots,e_{y^+}$ is either incident to $v_y$ (see \cref{fig:pathwidth1}),
or it crosses one of $e_{y^-}$ and $e_{y^+}$, since its endvertex in $U$ is some $u_i$ with $y' \leq i \leq y''$; see \cref{fig:pathwidth2,fig:pathwidth3,fig:pathwidth4}. Note that for the endvertex $v_z$ in $V$ necessarily $z>y$  if $u_i = u_{y'}$  or $z<y$ if $u_i = u_{y''}$ by definition of $\prec$; see \cref{fig:pathwidth3} or \cref{fig:pathwidth4}, respectively. Hence $v_y$ belongs to all bags $B_{y^-},B_{y^-+1},\ldots,B_{y^+}$ and P.\ref{prop:treedecomp4} holds. The statement follows.
\end{proof}

\section{Conclusions}

We gave results for $2$-layer $k$-planar graphs regarding their density, relationship to $2$-layer $h$-quasiplanar graphs, and pathwidth. Tight density bounds for $2$-layer $k$-planar graphs with $k = 6$ may be achievable following similar arguments to the proof of \cref{thm:5planarUpperbound}, which would also improve upon our results for the Crossing Lemma, and in turn on the density for general~$k$. Moreover, a better lower bound for general $k$ may exist. The relationship to other beyond-planar graph classes is also of interest. With respect to the pathwidth, we conjecture that our upper bound is tight. Finally, the recognition and characterization of $2$-layer $k$-planar graphs remain important open problems.

\bibliographystyle{splncs04}
\bibliography{twoLayer}
\newpage 
\arxapp{
\appendix

\section{Omitted Proofs from \cref{sec:smallK}}
\label{app:smallK}

We start with the full proof of \cref{lem:rational-density}.

\rationalDensity*

\begin{proof}
First observe that the number of edges in an $n$-vertex $k$-planar graph can be upper bounded by a linear function $g(n)$~\cite{DBLP:journals/corr/Ackerman15,PachT97}. Thus in an $n$-vertex $2$-layer $k$-planar graph, there can be at most $f(n) \leq g(n)$ edges for a linear function $f(n)$. If the multiplicative factor $a_k$ in $f(n)=a_kn-o(n)$ is irrational, one can choose a slightly larger rational multiplicative factor $a_k'$. So assume w.l.o.g. that $a_k$ is rational. Since optimal $2$-layer $1$-planar graphs can have $\frac{3}{2}n-2$ edges~\cite{GiacomoDEL14}, it follows that $a_k \geq \frac{3}{2}$.
	
Because $a_k$ is rational and the number of edges in a graph must be integer, there is a limited number of possible rational differences between the maximum density and the function $a_kn$. We choose the smallest divergence as our additive constant $b_k$.
\end{proof}

The following lemma is a key ingredient in the proof of the density upper bound for $2$-layer $3$-planar graphs in \cref{thm:3planarUpperbound}.

\begin{lemma}
\label{lem:3planarNonquasi}
Let $(u_i,v_y)$, $(u_s,v_t)$ and $(u_j,v_x)$ be a triple of pairwise crossing edges
in a topological $2$-layer $3$-planar graph such that $1 \leq i < s < j \leq p$
and $1 \leq x < t < y \leq q$. Then the number of edges incident to $u_s$ or $v_t$ is at most $3$.
\end{lemma}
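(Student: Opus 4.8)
The plan is to exploit two facts about the \emph{middle} edge $e_3 := (u_s,v_t)$ of the triple: it already spends two of its at most three crossings on $e_1 := (u_i,v_y)$ and $e_2 := (u_j,v_x)$, and every further edge incident to $u_s$ or $v_t$ is ``trapped'' between $e_1$ and $e_2$ and hence must cross at least one of them. Since $e_1$ and $e_2$ can each be crossed at most three times, this limits the number of such trapped edges. Concretely, let $A$ and $B$ be the sets of edges incident to $u_s$ and to $v_t$, respectively; then $A\cap B=\{e_3\}$, so the quantity to bound is $|A\cup B|=|A|+|B|-1$. Note that $e_1,e_2,e_3$ are pairwise distinct, that $e_1$ and $e_2$ cross each other, and that neither $e_1$ nor $e_2$ lies in $A\cup B$, because $i<s<j$ and $x<t<y$.

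The key step is the claim that every edge $f\in(A\cup B)\setminus\{e_3\}$ crosses $e_1$ or $e_2$. This follows from the standard order characterization of crossings in a simple topological $2$-layer drawing: two independent edges cross (exactly once) if and only if their endpoints interleave along $L_u$ and $L_v$, while adjacent edges do not cross. For $f=(u_s,v_z)\in A$ one checks that $f$ can fail to cross $e_1$ only if $z\ge y$ and can fail to cross $e_2$ only if $z\le x$; since $x<y$, it cannot fail both. Symmetrically, for $f=(u_w,v_t)\in B$, it can fail to cross $e_1$ only if $w\le i$ and $e_2$ only if $w\ge j$, and $i<j$ excludes both. Here the sub-cases where $f$ shares an endpoint with $e_1$ or $e_2$ (e.g.\ $z\in\{x,y\}$ or $w\in\{i,j\}$) are harmless: a shared endpoint merely means $f$ does not cross that particular edge of the triple, which the ``at least one of $e_1,e_2$'' statement already accommodates.

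Finally I would assemble the count. Let $c_1$ and $c_2$ be the numbers of crossings on $e_1$ and on $e_2$. The crossing $e_1\times e_2$ contributes $1$ to $c_1$ and $1$ to $c_2$; the crossings $e_3\times e_1$ and $e_3\times e_2$ contribute another $1$ each; and every one of the $|A\cup B|-1$ edges in $(A\cup B)\setminus\{e_3\}$ --- all distinct from $e_1,e_2,e_3$ --- contributes at least $1$ in total to $c_1+c_2$ by the claim. All the crossings listed involve pairwise distinct edge pairs, so none is double-counted, giving $c_1+c_2\ge |A\cup B|+3$. As $G$ is $3$-planar, $c_1\le 3$ and $c_2\le 3$, whence $|A\cup B|\le 3$, which is exactly the asserted bound on the number of edges incident to $u_s$ or $v_t$. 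I expect the only real obstacle to be this bookkeeping: verifying that the enumerated crossings are genuinely distinct and that the endpoint-sharing sub-cases are handled correctly --- both of which are settled by the order argument in the previous paragraph.
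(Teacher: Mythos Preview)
Your argument is correct and follows the same approach as the paper: show that every edge incident to $u_s$ or $v_t$ other than $(u_s,v_t)$ must cross $(u_i,v_y)$ or $(u_j,v_x)$, then use that these two edges already have two crossings each from the triple and hence can absorb at most one more apiece. The paper's proof is simply a terser version of what you wrote; your explicit interleaving case analysis and the $c_1+c_2\ge |A\cup B|+3$ bookkeeping are just a careful unpacking of the same idea.
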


\begin{proof}
Consider such a triple of edges. If $u_s$ is connected to another vertex $v \neq v_t$, edge $(u_s,v)$ crosses one of $(u_i,v_y)$ and $(u_j,v_x)$. The same is true, if $v_t$ is connected to another vertex $u \neq u_s$. Since $(u_i,v_y)$ and $(u_j,v_x)$ both have two crossings from the triple of crossing edges, $u_s$ and $v_t$ can only be incident to a total of two edges which are not $(u_s,v_t)$. 
\end{proof}

\threePlanarUpperbound*

\begin{proof}
Let $G$ be an optimal $2$-layer $3$-planar graph on $n$ vertices. By \cref{lem:3planarNonquasi}, for every triple of pairwise crossing edges there exists two vertices $u_s$ and $v_t$ which are incident to a total of at most $4$ edges. Removing $u_s$ and $v_t$ reduces the number of edges by at most $3$ and the number of vertices by $2$. Since by \cref{thm:3planarLowerbound} optimal $2$-layer $3$-planar graphs have density at least $2n-4$, the removal of $u_s$ and $v_t$ yields a denser subgraph. We conclude that $G$ contains no triple of pairwise crossing edges. Thus, $G$ is quasiplanar and has at most $2n-4$ edges for $n \geq 2$, by \cref{thm:quasiplanarDensity}.
\end{proof}

\fourPlanarUpperbound*

\begin{proof}
Consider an optimal $2$-layer $4$-planar graph $G$ with exactly $a_4n-b_4$ edges. By 
\cref{thm:quasiplanarDensity,thm:4planarLowerbound},
this graph cannot be quasiplanar and hence contains a triple of pairwise crossing edges $(u_i,v_y)$, $(u_s,v_t)$
and $(u_j,v_x)$ for some $1 \leq i < s < j \leq p$ and $1 \leq x < t < y \leq q$.
We first show  that there is no vertex $u_{s'}$ such that $i < s' < j$ and $s' \neq s$.
Assume for a contradiction that such a vertex $u_{s'}$ exists.
Each of the edges $(u_i,v_y)$ and $(u_j,v_x)$ have two crossings from the triple of pairwise crossing edges,
so each of them can only be crossed by two more edges.
Hence, there are at most five edges incident to the vertices $u_s,u_{s'},v_t$ (including the edge $(u_s,v_t)$).
Then the graph $G'$ obtained by removing vertices $u_s,u_{s'},v_t$ has $n'=n-3$ vertices and $m' \geq m -  5= a_4(n-3)-b_4+(3a_4-5) > a_4n'-b_4$ edges; a contradiction to the optimality of $G$.
Symmetrically, there is no vertex $v_{t'}$ such that $x < t' < y$ and $t' \neq t$.
So we have $s=i+1$, $j=i+2$, $t=x+1$ and $y=x+2$.

Next, consider the subgraph $G_1$ induced by $u_1,\ldots,u_i$ and $v_1,\ldots v_x$
and the subgraph $G_2$ induced by $u_{i+2},\ldots,u_p$ and $v_{x+2},\ldots v_q$.
We show that $G_1$ is connected to $G_2$ only by the edges $(u_i,v_{x+2})$, $(u_{i+2},v_x)$
and paths traversing $u_s$ or $v_t$; see \cref{fig:4planarUpperbound}.
Assume for a contradiction that there is an edge $(u_h,v_z)$ such that w.l.o.g. $1 \leq h \leq i$ and $x+2 \leq z \leq q$.
This edge would  cross $(u_s,v_t)$ and at least one of the edges $(u_i,v_{x+2})$ and $(u_{i+2},v_x)$,
say $(u_{i+2},v_x)$. Then, $(u_h,v_z)$, $(u_{i+2},v_x)$,  and $(u_s,v_t)$ would form a triple of pairwise
crossing edges where $u_i$ and $u_{i+1}$ are between $u_h$ and $u_{i+2}$;
a contradiction to the previously established claim.

We show by induction on the number of triples of pairwise crossing edges that the number of edges of $G$
is at most $2n-3$. For the base case, assume that there is no triple of pairwise crossing edges.
Then, $G$ is quasiplanar and has at most $2n-4$ edges; however, in the degenerate case of $n=2$,
it has at most $1=2n-3$ edges.

For the induction step, assume that $G$ has a triple of pairwise crossing edges that connects
subgraphs $G_1$ and $G_2$ as described above. Clearly, $G_1$ and $G_2$ have less triples of pairwise crossing edges than $G$. As mentioned before, since $(u_i,v_{x+2})$ and $(u_{i+2},v_x)$ have
two crossings from the triple of pairwise crossing edges, $u_{i+1}$ and $v_{x+1}$  can
only be incident to a total of five edges including edge $(u_{i+1},v_{x+1})$. Hence, $G$ can be split into
(possibly optimal) subgraphs $G_1$ and $G_2$ and two isolated vertices $u_{i+1}$ and $v_{x+1}$ by
removing seven edges. Let $n_1$ and $n_2$ be the number of vertices of $G_1$ and $G_2$, respectively.
Clearly, $n=n_1+n_2+2$. By induction, $G_1$ and $G_2$ have at most $2n_1-3$ and $2n_2-3$ edges, respectively.
Then,  $m \leq 2 (n_1+n_2)-6+7 = 2(n_1+n_2)+1 = 2(n-2)+1 = 2n -3$. 
\end{proof}

\fivePlanarUpperbound*

\begin{proof}
Consider an optimal topological $2$-layer $5$-planar graph $G$. If $G=\mathcal{S}$, the theorem trivially holds. Hence, assume that $G \neq \mathcal{S}$ has exactly $a_5n-b_5$ edges. By \cref{thm:quasiplanarDensity,thm:5planarLowerbound},
it cannot be quasiplanar and hence contains a triple of pairwise crossing edges $(u_i,v_y)$, $(u_s,v_t)$
and $(u_j,v_x)$ for some $1 \leq i < s < j \leq p$ and $1 \leq x < t < y \leq q$.
We first show  that there is at most one vertex $u_{s'}$ such that $i < s' < j$ and $s' \neq s$,
or at most one vertex $v_{t'}$ such that $x < t' < y$ and $t' \neq t$.
Assume for a contradiction that two such vertices $w$ and $w'$ exist. Since both edges $(u_i,v_y)$ and $(u_j,v_x)$
have two crossings from the triple of pairwise crossing edges, each of those edges can only be crossed
by three more edges. Hence, there are at most seven edges incident to vertices $u_s,v_t,w,w'$
(including the edge $(u_s,v_t)$). By removing the four vertices $u_s,v_{t},w,w'$ together
with the at most seven incident edges, we would obtain a graph $G'$ with $n'=n-4$ vertices and $m'=m-7=a_5n-b_5-7 = a_5(n-4) - b_5 + (4a_5-7)$ edges. Given $a_5 \geq \frac{9}{4}$ by \cref{thm:5planarLowerbound}, it holds that $4a_5-7 \geq 2$ and thus $m' > a_5n'-b_5$; a contradiction.

Next, consider the subgraph $G_1$ induced by vertices $u_1,\ldots,u_i$ and $v_1,\ldots v_x$
and the subgraph $G_2$ induced by vertices $u_{j},\ldots,u_p$ and $v_{y},\ldots v_q$.
We show that $G_1$ is connected to $G_2$ only by the edges $(u_i,v_{y})$, $(u_{j},v_x)$, some
paths traversing $u_s$ or $v_t$ and potentially an edge $(u_h,v_z)$ such that $h \in \{i,j\}$
or $z \in \{x,y\}$; see \cref{fig:5planarUpperbound1}.
Assume for a contradiction that there is an edge $(u_h,v_z)$ such that w.l.o.g. $1 \leq h < i$ and $y < z \leq q$.
This edge would cross all three edges $(u_i,v_{y})$, $(u_{j},v_x)$, and $(u_s,v_t)$.
Then, $(u_h,v_z)$, $(u_{j},v_x)$, and $(u_s,v_t)$ form a triple of pairwise crossing edges
where $u_i$ and $v_y$ are between $u_h$ and $u_{j}$, and between $v_x$ and $v_z$, respectively;
a contradiction to the previously established claim. By a similar argument, there can be only one such edge $(u_h,v_z)$. In the following, we assume w.l.o.g. that $h < i$ and $y = z$.

If both, $(u_i,v_y)$, $(u_j,v_x)$, and $(u_s,v_t)$ as well as $(u_h,v_y)$, $(u_j,v_x)$, and $(u_s,v_t)$,
form triples of pairwise crossing edges for $1 \leq h < i < s< j\leq p$ and $1 \leq x < t <y \leq q$,
we call the triple $(u_h,v_y)$, $(u_j,v_x)$ and $(u_s,v_t)$ \emph{maximal}. 
Note that in this maximal triple, vertex $u_i$ is between $u_h$ and $u_j$. On the other hand, if $(u_i,v_y)$, $(u_j,v_x)$, and $(u_s,v_t)$ form a triple of pairwise crossing edges with $1 \leq i < s< j\leq p$ and $1 \leq x < t <y \leq q$ such that there is no $h$ so that $(u_h,v_y)$, $(u_j,v_x)$, and $(u_s,v_t)$
also forms a triple of pairwise crossing edges  with $1 \leq h < i$, we call the triple $(u_i,v_y)$, $(u_j,v_x)$, and $(u_s,v_t)$ maximal.
We show by induction on the number of maximal triples of pairwise crossing edges  $1 \leq h < i < s< j\leq p$ and $1 \leq x < t <y \leq q$
that the number of edges of $G \neq \mathcal{S}$ is at most $\frac{9}{4}n-\frac{9}{2}$ for $n \geq 3$,
while the number of edges of $G$ is at most $1$ for $n=2$.
For the base case, assume that there is no triple of pairwise crossing edges or that $G=\mathcal{S}$.
In the former case, $G$ is quasiplanar and has at most $2n-4$ edges which is upperbounded by $\frac{9}{4}n-\frac{9}{2}$
for $n \geq 3$, while it clearly can only have one edge if $n=2$. In the latter case, i.\,e. $G=\mathcal{S}$, graph $G$ has $14$ edges.

For the induction step, we will assume that $G$ has a triple of maximal pairwise crossing edges
that connects subgraphs $G_1$ and $G_2$ as described above.
Clearly, $G_1$ and $G_2$ have fewer maximal triples of pairwise crossing edges than $G$.
As mentioned before, since $(u_i,v_{y})$ and $(u_{j},v_x)$ have two crossings from
the triple of pairwise crossing edges, all vertices between $u_i$ and $u_j$ (which are at most three),
and $v_x$ and $v_y$, respectively, are incident to a total of at most seven edges
including the edge $(u_{s},v_{t})$.
Hence, $G$ can be split into (possibly optimal) subgraphs $G_1$ and $G_2$, two isolated
vertices $u_{s}$, $v_{t}$, and possibly one more vertex $u_{s'}$ between $u_i$ and $u_j$
by removing nine edges. Note that if $u_{s'}$ exists, since every incidence to $u_s$, $v_t$ and $u_{s'}$ implies a crossing on $(u_i,v_{y})$ or $(u_{j},v_x)$, one of $u_s$, $v_t$ and $u_{s'}$  would have degree at most two. Let $w$ denote this vertex. Then 
the graph $G'$  obtained by removing $w$ has $n' = n-1$ vertices and $m'=m-2\leq \frac{9}{4}n'-\frac{9}{2}$ edges since $G' \neq \mathcal{S}$. Then, $m \leq \frac{9}{4}(n-1)-\frac{9}{2}+2 = \frac{9}{4}n-\frac{9.5}{2}$.

Let $n_1$ and $n_2$ denote the number of vertices of $G_1$ and $G_2$, respectively. Clearly, $n \geq n_1+n_2+2$. Assume first that w.l.o.g. $G_1$ is isomorphic to $\mathcal{S}$. We observe that $(u_4,v_4)$ is a planar edge of $G$, since edges $(u_2,v_4)$ and $(v_2,u_4)$ have five crossings each within $\mathcal{S}$; see \cref{fig:thomasSpecialGraph}. Then, consider the graph $G'$ obtained from $G$ by the removal of vertices $u_1$, $u_2$, $u_3$, $v_1$, $v_2$ and $v_3$. Here we consider two cases. If $G'$ is also isomorphic to $\mathcal{S}$, then $G$ contains $n=14$ vertices and $m=27 = \frac{9}{4}\cdot 14 -\frac{9}{2}$ edges. Otherwise, $G'$ has $n'=n-6$ vertices and $m'\leq \frac{9}{4}n'-\frac{9}{2}$ edges. Then, $m=m'+13\leq \frac{9}{4}(n-6)-\frac{9}{2} + 13 = \frac{9}{4}n-\frac{10}{2}$. 

Next, assume that $n_1,n_2 \geq 3$. Since we already covered the case where $G_1 = \mathcal{S}$, we may assume that $G_1$ and $G_2$ are not isomorphic to $\mathcal{S}$. Then, by induction, $G_1$ and $G_2$ have at most $\frac{9}{4}n_1-\frac{9}{2}$ and $\frac{9}{4}n_2-\frac{9}{2}$ edges, respectively. We conclude that $m \leq \frac{9}{4} (n_1+n_2) - 2\cdot \frac{9}{2}  + 9 \leq \frac{9}{4} (n_1+n_2)-9+9 = \frac{9}{4}(n_1+n_2) \leq \frac{9}{4}(n-2) = \frac{9}{4}n -\frac{9}{2}$. 

Finally, consider the case where $n_1 = 2$; see \cref{fig:5planarUpperbound2}. Because $u_i$ belongs to $G_1$,  $u_i$ can only be incident to $v_x$, $v_t$ and $v_y$. Hence, edge $(u_i,v_t)$ must exist since otherwise $u_i$ would have degree two. Symmetrically, edge $(u_s,v_x)$ is also present. Then, $u_s$ and $v_t$ can only be incident to a total of $7$ edges, if there are edges $(u_s,v_z)$, $(u_s,v_{z'})$, $(u_h,v_t)$ and $(u_{h'},v_t)$ for some $j \leq h < h' \leq p$ and $y \leq z < z' \leq q$. If $u_s$ and $v_t$ were only incident to at most $6$ edges, $G_2$ has $m_2=m-9$ edges (as it contains all edges except for those that are incident to $G_1$ or $u_s$ or $u_t$ and $n_2=n-4$ vertices. Since by induction $m_2 \leq \frac{9}{4}n_2-\frac{9}{2}$, we can conclude that $G$ has at most $m=m_2+9\leq\frac{9}{4} (n-4) + \frac{9}{2} = \frac{9}{4}n- \frac{9}{2}$ edges. Therefore, we assume in the following that $n_2 \geq 4$. 

If $n_2 = 4$, we observe that $G$ has $8$ vertices and hence has less than $\frac{9}{4}n-\frac{9}{2}$ edges, or exactly $14$ edges if it is $\mathcal{S}$.  Thus, assume that, $n_2 > 4$. We observe that all edges in $G_2$ that are incident to $u_j$ will cross edge $(u_{h'},v_t)$. Since $(u_{h'},v_t)$ already has three crossings, it follows that the degree of $u_j$ in $G_2$ is at most two. Symmetrically, the degree of $v_y$ in $G_2$ is at most two. Consider the graph $G_2'$ obtained from $G_2$ by removing $u_j$ and $v_y$; see \cref{fig:5planarUpperbound3}. Since $n_2 > 4$, $G_2'$ has  $n_2' \geq 3$ vertices. First assume that $G_2'$ is isomorphic to $\mathcal{S}$. Then, edge $(u_{j+1},v_{y+1})$ is planar, and $u_j$ and $v_y$ can only be incident to $(u_j,v_y)$, $(u_j,v_{y+1})$ and $(u_{j+1},v_y)$. Then, $G_2$ has $n_2=10$ vertices and $m_2=17 = \frac{9}{4}n_2 - \frac{11}{2}$ edges. 

Next, assume that $G_2'$ is not isomorphic to $\mathcal{S}$. We consider two cases. If $(u_j,v_y)$ is not in $G_2$ consider the graph $G_2^\ast$ obtained from $G_2$ by inserting edge $(u_j,v_y)$. Clearly, $G_2^\ast$ is $2$-layer $5$-planar and hence has at most $m_2^\ast \leq \frac{9}{4}n_2-\frac{9}{2}$ edges. Since $m_2 = m_2^\ast -  1$, it follows that $m_2 \leq \frac{9}{4}n_2-\frac{11}{2}$. So assume that  $(u_j,v_y)$ is part of $G_2$. Then, there are only three edges in $G_2$ that are not in $G_2'$. Assume that $G_2'$ has $m_2'\leq \frac{9}{4}n_2'-\frac{9}{2}$ edges. Then, $G_2$ has $m_2 = m_2' + 3 \leq  \frac{9}{4}(n_2-2)-\frac{9}{2}+3 =\frac{9}{4}n_2- \frac{12}{2}$ edges.

We conclude that $m_2 \leq \frac{9}{4}n_2-\frac{11}{2}$. Thus, $m \leq m_1 + m_2 + 9 = 1 + \frac{9}{4} n_2 -   \frac{11}{2} + 9 \leq \frac{9}{4} (n-4)-\frac{11}{2}+10 = \frac{9}{4}(n-4) + \frac{9}{2} = \frac{9}{4}n - 9 + \frac{9}{2} = \frac{9}{4}n -\frac{9}{2}$.
\end{proof}

\section{Omitted Proofs from \cref{sec:densityLargeK}}
\label{app:densityLargeK}

We first give an auxiliary lemma that will be used in the proof of \cref{thm:crossingLemma}.

\begin{restatable}{lemma}{crossingLemmaAuxiliaryLemma}
\label{lem:crossingLemmaAuxiliaryLemma}
Let $G$ be a simple $\mathcal{R}$-restricted graph with $n \geq 4$ vertices and $m$ edges.
Then, the following inequality holds for the crossing number $cr(G)$:
\begin{linenomath}
\begin{equation}
\label{eq:crossingLemmaAuxiliaryLemma}
cr(G) \geq tm - \alpha n + \beta.
\end{equation}
\end{linenomath}
\end{restatable}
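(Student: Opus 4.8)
The plan is to recover each $i$-planar density bound from a single greedy edge‑peeling of an optimal drawing. Fix a drawing $D$ of $G$ realizing $cr(G)$; since $G$ is $\mathcal{R}$-restricted, $D$ may be taken to respect $\mathcal{R}$ (e.g.\ to be a $2$-layer drawing when $\mathcal{R}$ is ``$2$-layer''), and then so does every subdrawing of $D$. Now repeatedly delete an edge that is crossed the maximum number of times in the current drawing, and let $d_j$ be the number of crossings on the $j$-th deleted edge at the moment it is removed, for $j=1,\dots,m$. Two elementary facts drive the argument: (i) $d_1\geq d_2\geq\cdots\geq d_m$, because deleting an edge cannot raise anybody else's crossing count, so the maximum over the surviving edges is non‑increasing; and (ii) $\sum_{j=1}^m d_j = cr(G)$, since each crossing of $D$ is charged exactly once, namely to whichever of its two edges is deleted first.

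Next I would feed in the hypothesis on $\mathcal{R}$-restricted $i$-planar graphs. For $i\in\{0,\dots,t-1\}$ let $R_i:=|\{\,j : d_j\geq i+1\,\}|$; by monotonicity these are precisely the indices $1,\dots,R_i$. After deleting exactly those $R_i$ edges, the next edge that would be removed is crossed at most $i$ times, hence \emph{every} surviving edge is crossed at most $i$ times in the inherited drawing. Thus the current graph is simple, is $\mathcal{R}$-restricted (both ``bipartite'' and ``$2$-layer'' are preserved under passing to a subgraph with its subdrawing), has the same $n\geq 4$ vertices, and is $i$-planar; by assumption it therefore has at most $\alpha_i n-\beta_i$ edges. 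Since it has $m-R_i$ edges we get $m-R_i\leq \alpha_i n-\beta_i$, i.e.\ $R_i\geq m-\alpha_i n+\beta_i$ (which also holds trivially when the right‑hand side is negative, as $R_i\geq 0$).

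Finally I would combine everything through the ``layer‑cake'' identity $d_j=|\{\,i\geq 0 : d_j\geq i+1\,\}|$:
\[
cr(G)=\sum_{j=1}^m d_j=\sum_{i\geq 0}\bigl|\{\,j : d_j\geq i+1\,\}\bigr|=\sum_{i\geq 0}R_i\;\geq\;\sum_{i=0}^{t-1}R_i\;\geq\;\sum_{i=0}^{t-1}\bigl(m-\alpha_i n+\beta_i\bigr)=tm-\alpha n+\beta,
\]
using $\alpha=\sum_{i=0}^{t-1}\alpha_i$ and $\beta=\sum_{i=0}^{t-1}\beta_i$, which is exactly \eqref{eq:crossingLemmaAuxiliaryLemma}. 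The only delicate point, and the main (mild) obstacle, is making sure the intermediate graphs are legitimate inputs to the assumed bounds: they must stay $\mathcal{R}$-restricted, simple, and $i$-planar, and have enough vertices for $m\le\alpha_i n-\beta_i$ to apply — this is where the hypothesis $n\geq4$ and the fact that edge deletion preserves both the vertex set and the restriction $\mathcal{R}$ are used. Everything else is the bookkeeping above.
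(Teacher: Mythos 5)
Your proof is correct and follows the same overall strategy as the paper's: for each $i<t$, use the assumed density bound for $\mathcal{R}$-restricted $i$-planar graphs to force at least $m-(\alpha_i n-\beta_i)$ edges to carry more than $i$ crossings, and sum over $i$. The one substantive difference is that you carry this out via a greedy edge-peeling with the charging rule ``each crossing is counted on whichever of its two edges is deleted first,'' which yields $\sum_j d_j = cr(G)$ exactly; the paper instead counts, in the fixed drawing, the edges having at least $i+1$ crossings and sums these counts directly. Read literally, that count satisfies $\sum_{i\ge 0} N_i = \sum_e c_e = 2\,cr(G)$, so the paper's final inequality as written would lose a factor of $2$ (e.g.\ for $t=1$ a single crossing produces two crossed edges); your peeling formulation is the standard way to make the iteration rigorous and is in fact needed to obtain the stated constant. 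You are also right to note, as the paper does not, that the intermediate subgraphs inherit the restriction $\mathcal{R}$ together with an $\mathcal{R}$-respecting $i$-planar subdrawing, which is what legitimizes applying the bounds $m\le\alpha_i n-\beta_i$ to them.
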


\begin{proof}
Clearly, Inequality~\eqref{eq:crossingLemmaAuxiliaryLemma} holds for $m \leq \alpha_0 n - \beta_0$.
Next, assume that $m > \alpha_0 n - \beta_0$. Then there exists at least $m - (\alpha_0 n - \beta_0)$
edges in $G$ that have at least one crossing.
If $m > \alpha_1 n - \beta_1$, there exists at least $m - (\alpha_1 n - \beta_1)$
edges in $G$ that have at least two crossings.
Iteratively we obtain that, if $m > \alpha_{i-1} n - \beta_{i-1}$, there exists at least
$m - (\alpha_i n - \beta_i)$ edges in $G$ that have at least $i$ crossings.
Therefore we obtain
\begin{linenomath}
\begin{equation*}
cr(G) \geq \sum_{i=0}^{t-1} [ m - (\alpha_i n - \beta_i) ] = tm - \alpha n + \beta 
\end{equation*}
\end{linenomath}
which concludes the proof.
\end{proof}

\crossingLemma*
\begin{proof}
Consider a drawing $\Gamma$ of $G$ with $cr(G)$ crossings and let $\pi=\frac{3\alpha n}{2t m} \leq 1$.
With probability $\pi$ choose every vertex of $G$ independently and let $G_\pi$ denote the subgraph of $G$
induced by the chosen vertices, and $\Gamma_\pi$ the subdrawing of $\Gamma$ representing $G_\pi$.
Consider random variables $N_\pi$, $M_\pi$, and $C_\pi$ denoting the number of vertices and edges in $G_\pi$
and the number of crossings in $\Gamma_\pi$, respectively.
By \cref{lem:crossingLemmaAuxiliaryLemma}, it holds that $C_\pi\geq t M_\pi	- \alpha N_\pi  + \beta$.
Taking expectations on this relationship, we obtain:
\begin{linenomath}
\begin{equation*}
\pi^4cr(G) \geq t \pi^2 m	- \alpha \pi n  \implies cr(G) \geq  \frac{tm}{\pi^2}-\frac{\alpha n}{\pi^3}
\end{equation*}
\end{linenomath}
We obtain Inequality~\eqref{eq:crossingLemma} by substituting $\pi=\frac{3\alpha n}{2t m}$
into the inequality above.
\end{proof}

\kPlanarUpperbound*

\begin{proof}
If $m \leq \frac{3\alpha}{2t}n$, the proof follows immediately.
Otherwise, we obtain from \cref{thm:crossingLemma} and from the assumption that $G$ is $k$-planar:
\begin{linenomath}
$$ \frac{4 t^3}{27\alpha^2} \frac{m^3}{n^2} \leq cr(G) \leq \frac{1}{2}mk. $$
\end{linenomath}
This implies:
\begin{linenomath}
$$ m \leq \frac{3\alpha}{2t} \sqrt{\frac{3}{2t}} \sqrt{k} n $$
\end{linenomath}
which completes the proof.
\end{proof}

\kPlanarLowerbound*

\begin{proof}
We choose $p=q$. Depending on $k$, we choose a parameter $\ell$ that we will calculate later. We connect vertex $u_i$ to the $\ell$ vertices $v_{i+1}\ldots,v_{i+\ell}$. Similarly, we connect vertex $v_i$ to vertices $u_{i+1}\ldots,u_{i+\ell}$. Orient the edges from lower to higher index. Then, each vertex (except for those with indices at least $p-\ell$) has $\ell$ outgoing edges. 
Moreover, edge $(u_i,v_{i+r})$ (for $1 \leq r \leq \ell$) is only crossed by 
\begin{itemize}[--]
  \item all $\ell$ outgoing edges from vertex $v_{i+j}$ for $0 \leq j \leq r-1$; see \cref{fig:generalLowerBound1},
  \item $\ell-j$ outgoing edges from vertex $v_{i-j}$ for $1 \leq j \leq \ell-1$ which connect $v_{i-j}$
	      to vertices $u_{i+1},\ldots,u_{i+\ell-j}$; see \cref{fig:generalLowerBound2},
  \item by $r-1-j$ outgoing edges from vertex $u_{i+j}$ for $1 \leq j \leq r-2$ which connect $u_{i+j}$
	      to vertices $v_{i+j+1},\ldots,v_{r-1}$; see \cref{fig:generalLowerBound3}, and,
  \item by $\ell-r-j-1$ outgoing edges from vertex $u_{i-j}$ for $1 \leq j \leq \ell-r-2$ which connect $u_{i-j}$
	      to vertices $v_{i+r+1},\ldots,v_{i+\ell-j}$; see \cref{fig:generalLowerBound4}.
\end{itemize}
In total, for the number of edges crossing $(u_i,v_{i+r})$ we have
\begin{linenomath}
\begin{align*}
 &\mathrel{\phantom{=}} r\ell + \sum \limits_{i=1}^{\ell-1}i + \sum \limits_{i=1}^{r-2}i + \sum \limits_{i=1}^{\ell-r-2}i \\
 &= \ell r  + \frac{(\ell-1)(\ell)}{2} + \frac{(r-2)(r-1)}{2} + \frac{(\ell-r-2)(\ell-r-1)}{2}\\
 &\le \ell r + \frac{\ell^2}{2} + \frac{r^2}{2} + \frac{(\ell-r)^2}{2} = \ell^2 + r^2.
\end{align*}
\end{linenomath}
The last term is maximal for $r=\ell$, yielding at most $2 \ell^2$ crossings on $(u_i,v_{i+\ell})$.
To ensure $k$-planarity we set $2\ell^2 \leq k$ and obtain $\ell \leq \sqrt{k/2}$,
which implies that every vertex (except for those with the $\ell = \mathcal{O}(f(k))$ largest indices)
has $\ell = \sqrt{k/2}$ outgoing edges. The statement follows.
\end{proof}
}{}
\end{document}